\renewcommand{\qed}{\hfill $\blacksquare$}
\numberwithin{equation}{section}
\newenvironment{myenv}[1][4mm]{\begin{adjustwidth}{#1}{}}{\end{adjustwidth}}
\newcommand\numberthis{\addtocounter{equation}{1}\tag{\theequation}}
\DeclareMathOperator{\var}{var}
\newcommand{\customlabel}[2]{\protected@write @auxout {}{\string \newlabel {#1}{{#2}{\thepage}{#2}{#1}{}}}\hypertarget{#1}{}}
\def\@fnsymbol#1{\ensuremath{\ifcase#1\or \dagger\or \ddagger\or
		\mathsection\or \mathparagraph\or \|\or **\or \dagger\dagger
		\or \ddagger\ddagger \else\@ctrerr\fi}}
\newcommand{\stargraph}[2]{\begin{tikzpicture}
		\node[circle,fill=black] at (360:0mm) (center) {};
		\foreach \n in {1,...,#1}{
			\node[circle,fill=black] at ({\n*360/#1}:#2cm) (n\n) {};
			\draw (center)--(n\n);
		}
\end{tikzpicture}}
\title{Star-specific Key-homomorphic PRFs\\ from Learning with Linear Regression\thanks{This is the preprint of a paper published in IEEE Access, vol. 11, 2023 \cite{Vipin[23]}.}}
\author{Vipin Singh Sehrawat\inst{1}\thanks{Work done while the author was affiliated to Seagate Technology, Fremont, CA, USA.} \and Foo Yee Yeo\inst{2} \and Dmitriy Vassilyev\inst{3}}
\institute{Circle Internet Financial \\ \email{\{\tt vipin.sehrawat.cs@gmail.com\}} \and 
	Seagate Technology, Singapore \and 
	Seagate Technology, Longmont, CO, USA}
\titlerunning{Star-specific Key-homomorphic PRFs from Learning with Linear Regression}
\authorrunning{V. S. Sehrawat et al.}		
\begin{document}
	\maketitle			
	\begin{abstract}\normalsize 
		We introduce a novel method to derandomize the learning with errors (LWE) problem by generating deterministic yet sufficiently independent LWE instances that are constructed by using linear regression models, which are generated via (wireless) communication errors. We also introduce star-specific key-homomorphic (SSKH) pseudorandom functions (PRFs), which are defined by the respective sets of parties that construct them. We use our derandomized variant of LWE to construct a SSKH PRF family. The sets of parties constructing SSKH PRFs are arranged as star graphs with possibly shared vertices, i.e., the pairs of sets may have non-empty intersections. We reduce the security of our SSKH PRF family to the hardness of LWE. To establish the maximum number of SSKH PRFs that can be constructed --- by a set of parties --- in the presence of passive/active and external/internal adversaries, we prove several bounds on the size of maximally cover-free at most $t$-intersecting $k$-uniform family of sets $\mathcal{H}$, where the three properties are defined as: (i) $k$-uniform: $\forall A \in \mathcal{H}: |A| = k$, (ii) at most $t$-intersecting: $\forall A, B \in \mathcal{H}, B \neq A: |A \cap B| \leq t$, (iii) maximally cover-free: $\forall A \in \mathcal{H}: A \not\subseteq \bigcup\limits_{\substack{B \in \mathcal{H} \\ B \neq A}} B$. For the same purpose, we define and compute the mutual information between different linear regression hypotheses that are generated from overlapping training datasets. 
		\keywords{Extremal Set Theory \and Key-homomorphic PRFs \and Learning with Errors \and Learning with Linear Regression \and Mutual Information \and Physical Layer Communications.}
\end{abstract}

\tableofcontents
\newpage

\section{Introduction}
\subsubsection{Derandomized LWE.} The learning with errors (LWE) problem~\cite{Reg[05]} is at the basis of multiple cryptographic constructions~\cite{Peikert[16],Hamid[19]}. Informally, LWE requires solving a system of `approximate' linear modular equations. Given positive integers $w$ and $q \geq 2$, an LWE sample is defined as: $(\textbf{a}, b = \langle \textbf{a}, \textbf{s} \rangle + e \bmod q)$, where $\textbf{s} \in \mathbb{Z}^w_q$ and $\textbf{a} \xleftarrow{\; \$ \;} \mathbb{Z}^w_q$. The error term $e$ is sampled randomly, typically from a normal distribution with standard deviation $\alpha q$ where $\alpha = 1/\poly(w)$, followed by which it is rounded to the nearest integer and reduced modulo $q$. Banerjee et al.~\cite{Ban[12]} introduced a derandomized variant of LWE, called learning with rounding (LWR), wherein instead of adding a random small error, a \textit{deterministically} rounded version of the sample is announced. Specifically, for some positive integer $p < q$, the elements of $\mathbb{Z}_q$ are divided into $p$ contiguous intervals containing (roughly) $q/p$ elements each. The rounding function, defined as: $\lfloor \cdot \rceil_p: \mathbb{Z}_q \rightarrow \mathbb{Z}_p$, maps the given input $x \in \mathbb{Z}_q$ into the index of the interval that $x$ belongs to. An LWR instance is generated as: $(\textbf{a}, \lfloor \langle \textbf{a}, \textbf{s} \rangle \rceil_p)$ for vectors $\textbf{s} \in \mathbb{Z}_q^w$ and $\textbf{a} \xleftarrow{\; \$ \;} \mathbb{Z}_q^w$. For certain range of parameters, Banerjee et al. proved the hardness of LWR under the LWE assumption. In this work, we propose a new derandomized variant of LWE, called learning with linear regression (LWLR). We reduce the hardness of LWLR to that of LWE for certain choices of parameters.

\subsubsection{Physical Layer Communications and Shared Secret Extraction.}\label{Wave} In the OSI (Open Systems Interconnection) model\footnote{see \cite{DavAnd[10]} Section 1.4.1 for an introduction to the OSI model}, physical layer consists of the fundamental hardware transmission technologies. It provides electrical, mechanical, and procedural interface to the transmission medium for transmitting raw bits over a communication channel. Physical layer communication between parties has certain inherent characteristics that make it an attractive source of renewable, shared secrecy. Multiple methods to extract secret bits from channel measurements have been explored (e.g., \cite{Prem[13],Xiao[08],Zhang[08],Zeng[15],Kepe[15],Jiang[13],Ye[10],Ye[06],Ye[07],JanaSri[09],JieYan[12],PremSne[14],XiQi[16],HaiKui[11],XuKui[12],SunLi[18],DingLi[19],ZhaGre[20],XuHu[19],MarHan[18],JhaWen[18],MarFaf[17]}). See~\cite{Sheh[15],Poor[17]} for an overview of some of the notable results in the area. \textit{Channel reciprocity} simply means that the signal distortion (attenuation, delay, phase shift, and fading) is identical in both directions of a link. Hence, it follows from channel reciprocity that the two receive-nodes of a channel observe identical channel characteristic and state information. Secrecy of this information follows directly from the \textit{spatial decorrelation} property, which states that in rich scattering environments, the receivers located at least half a wavelength away experience uncorrelated channels. Therefore, an eavesdropper separated by at least half a wavelength from the two communicating nodes experiences an entirely different channel, and hence cannot make accurate measurements. In typical cellular or wireles LAN frequencies, this distance --- of half a wavelength --- is less than half a foot, which is an acceptable assumption for separation from an eavesdropping adversary \cite{Ye[06]}. Both channel reciprocity and spatial decorrelation have been examined extensively and demonstrated to hold in practice \cite{MarPao[14],ZenZimm[15],ChristHen[16],MattAgg[16],WoodsTru[16],MadiDong[09],SanCla[10]}. For further details on these two properties of communication channels, we refer the interested reader to \cite{ZanMar[12]}. In this work, we use these two properties to securely generate sufficiently independent yet deterministic errors to derandomize LWE. 

\subsubsection{Cryptography from Physical/Hardware Properties.} Apart from complexity/information-theoretic assumptions, security guarantees of cryptographic protocols can also be based on physical/hardware principles/properties. For instance, the physical principles of non-cloneability of quantum states \cite{Park[70],WootZur[82]} and monogamy of entanglement \cite{TerB[04]} are at the heart of quantum cryptography \cite{Rico[21],GisiRib[02]} --- providing an ensemble of (quantum) cryptographic protocols, including quantum key distribution \cite{BenBra[14],ArEk[91],BesSal[92]}, quantum random number generator \cite{MaYu[16]}, closed group quantum digital signatures \cite{ClaDun[12]}, long-term secure data storage \cite{BucDem[17]} and quantum multiparty computation \cite{NiqRuh[10]}. In classical, i.e., non-quantum, settings, physical/hardware principles/properties have been used to circumvent impossibility results, and efficiency and security bounds (e.g., \cite{NayaLing[17],YuaYin[19],EmuOmo[20],CaiPen[21],GupMoo[16],BrandRese[21],Vaswa[18],MaanQuta[20],RenLiZu[21],HoaYav[20]}). Furthermore, protocols based on physical properties or assumptions may offer qualitatively stronger security guarantees than the ones based on purely complexity-theoretic arguments/assumptions \cite{Koch[19]}. The subclass of such protocols that is related to a portion of our work concerns the so-called physically uncloneable functions (PUFs) which are cryptographic functions, defined over stateless hardware modules that implement/realize a function family with some threshold min-entropy output \cite{PappNei[02]} (see \cite{Doosti[22]} for the quantum analogue, called quantumPUF). Contrary to the standard digital systems, the output of a PUF depends on the unavoidably and sometimes purposefully included nanoscale structural disorders in the hardware which lead to a response behavior that cannot be cloned or reproduced exactly, not even by the hardware manufacturer. To capture their complex and disordered structure, formal definitions for PUFs often include requirements for one-wayness and unforgeability of output (typically against a probabilistic polynomial-time (PPT) adversary) \cite{CanFis[01],EyalYeh[03],PappNei[02],Dai[16],MarHei[11],RafaSca[13],Soled[20],StefVla[12],Schro[22]} --- which, in addition to deterministic behavior, are also the requirements for pseudorandom functions (PRFs). 

In this work, our protocol relies on the inherent (random) channel errors occurring in physical layer communications over Gaussian channels with nonzero standard deviation. Known information theoretic arguments establish that channel communications always have an inherent random error component. Using mathematical proofs/arguments and statistical randomness tests such as the ones provided by the NIST \cite{LawAndNIST[10]} and Dieharder \cite{BrownRob[23]} test suites, channel randomness has been established with repsect to various channel characteristics, including received signal strength information \cite{JanaSri[09],JieYan[12],PremSne[14]}, channel state information \cite{Jiang[13],XiQi[16]} and phase shifts \cite{HaiKui[11],XuKui[12]}.  

\subsubsection{Determinism from Probabilistic Events.} Algorithmic information theory \cite{CalTia[02],Chai[77]} provides a fundamental measure of randomness of (finite) strings and (infinite) sequences in terms of their Kolmogorov complexity \cite{Kolmo[68],LiPa[08]}, leading to the notions of algorithmic \cite{CalTia[02],Chai[77]} and $c$-Kolmogorov randomness \cite{CalTia[02]}. Such formal notions of randomness have been used to establish that some (partially) deterministic procedures and events can lead to (pseudo/perfectly)random outcomes (e.g., see \cite{JakMih[19],Tao[08],GallMas[13],Grang[18],BeckRen[12]}). On the other hand, particle physics establishes that even-even nuclei demonstrate high degree of order in result of completely random interactions \cite{BijFra[00],BijAle[01],BertSon[98]}. Outside of particle physics, approximating integer programs is an example problem for which probabilistic constructions lead to deterministic outcomes \cite{Prabh[88]}. Our goal is similar in this work: we use probabilistic errors occurring in channel communications to generate a static and deterministic model $\mathcal{M}$, which can be used as a black box to generate deterministic errors (from target distributions) that are sufficiently independent --- to a PPT adversary --- due to the probabilistic nature of the channel errors. 

\subsubsection{Rounded Gaussians.} Using discrete Gaussian elements to hide secrets is a common approach in lattice-based cryptography. The majority of digital methods for generating Gaussian random variables are based on transformations of uniform random variables~\cite{Knuth[97]}. Popular methods include Ziggurat~\cite{Zigg[00]}, inversion~\cite{Invert[03]}, Wallace~\cite{Wallace[96]}, and Box-Muller~\cite{Box[58]}. Sampling discrete Gaussians can also be done by sampling from some continuous Gaussian distribution, followed by rounding the coordinates to nearby integers~\cite{Pie[10],Box[58],Hul[17]}. Using such rounded Gaussians can lead to better efficiency and, in some cases, better security guarantees for lattice-based cryptographic protocols~\cite{Hul[17]}. In our work, we use rounded Gaussian errors that are derived from deterministic yet sufficiently independent samples taken from continuous Gaussians --- which are themselves generated via our model $\mathcal{M}$.

\subsubsection{Key-homomorphic PRFs.} In a PRF family~\cite{Gold[86]}, each function is specified by a key such that it can be evaluated deterministically given the key but appears to be a random function without the key. For a PRF $F_k$, the index $k$ is called its key or seed. A PRF family $F$ is called key-homomorphic if the set of keys has a group structure and there is an efficient algorithm that, given $F_{k_1}(x)$ and $F_{k_2}(x)$, outputs $F_{k_1 \oplus k_2}(x)$, where $\oplus$ is the group operation~\cite{Naor[99]}. Multiple key-homomorphic PRF families have been constructed via varying approaches~\cite{Naor[99],Boneh[13],Ban[14],Parra[16],SamK[20],Navid[20]}. In this work, we introduce and construct an extended variant of key-homomorphic PRFs, called star-specific key-homomorphic (SSKH) PRFs, which are defined for settings wherein parties constructing the PRFs are part of an interconnection network that can be (re)arranged as a graph comprised of only (undirected) star graphs with restricted vertex intersections. An undirected star graph $S_k$ can be defined as a tree with one internal node and $k$ leaves. \Cref{FigStar} depicts an example star graph, $S_7$, with seven leaves.

\begin{figure}[h!]
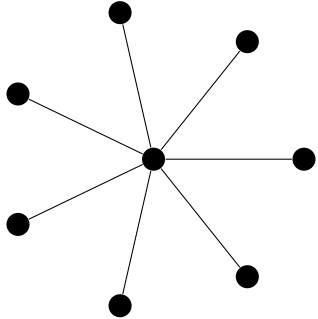

	\centering
	\stargraph{7}{2}
	\caption{An Example Star Graph, $S_7$}\label{FigStar}
\end{figure}

Henceforth, we use the terms star and star graph interchangeably.

\subsubsection{Cover-free Families with Restricted Intersections.} Cover-free families were first defined by Kautz and Singleton in 1964 as superimposed binary codes \cite{Kautz[64]}. They were motivated by investigating binary codes wherein disjunction of at most $r ~(\geq 2)$ codewords is distinct. In early 1980s, cover-free families were studied in the context of group testing \cite{BushFed[84]} and information theory \cite{Ryk[82]}. Erd\"{o}s et al. called the corresponding set systems $r$-cover-free and studied their cardinality for $r=2$~\cite{PaulFrankl[82]} and $r < n$~\cite{PaulFrankl[85]}. 

\begin{definition}[$r$-cover-free Families~\cite{PaulFrankl[82],PaulFrankl[85]}]
	\emph{We say that a family of sets $\mathcal{H} = \{H_i\}_{i=1}^\alpha$ is $r$-cover-free for some integer $r < \alpha$ if there exists no $H_i \in \mathcal{H}$ such that:
		\[H_i \subseteq \bigcup_{H_j \in \mathcal{H}^{(r)}} H_j, \]
		where $\mathcal{H}^{(r)} \subset \mathcal{H}$ is some subset of $\mathcal{H}$ with cardinality $r$. }	
\end{definition}

In addition to earlier applications to group testing \cite{BushFed[84]} and information theory \cite{Ryk[82]}, cover-free families have found many applications in cryptography and communications, including blacklisting~\cite{RaviSri[99]}, broadcast encryption~\cite{CanGara[99],Garay[00],DougR[97],DougRTran[98]}, anti-jamming~\cite{YvoSafavi[99]}, source authentication in networks~\cite{Safavi[99]}, group key predistribution~\cite{ChrisFred[88],Dyer[95],DougRTran[98],DougRTranWei[00]}, compression schemes \cite{Thasis[19]}, fault-tolerant signatures \cite{Gunnar[16],Bardini[21],IdaLuc[18]}, frameproof/traceability codes~\cite{Staddon[01],WeiDoug[98]}, traitor tracing \cite{DonTon[06]}, modification localization on signed documents and redactable signatures \cite{Bardini[15]}, broadcast authentication \cite{Nain[98]}, batch signature verification \cite{Zave[09]}, and one-time and multiple-times digital signature schemes \cite{Josef[03],GM[11]}.

In this work, we initiate the study of new variants of $r$-cover-free families. The motivation behind exploring this direction is to compute the maximum number of SSKH PRFs that can be constructed by overlapping sets of parties. We prove various bounds on the novel variants of $r$-cover-free families and later use them to establish the maximum number of SSKH PRFs that can be constructed by overlapping sets of parties in the presence of active/passive and internal/external adversaries. 

\subsection{Our Contributions}
\subsubsection{Cryptographic Contributions.}
We know that physical layer communications over Gaussian channels introduce (pseudo)random Gaussian errors. Therefore, it is logical to consider whether we can use some processed form of those Gaussian errors to generate deterministic yet sufficiently independent errors to derandomize LWE without weakening its hardness guarantees. In this work, we prove that channel errors errors can be used to derandomize LWE. Our algorithm to realize this uses channel communications over Gaussian channels as the training data for linear regression analysis, whose (optimal) hypothesis is used to generate a static model $\mathcal{M}$ that can be used as a black box to compute deterministic yet sufficiently independent errors belonging to the desired Gaussian distributions. We round the resulting error to the nearest integer, hence moving to a rounded Gaussian distribution, which is reduced modulo the LWE modulus to generate the final error. It is worth mentioning that many hardness proofs for LWE, including Regev's initial proof~\cite{Reg[05]}, used an analogous approach --- but without the linear regression component --- to sample ``LWE errors''~\cite{Reg[05],Gold[10],Duc[15],Hul[17]}. We call our derandomized variant of LWE: learning with linear regression (LWLR). We prove that for certain parameter choices, LWLR is as hard as LWE. 

We introduce a new class of PRFs, called star-specific key-homomorphic (SSKH) PRFs, which are key-homomorphic PRFs that are defined by the respective sets of parties that construct them. In our construction, the sets of parties are arranged as star graphs wherein the leaves represent the parties and edges denote communication channels between them. Each SSKH PRF, $F^{(\partial_i)}_k$, is unique to the set/star of parties, $\partial_i$, that constructs it, i.e., for all inputs $x$, it holds that: $\forall i \neq j: F^{(\partial_i)}_k(x) \neq F^{(\partial_j)}_k(x)$. As an example application of LWLR, we replace LWR with LWLR in the LWR-based key-homomorphic PRF construction from \cite{Ban[14]} to construct the first SSKH PRF family. Due to their conflicting goals, statistical inference and cryptography are almost dual of each other. Given some data, statistical inference aims to identify the distribution that they belong to, whereas in cryptography the central aim is to design a distribution that is hard to predict. Interestingly, our work uses statistical inference to construct novel a cryptographic tool and protocol. In addition to all known applications of key-homomorphic PRFs --- as given in \cite{Boneh[13],Miranda[21]} --- our SSKH PRF family also allows collaborating parties to securely generate pseudorandom nonce/seed without relying on any pre-provisioned secrets; hence supporting applications such as interactive key generation over unauthenticated channels. 

\subsubsection{Mutual Information between Linear Regression Models.}
To quantify the relation between different SSKH PRFs, we examine the mutual information between different linear regression hypotheses that are generated via (training) datasets with overlapping data points. A higher mutual information translates into a stronger relation --- and smaller conditional entropy --- between the corresponding SSKH PRFs, that are generated via those linear regression hypotheses. The following text summarizes the main result that we prove in this context.

Suppose, for $i=1,2,\ldots,\ell$, we have:
$$y_i\sim \mathcal{N}(\alpha+\beta x_i,\sigma^2)\quad\text{and}\quad z_i\sim \mathcal{N}(\alpha+\beta w_i,\sigma^2),$$
with $x_i=w_i$ for $i=1,\ldots,a$. Let $h_1(x)=\hat{\alpha}_1 x+\hat{\beta}_1$ and $h_2(w)=\hat{\alpha}_2 w+\hat{\beta}_2$ be the linear regression hypotheses obtained from the samples $(x_i,y_i)$ and $(w_i,z_i)$, respectively.

\begin{theorem}\label{MutualThm}
	The mutual information between $(\hat{\alpha_1},\hat{\beta_1})$ and $(\hat{\alpha_2},\hat{\beta_2})$ is: 
	\begin{align*}
		&\ -\frac{1}{2}\log\left(1-\frac{\left(\ell C_2-2C_1X_1+aX_2\right)\left(\ell C_2-2C_1W_1+aW_2\right)}{(\ell X_2-X_1^2)(\ell W_2-W_1^2)}\right. \\
		&\qquad\left.+\frac{\left((a-1)C_2-C_3\right)\left((a-1)C_2-C_3+\ell(X_2+W_2)-2X_1W_1\right)}{(\ell X_2-X_1^2)(\ell W_2-W_1^2)}\right),
	\end{align*}
	where $X_1=\sum_{i=1}^{\ell} x_i$, $X_2=\sum_{i=1}^{\ell} x_i^2$, $W_1=\sum_{i=1}^{\ell} w_i$, $W_2=\sum_{i=1}^{\ell} w_i^2$, $C_1=\sum_{i=1}^a x_i=\sum_{i=1}^a w_i$, $C_2=\sum_{i=1}^a x_i^2=\sum_{i=1}^a w_i^2$ and $C_3=\sum_{i=1}^{\ell}\sum_{j=1,j\neq i}^{\ell}x_ix_j$. 
\end{theorem}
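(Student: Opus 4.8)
The plan is to exploit the fact that ordinary least squares produces estimators that are \emph{linear} in the response vectors, so that $(\hat{\alpha}_1,\hat{\beta}_1,\hat{\alpha}_2,\hat{\beta}_2)$ is jointly Gaussian and its mutual information is determined entirely by second moments (the means drop out). Writing the two design matrices $X,W\in\mathbb{R}^{\ell\times 2}$ with rows $(x_i,1)$ and $(w_i,1)$, the estimators are $\hat{\theta}_1=(\hat{\alpha}_1,\hat{\beta}_1)^{\top}=(X^{\top}X)^{-1}X^{\top}\mathbf{y}$ and $\hat{\theta}_2=(\hat{\alpha}_2,\hat{\beta}_2)^{\top}=(W^{\top}W)^{-1}W^{\top}\mathbf{z}$. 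Because the datasets overlap in their first $a$ data points, the shared channel errors give $\operatorname{Cov}(y_i,z_j)=\sigma^2$ when $i=j\le a$ and $0$ otherwise; equivalently the noise cross-covariance is $\sigma^2 E$ with $E=\operatorname{diag}(\mathbf{1}_a,\mathbf{0})$. So I only need the three covariance blocks.

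First I would record the marginal blocks $\Sigma_U=\operatorname{Cov}(\hat{\theta}_1)=\sigma^2(X^{\top}X)^{-1}$ and $\Sigma_V=\sigma^2(W^{\top}W)^{-1}$, whose determinants are $\sigma^4/(\ell X_2-X_1^2)$ and $\sigma^4/(\ell W_2-W_1^2)$; I will abbreviate $D_x=\ell X_2-X_1^2$ and $D_w=\ell W_2-W_1^2$. The crucial block is the cross-covariance $\Sigma_{UV}=\sigma^2(X^{\top}X)^{-1}(X^{\top}EW)(W^{\top}W)^{-1}$, and the key simplification is that $X^{\top}EW=\sum_{i=1}^{a}(x_i,1)^{\top}(w_i,1)=\sum_{i=1}^{a}(x_i,1)^{\top}(x_i,1)=G$, where $G=\left(\begin{smallmatrix}C_2 & C_1\\ C_1 & a\end{smallmatrix}\right)$, using $x_i=w_i$ for $i\le a$. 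In particular $\det G=aC_2-C_1^2$, which is the quantity $(a-1)C_2-C_3$ appearing in the statement.

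Next I would invoke the Gaussian identity $I(\hat{\theta}_1;\hat{\theta}_2)=\tfrac12\log\frac{\det\Sigma_U\,\det\Sigma_V}{\det\Sigma}=-\tfrac12\log\det(I-M)$, where $\Sigma$ is the full $4\times4$ covariance and $M=\Sigma_U^{-1}\Sigma_{UV}\Sigma_V^{-1}\Sigma_{VU}$. Substituting the blocks, the factors of $\sigma^2$ and one pair of Gram matrices cancel telescopically, leaving the clean form $M=G(W^{\top}W)^{-1}G(X^{\top}X)^{-1}$. Because $M$ is $2\times 2$, $\det(I-M)=1-\operatorname{tr}(M)+\det(M)$, so the whole theorem reduces to two scalars: $\det(M)=(\det G)^2/(D_xD_w)=(aC_2-C_1^2)^2/(D_xD_w)$ is immediate from multiplicativity of the determinant, while $\operatorname{tr}(M)$ requires expanding the product $G(W^{\top}W)^{-1}G(X^{\top}X)^{-1}$ entrywise.

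The main obstacle is precisely this trace computation: after expanding the four $2\times2$ factors, one must show that $D_xD_w\operatorname{tr}(M)$ collapses to $(\ell C_2-2C_1X_1+aX_2)(\ell C_2-2C_1W_1+aW_2)-(aC_2-C_1^2)(\ell(X_2+W_2)-2X_1W_1)$. The symmetric factoring of the first product and the emergence of the correction term carrying the factor $(aC_2-C_1^2)$ are the only nonroutine steps; grouping the expanded monomials by which of $X_1,X_2,W_1,W_2$ they involve makes the factorization visible. Finally, assembling $1-\operatorname{tr}(M)+\det(M)$ combines the $(aC_2-C_1^2)$ correction with $\det(M)$ into the factor $(aC_2-C_1^2)\big((aC_2-C_1^2)+\ell(X_2+W_2)-2X_1W_1\big)$; substituting $aC_2-C_1^2=(a-1)C_2-C_3$ then yields exactly the claimed argument of the logarithm, and prepending $-\tfrac12\log$ completes the proof.
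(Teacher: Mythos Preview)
Your approach is essentially the same as the paper's: both identify $(\hat{\alpha}_1,\hat{\beta}_1,\hat{\alpha}_2,\hat{\beta}_2)$ as jointly Gaussian, compute the block covariance (the cross-block arising from the shared first $a$ noise terms, which the paper encodes with its $[\,\cdot\,]_a$ truncation and you encode with the projector $E$, both yielding $X^{\top}EW=G$), and reduce the mutual information to a Schur-complement determinant ratio. The only difference is organizational: the paper writes the ratio as $\det(\mathbf{A}-\mathbf{B}\mathbf{D}^{-1}\mathbf{C})\det(\mathbf{D})$ and reports the final expression ``after a lengthy computation'', whereas your $\det(I-M)=1-\operatorname{tr}(M)+\det(M)$ separates the easy multiplicative piece $\det(M)=(\det G)^2/(D_xD_w)$ from the trace and makes the $2\times 2$ algebra a bit more transparent.
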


\subsubsection{Bounds on $t$-intersection Maximally Cover Free Families.}
We say that a set system $\mathcal{H}$ is (i) $k$-uniform if: $\forall A \in \mathcal{H}: |A| = k$, (ii) at most $t$-intersecting if: $\forall A, B \in \mathcal{H}, B \neq A: |A \cap B| \leq t$. 
\begin{definition}[Maximally Cover-free Families]
	\emph{A family of sets $\mathcal{H}$ is \textit{maximally} cover-free if it holds that:
		\[\forall A \in \mathcal{H}: A \not\subseteq \bigcup\limits_{\substack{B \in \mathcal{H} \\ B \neq A}} B.\]}
\end{definition}

Since we use physical layer communications to generate derandomized LWE instances, a large enough overlap among different sets of parties/devices can lead to reduced collective and conditional entropy for the SSKH PRFs constructed by those sets. It follows trivially that if the sets of parties belong to a maximally cover-free family, then no SSKH PRF can have zero conditional entropy since each set of parties must have at least one member that is exclusive to it. We know from \Cref{MutualThm} that we can compute the mutual information between different linear regression hypotheses from the overlap in their respective training data. Since the training dataset for a set of parties performing linear regression analysis is simply the collection of their mutual communications, it follows that the mutual information between any two SSKH PRFs increases with an increase in the overlap between the sets of parties that construct them. Hence, given a maximum mutual information threshold, \Cref{MutualThm} can be used to compute the maximum acceptable overlap between different sets of parties. To establish the maximum number of SSKH PRFs that can be constructed by such overlapping sets, we derive multiple bounds for the following two types of set-systems (denoted by $\mathcal{H})$: 
\begin{itemize}
	\item $\mathcal{H}$ is at most $t$-intersecting and $k$-uniform,
	\item $\mathcal{H}$ is maximally cover-free, at most $t$-intersecting and $k$-uniform.
\end{itemize}
The following theorem captures our central results on these set-systems:

\begin{theorem}\label{MainThm1}
	Let $k,t\in\mathbb{Z}^+$, and $C<1$ be any positive real number.
	\begin{enumerate}[label = {(\roman*)}]
		\item Suppose $t<k-1$. Then, for all sufficiently large $N$, the maximum size $\nu(N,k,t)$ of a maximally cover-free, at most $t$-intersecting and $k$-uniform family $\mathcal{H}\subseteq 2^{[N]}$ satisfies
		$$CN\leq\nu(N,k,t)<N.$$
		\item Suppose $t<k$. Then, for all sufficiently large $n$, the maximum size $\varpi(n,k,t)$ of an at most $t$-intersecting and $k$-uniform family $\mathcal{H}\subseteq 2^{[n]}$ satisfies
		$$\displayindent0pt
		\frac{Cn^{t+1}}{k(k-1)\cdots(k-t)}\leq\varpi(n,k,t)<\frac{n^{t+1}}{k(k-1)\cdots(k-t)}.$$
		In particular: $$\displayindent0pt
		\nu(N,k,t)\sim N \text{ and } \varpi(n,k,t)\sim\frac{n^{t+1}}{k(k-1)\cdots(k-t)}.$$
	\end{enumerate}
\end{theorem}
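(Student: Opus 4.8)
The plan is to prove both parts by exhibiting explicit near-optimal constructions for the lower bounds and straightforward counting arguments for the upper bounds, then combining them to obtain the asymptotics. For part (ii), the upper bound is the cleaner direction: I would use a double-counting argument on $(t+1)$-element subsets. Each $A \in \mathcal{H}$ is $k$-uniform and so contains exactly $\binom{k}{t+1}$ subsets of size $t+1$, while the at most $t$-intersecting property forces any two distinct members to share at most $t$ points, hence no $(t+1)$-subset of $[n]$ can lie in two different members of $\mathcal{H}$. Therefore $|\mathcal{H}| \binom{k}{t+1} \leq \binom{n}{t+1}$, which after rearranging gives the stated upper bound $\varpi(n,k,t) < n^{t+1}/(k(k-1)\cdots(k-t))$ once one checks the strict inequality survives replacing the falling factorials by pure powers for large $n$.

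For the lower bound in part (ii), I would invoke a probabilistic or packing construction: the quantity $n^{t+1}/(k(k-1)\cdots(k-t))$ is, up to the constant $C$, exactly the packing number for $k$-sets whose pairwise intersections are bounded by $t$. The natural route is a Rödl-nibble / random greedy argument (or an appeal to known results on partial Steiner systems and packings, e.g.\ the asymptotic existence of near-optimal $(n,k,t+1)$-packings), which guarantees a family of size $(1-o(1))\binom{n}{t+1}/\binom{k}{t+1}$ in which every $(t+1)$-subset is covered at most once; such a family is automatically at most $t$-intersecting and $k$-uniform. For any fixed $C<1$ this exceeds $C n^{t+1}/(k(k-1)\cdots(k-t))$ for all large $n$, and the matching upper and lower bounds then give $\varpi(n,k,t)\sim n^{t+1}/(k(k-1)\cdots(k-t))$.

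For part (i) the upper bound $\nu(N,k,t)<N$ should follow from the maximally cover-free requirement: each $A\in\mathcal{H}$ must contain a private element not covered by the union of the others, and these private elements are distinct across members, so $|\mathcal{H}|\leq N$; the strict inequality comes from the fact that (for $t<k-1$) the private elements cannot exhaust all of $[N]$ while still accommodating the shared structure of a valid family. The lower bound $CN\leq\nu(N,k,t)$ is the more delicate construction: I would build a family on ground set $[N]$ by taking $N' \approx CN$ pairwise ``almost disjoint'' $k$-sets, each endowed with a private element, arranged so that overlaps respect the at most $t$-intersecting bound while guaranteeing maximal cover-freeness. The condition $t<k-1$ leaves at least two coordinates of slack per set, which is what makes it possible to simultaneously reserve a private element \emph{and} satisfy the intersection constraint.

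The hard part will be the lower-bound construction in part (i), namely simultaneously reconciling three competing demands --- $k$-uniformity, the at most $t$-intersecting cap, and maximal cover-freeness --- on a ground set of size only linear in $|\mathcal{H}|$, since maximal cover-freeness forces private elements (pushing $N$ up) while the intersection bound limits how much sets may share (again inflating $N$). I expect the proof to hinge on a careful explicit design, perhaps layering a private coordinate onto each set and filling the remaining $k-1$ coordinates from a shared pool using a partial-packing or algebraic (e.g.\ Reed--Solomon-type) arrangement that keeps pairwise overlaps at most $t$, then verifying that the constant $C<1$ can be matched for every fixed $k,t$ once $N$ is large. The asymptotic $\nu(N,k,t)\sim N$ then drops out by sandwiching between $CN$ (for $C$ arbitrarily close to $1$) and $N$.
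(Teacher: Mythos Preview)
Your proposal is correct and follows the same overall architecture as the paper, with one notable difference in the lower bound for part~(ii). The paper appeals to Keevash's theorem on the existence of Steiner systems $S(t+1,k,n')$ for admissible $n'$ sufficiently large: one picks the largest admissible $n'\leq n$ (which lies within a constant depending only on $k,t$ of $n$), and the resulting Steiner system is an at most $t$-intersecting, $k$-uniform family of size $\binom{n'}{t+1}/\binom{k}{t+1}$. Your route via the R\"odl nibble or near-optimal $(n,k,t+1)$-packings is equally valid and arguably more natural here, since R\"odl-type results predate Keevash by decades and already deliver the $(1-o(1))$ factor needed for an arbitrary $C<1$; exact designs are overkill for a purely asymptotic statement. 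The upper bound via double counting of $(t+1)$-subsets is identical to the paper's.

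For part~(i), your instinct is right and in fact matches the paper more closely than you seem to realize. The paper formalizes your ``private coordinate plus shared pool'' idea as a bijection (up to permutation equivalence) between maximally cover-free, at most $t$-intersecting, $k$-uniform families of size $m$ on $[N]$ and at most $t$-intersecting, $(k-1)$-uniform families of size $m$ on $[N-m]$: strip the private element from each set to go one way, append a fresh private element to go back. This reduces part~(i) directly to part~(ii) with $k$ replaced by $k-1$ (whence the hypothesis $t<k-1$), so the construction you flag as ``the hard part'' is actually a short corollary of the packing result you already invoke for~(ii); you need not reach for a separate Reed--Solomon-type design. Your upper bound $\nu(N,k,t)<N$ via distinct private elements is likewise the paper's argument.
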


We also provide an explicit construction for at most $t$-intersecting and $k$-uniform set systems.

\subsubsection{Maximum Number of SSKH PRFs.}
We use the results from \Cref{MainThm1,MutualThm} to derive the maximum number, $\zeta$, of SSKH PRFs that can be constructed securely against various adversaries (modeled as PPT Turing machines). Specifically, we prove the following:
\begin{itemize}
	\item For an external/eavesdropping adversary with oracle access to the SSKH PRF family, we get:
	\[\zeta \sim \dfrac{n^k}{k!}.\]
	\item For non-colluding semi-honest parties, we get:
	\[\zeta \geq Cn,\]
	where $C<1$ is a positive real number.
\end{itemize}
We also establish the ineffectiveness of the man-in-the-middle attack against our SSKH PRF construction.
 
\subsection{Organization}
The rest of the paper is organized as follows: Section~\ref{Sec2} recalls the concepts and constructs that are relevant to our solutions and constructions. Section~\ref{Sec3} reviews the related work. Section~\ref{Sec4} gives a formal definition of SSKH PRFs. We prove various bounds on (maximally cover-free) at most $t$-intersecting $k$-uniform families in \Cref{Extremal}. In Section~\ref{Sec6}, we present our protocol for generating the desired Gaussian errors from physical layer communications. The section also discusses the implementation, simulation, test results, error analysis, and complexity for our protocol. In \Cref{Mutual}, we analyze the mutual information between different linear regression hypotheses that are generated from overlapping training datasets. In \Cref{LWLRsec}, we define LWLR and generate LWLR instances. In the same section, we reduce the hardness of LWLR to that of LWE. In Section~\ref{Sec7}, we use LWLR to adapt the key-homomorphic PRF construction from~\cite{Ban[14]} to construct the first SSKH PRF family, and prove its security under the hardness of LWLR (and therefore that of LWE). In the same section, we use our results from \Cref{Extremal,Mutual} to establish the maximum number of SSKH PRFs that can be constructed by a given set of parties in the presence of active/passive and external/internal adversaries. Section~\ref{Sec8} gives the conclusion.

\section{Preliminaries}\label{Sec2}
For a positive integer $n$, let: $[n] = \{1, \dots, n\}.$ As mentioned earlier, we use the terms star and star graph interchangeably. For a vector $\mathbf{v}= (v_1, v_2, \ldots, v_w) \in \mathbb{R}^w$, the Euclidean and infinity norms are defined as: $||\mathbf{v}|| = \sqrt{(\sum_{i=1}^w v_i^2)}$ and $||\mathbf{v}||_\infty = \max(|v_1|, |v_2|, \ldots, |v_w|),$ respectively. In this text, vectors and matrices are denoted by bold lower case letters and bold upper case letters, respectively. We say that an algorithm is efficient if its running time is polynomial in its input size.

\begin{definition}
	\emph{The probability density function (p.d.f.) of a continuous random variable $X$ with support $S$ is an integrable function $f_X$ such that following conditions hold:
	\begin{itemize}
		\item $\forall x \in S: f_X(x) > 0$,
		\item $\int_{S} f_X(x) dx = 1$,
		\item $\Pr[X \in \Im] = \int_{\Im} f_X(x) dx.$
	\end{itemize}}	
\end{definition}

\begin{definition}
	\emph{The probability mass function (p.m.f.) $p_X$ of a discrete random variable $X$ with support $S$ is a function for which the following hold:
	\begin{itemize}
		\item $\forall x \in S: \Pr[X = x] = p_X(x) > 0$,
		\item $\sum\limits_{x \in S} p_X(x) = 1$, 
		\item $\Pr[X \in \Im] = \sum\limits_{x \in \Im} p_X(x).$
	\end{itemize}}
\end{definition}

\subsection{Entropy}
The concept of entropy was originally introduced as a thermodynamic construct by Rankine in 1850 \cite{True[80]}. It was later adapted to information theory by Shannon \cite{Shannon[48]}, where it denotes a measure of the uncertainty associated with a random variable, i.e., (information) entropy is defined as a measure of the average information content that is missing when value of a random variable is not known. 

\begin{definition}
	\emph{For a finite set $\mathcal{S} = \{s_1, s_2, \ldots, s_n\}$ with probabilities $p_1, p_2, \ldots, p_n$, the entropy of the probability distribution over $\mathcal{S}$ is defined as:
	\[H(\mathcal{S}) = \sum\limits_{i=1}^n p_i \log \dfrac{1}{p_i}. \]} 
\end{definition}

\subsection{Lattices}
A lattice $\mathrm{\Lambda}$ of $\mathbb{R}^w$ is defined as a discrete subgroup of $\mathbb{R}^w$. In cryptography, we are interested in integer lattices, i.e., $\mathrm{\Lambda} \subseteq \mathbb{Z}^w$. Given $w$-linearly independent vectors $\textbf{b}_1,\dots,\textbf{b}_w \in \mathbb{R}^w$, a basis of the lattice generated by them can be represented as the matrix $\mathbf{B} = (\textbf{b}_1,\dots,\textbf{b}_w) \in \mathbb{R}^{w \times w}$. The lattice generated by $\mathbf{B}$ is the following set of vectors:
\[\mathrm{\Lambda}(\textbf{B}) = \left\{ \sum\limits_{i=1}^w c_i \textbf{b}_i: c_i \in \mathbb{Z} \right\}.\]
Historically, lattices received attention from illustrious mathematicians, including Lagrange, Gauss, Dirichlet, Hermite, Korkine-Zolotareff, and Minkowski (see \cite{Laga[73],Gauss[81],Herm[50],Kork[73],Mink[10],JacSte[98]}). Problems in lattices have been of interest to cryptographers since 1997, when Ajtai and Dwork~\cite{Ajtai[97]} proposed a lattice-based public key cryptosystem following Ajtai's~\cite{Ajtai[96]} seminal worst-case to average-case reductions for lattice problems. In lattice-based cryptography, \textit{q-ary} lattices are of particular interest; they satisfy the following condition: $$q \mathbb{Z}^w \subseteq \mathrm{\Lambda} \subseteq \mathbb{Z}^w,$$ for some (possibly prime) integer $q$. In other words, the membership of a vector $\textbf{x}$ in $\mathrm{\Lambda}$ is determined by $\textbf{x}\bmod q$. Given a matrix $\textbf{A} \in \mathbb{Z}^{w \times n}_q$ for some integers $q, w, n,$ we can define the following two $n$-dimensional \textit{q-ary} lattices,

\[\mathrm{\Lambda}_q(\textbf{A}) = \{\textbf{y} \in \mathbb{Z}^n: \textbf{y} = \textbf{A}^T\textbf{s} \bmod q \text{ for some } \textbf{s} \in \mathbb{Z}^w \}, \]
\[\hspace{-28mm} \mathrm{\Lambda}_q^{\perp}(\textbf{A}) = \{\textbf{y} \in \mathbb{Z}^n: \textbf{Ay} = \textbf{0} \bmod q \}.\]

The first \textit{q-ary} lattice is generated by the rows of $\textbf{A}$; the second contains all vectors that are orthogonal (modulo $q$) to the rows of $\textbf{A}$. Hence, the first \textit{q-ary} lattice, $\mathrm{\Lambda}_q(\textbf{A})$, corresponds to the code generated by the rows of $\textbf{A}$ whereas the second, $\mathrm{\Lambda}_q^{\perp}(\textbf{A})$, corresponds to the code whose parity check matrix is $\textbf{A}$. For a complete introduction to lattices, we refer the interested reader to the monographs by Gr\"{a}tzer~\cite{Gratzer[03],Gratzer[09]}.

\subsection{Gaussian Distributions}
Gaussian sampling is an extremely useful tool in lattice-based cryptography. Introduced by Gentry et al. \cite{Gen[08]}, Gaussian sampling takes a short basis $\textbf{B}$ of a lattice $\mathrm{\Lambda}$ and an arbitrary point $\textbf{v}$ as inputs and outputs a point from a Gaussian distribution discretized on the lattice points and centered at $\textbf{v}$. Gaussian sampling does not leak any information about the lattice $\mathrm{\Lambda}$. It has been used directly to construct multiple cryptographic schemes, including hierarchical identity-based encryption \cite{AgDan[10],CashDenn[10]}, standard model signatures \cite{AgDan[10],XavBoy[10]}, and attribute-based encryption \cite{DanSer[14]}. In addition, Gaussian sampling/distribution also plays an important role in other hard lattice problems, such as, learning single periodic neurons \cite{SongZa[21]}, and has direct connections to standard lattice problems \cite{DivDan[15],NoDan[15],SteDavid[15]}.

\begin{definition}
	\emph{A continuous Gaussian distribution, $\mathcal{N}^w(\textbf{v},\sigma^2)$, over $\mathbb{R}^w$, centered at some $\mathbf{v} \in \mathbb{R}^w$ with standard deviation $\sigma$ is defined for $\textbf{x} \in \mathbb{R}^w$ as the following density function:
	\[\mathcal{N}_\textbf{x}^w(\textbf{v},\sigma^2) = \left( \dfrac{1}{\sqrt{2 \pi \sigma^2}} \right)^w \exp \left(\frac{-||\textbf{x} - \textbf{v}||^2}{2 \sigma^2}\right). \]	}
\end{definition}

A rounded Gaussian distribution can be obtained by simply rounding the samples from a continuous Gaussian distribution to their nearest integers. Rounded Gaussians have been used to establish hardness of LWE \cite{Reg[05],Gold[10],Duc[15]} --- albeit not as frequently as discrete Gaussians. 

\begin{definition}[Adapted from \cite{Hul[17]}]\label{roundGauss}
	\emph{A rounded Gaussian distribution, $\mathrm{\Psi}^w(\textbf{v},\hat{\sigma}^2)$, over $\mathbb{Z}^w$, centered at some $\textbf{v} \in \mathbb{Z}^w$ with parameter $\sigma$ is defined for $\textbf{x} \in \mathbb{Z}^w$ as:
	\[\mathrm{\Psi}^w_\textbf{x}(\textbf{v},\hat{\sigma}^2) = \int_{A_\textbf{x}} \mathcal{N}_\textbf{s}^w(\textbf{v},\sigma^2)\,d\textbf{s} = \int_{A_\textbf{x}} \left( \dfrac{1}{\sqrt{2 \pi \sigma^2}} \right)^w \exp\left( \dfrac{-||\textbf{s} - \textbf{v}||^2}{2 \sigma^2} \right)\,d\textbf{s}, \]
	where $A_\textbf{x}$ denotes the region $\prod_{i=1}^{w} [x_i - \frac{1}{2}, x_i + \frac{1}{2})$; $\hat{\sigma}$ and $\sigma$ are the standard deviations of the rounded Gaussian and its underlying continuous Gaussian, respectively, such that $\hat{\sigma} = \sqrt{\sigma^2 + 1/12}$.}
\end{definition}

\begin{definition}[Gaussian channel]\label{Gauss}
	\emph{A Gaussian channel is a discrete-time channel with input $x_i$ and output $y_i = x_i + \varepsilon_i$, where $\varepsilon_i$ is drawn i.i.d. from a Gaussian distribution $\mathcal{N}(0, \sigma^2)$, with mean 0 and standard deviation $\sigma$, which is assumed to be independent of the signal $x_i$.}
\end{definition}

\begin{definition}[Gram-Schmidt norm]
	\emph{Let $\textbf{B} = (\textbf{b}_i)_{i \in [w]}$ be a finite basis, and $\widetilde{\textbf{B}} = (\tilde{\textbf{b}}_i)_{i \in [w]}$ be its Gram-Schmidt orthogonalization. Then, Gram-Schmidt norm of $\textbf{B}$ is defined as:
	\[||\textbf{B}||_{GS} = \max\limits_{i \in [w]} ||\tilde{\textbf{b}}_i||.\]	}
\end{definition}

For an introduction to Gram-Schmidt orthogonalization, see \cite{Leon[12]}.

\begin{definition}[Discrete Gaussian over Lattices]
	\emph{Given a lattice $\mathrm{\Lambda} \in \mathbb{Z}^w$, the discrete Gaussian distribution over $\mathrm{\Lambda}$ with standard deviation $\sigma \in \mathbb{R}$ and center $\textbf{v} \in \mathbb{R}^w$ is defined as:
	\[D(\mathrm{\Lambda},\textbf{v},\sigma^2)_\textbf{x} = \dfrac{\rho_\textbf{x}(\textbf{v},\sigma^2)}{\rho_\mathrm{\Lambda}(\textbf{v},\sigma^2)};\ \forall \textbf{x} \in \mathrm{\Lambda},\]
	where $\rho_\mathrm{\Lambda}(\textbf{v}, \sigma^2) = \sum\limits_{\textbf{x}_i \in \mathrm{\Lambda}} \rho_{\textbf{x}_i}(\textbf{v},\sigma^2)$
	and $$\rho_\textbf{x}(\textbf{v},\sigma^2) = \exp\left( - \dfrac{||\textbf{x} - \textbf{v}||^2}{2\sigma^2}\right).$$}
\end{definition}

The smoothing parameter is defined as a measure of the ``difference'' between discrete and standard Gaussians, that are defined over identical parameters. Informally, it is the smallest $\sigma$ required by a discrete Gaussian distribution, over a lattice $\mathrm{\Lambda}$, to behave like a continuous Gaussian --- up to some acceptable statistical error. For more details, see \cite{MicReg[04],DO[07],ChungD[13]}. Various methods such as computing the cumulative density function, taking convolutions of smaller deviation discrete Gaussians, and rejection/Bernoulli/Binomial/Ziggurat/Knuth-Yao/CDT (cumulative distribution table) sampling have been employed to efficiently sample from discrete Gaussians for lattice-based cryptography \cite{MarSet[17],BartPie[19],Bra[13],AndPat[13],GalDwa[14],GenDan[18],RicoHow[20],KarnF[16],KarVerVer[18],WaltDan[17],DucMann[14],SakSte[20],ZhaAmi[20],OngDuc[12],DucDur[13],BeeMoo[16],PhiPet[16],AyeRaff[18],XiDu[22]}. 

\begin{theorem}[Drowning/Smudging \cite{GentryThesis[09]}]
	Let $\sigma > 0$ and $y \in \mathbb{Z}$. The statistical distance between $\mathrm{\Psi}(v,\sigma^2)$ and $\mathrm{\Psi}(v,\sigma^2) + y$ is at most $|y|/\sigma$.
\end{theorem}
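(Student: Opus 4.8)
The plan is to work directly from the definition of statistical distance. Write $\Psi(x):=\mathrm{\Psi}_x(v,\sigma^2)$ for the rounded-Gaussian mass at the integer $x$ (the one-dimensional instance of \Cref{roundGauss}, whose underlying continuous Gaussian has standard deviation $\sigma$). The shifted law $\mathrm{\Psi}(v,\sigma^2)+y$ assigns mass $\Psi(x-y)$ to $x$, so the quantity to bound is
$$\Delta=\tfrac{1}{2}\sum_{x\in\mathbb{Z}}\bigl|\Psi(x)-\Psi(x-y)\bigr|.$$
We may assume $y\ge 0$, since $\Delta$ is symmetric in its arguments and the target bound depends only on $|y|$.

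First I would unfold each mass into its defining Gaussian integral. With $g(s)=\frac{1}{\sqrt{2\pi\sigma^2}}\exp(-(s-v)^2/(2\sigma^2))$ the underlying density, \Cref{roundGauss} gives $\Psi(x)=\int_{x-1/2}^{x+1/2}g(s)\,ds$ and $\Psi(x-y)=\int_{x-1/2}^{x+1/2}g(s-y)\,ds$, so each summand equals $\bigl|\int_{x-1/2}^{x+1/2}[g(s)-g(s-y)]\,ds\bigr|$. Pulling the absolute value inside the integral and summing the disjoint unit intervals $[x-1/2,x+1/2)$ over all $x\in\mathbb{Z}$ collapses the sum into a single integral over $\mathbb{R}$:
$$\sum_{x\in\mathbb{Z}}\bigl|\Psi(x)-\Psi(x-y)\bigr|\le\int_{\mathbb{R}}\bigl|g(s)-g(s-y)\bigr|\,ds.$$

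The key step is to evaluate this continuous $L^1$ distance exactly. Because $g$ is centred at $v$, the difference $g(s)-g(s-y)$ is positive precisely when $s$ is closer to $v$ than $s-y$, i.e. for $s<v+y/2$, and negative for $s>v+y/2$. Splitting the integral at $v+y/2$ and telescoping each piece against the Gaussian cumulative distribution $G$, both halves evaluate to $G(v+y/2)-G(v-y/2)$, whence
$$\int_{\mathbb{R}}\bigl|g(s)-g(s-y)\bigr|\,ds=2\int_{v-y/2}^{v+y/2}g(s)\,ds.$$
Finally I would bound the integrand by its peak $g(v)=1/(\sigma\sqrt{2\pi})$ over the interval of length $y$, so that $\int_{v-y/2}^{v+y/2}g\le y/(\sigma\sqrt{2\pi})$. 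Combining the three displays gives $\Delta\le y/(\sigma\sqrt{2\pi})\le y/\sigma=|y|/\sigma$, as claimed (in fact with the sharper constant $1/\sqrt{2\pi}$).

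The main obstacle is the collapse of the sum of absolute values in the second display: one must check that passing the absolute value inside the integral costs only a harmless inequality and that the resulting $L^1$ distance admits the clean closed form above. An alternative avoiding the continuous relaxation is to note that $\Psi(\cdot)$ is unimodal — being the sliding-window average of the unimodal density $g$ — so the consecutive differences $\Psi(x)-\Psi(x-1)$ change sign exactly once; telescoping then yields $\sum_x|\Psi(x)-\Psi(x-1)|=2\max_x\Psi(x)$, and the triangle inequality with translation invariance of $\Delta$ upgrades the single-step estimate to the factor $|y|$. Either route ultimately rests on $\max_x\Psi(x)\le 1/(\sigma\sqrt{2\pi})$, obtained by bounding the Gaussian density by its maximum, which is what fixes the final constant.
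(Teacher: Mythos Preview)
Your proof is correct; in fact you obtain the sharper bound $|y|/(\sigma\sqrt{2\pi})$. Note, however, that the paper does not supply its own proof of this statement: it is quoted as a known result from Gentry's thesis and used as a black box. So there is nothing in the paper to compare your argument against.

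On the argument itself, both routes you sketch are sound. The continuous relaxation is the cleaner of the two: the passage from $\sum_x\bigl|\int_{A_x}[g(s)-g(s-y)]\,ds\bigr|$ to $\int_\mathbb{R}|g(s)-g(s-y)|\,ds$ is just the triangle inequality applied on each cell $A_x=[x-1/2,x+1/2)$ followed by Tonelli over the partition $\bigsqcup_x A_x=\mathbb{R}$, so no subtlety there. The closed form $2\int_{v-y/2}^{v+y/2}g$ is the standard total-variation computation for two Gaussians of equal variance, and the final peak bound is immediate. The unimodality route is also fine but slightly fussier, since you need to argue that the window average of a unimodal density remains unimodal before telescoping.
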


Typically, in lattice-based cryptography, drowning/smudging is used to hide some information by introducing a sufficiently large random noise --- with large standard deviation --- such that the resulting distribution is, to the desired degree, independent of the information that needs to be hidden \cite{MartDeo[17],Gold[10],Dodis[10],AshaJain[12],AlpErt[12],Ban[12],Grg[13],Gent[09],DamRon[14],GentryThesis[09],MartAlex[21]}. However, in our work, we do not use it for that purpose; instead, we use it argue about the insignificance of a small component of the total error. 

\subsection{Learning with Errors}\label{LWE}
The learning with errors (LWE) problem~\cite{Reg[05]} is at the center of the majority of lattice-based cryptographic constructions~\cite{Peikert[16]}. LWE is known to be hard based on the worst-case hardness of standard lattice problems such as GapSVP (decision version of the Shortest Vector Problem) and SIVP (Shortest Independent Vectors Problem)~\cite{Reg[05],Pei[09]}. Multiple variants of LWE such as ring LWE~\cite{Reg[10]}, module LWE~\cite{Ade[15]}, cyclic LWE~\cite{Charles[20]}, continuous LWE~\cite{Bruna[20]}, \textsf{PRIM LWE}~\cite{SehrawatVipin[21]}, middle-product LWE~\cite{Miruna[17]}, group LWE~\cite{NicMal[16]}, entropic LWE \cite{ZviVin[16]}, universal LWE \cite{YanHua[22]}, and polynomial-ring LWE~\cite{Damien[09]} have been developed since 2010. Many cryptosystems rely on the hardness of LWE, including (identity-based, leakage-resilient, fully homomorphic, functional, public-key/key-encapsulation, updatable, attribute-based, inner product, predicate) encryption~\cite{AnaFan[19],KimSam[19],WangFan[19],Reg[05],Gen[08],Adi[09],Reg[10],Shweta[11],Vinod[11],Gold[13],Jan[18],Round[19],Bos[18],Bos[16],WBos[15],Brak[14],Fan[12],Joppe[13],Adriana[12],Lu[18],AndMig[22],MiaSik[22],Boneh[13],Vipin[19],LiLi[22],RaviHow[22],ShuiTak[20],SerWe[15]}, oblivious transfer~\cite{Pei[08],Dott[18],Quach[20]}, (blind) signatures~\cite{Gen[08],Vad[09],Markus[10],Vad[12],Tesla[20],Dili[17],FALCON[20]}, PRFs with special algebraic properties~\cite{Ban[12],Boneh[13],Ban[14],Ban[15],Bra[15],Vipin[19],KimDan[17],RotBra[17],RanChen[17],KimWu[17],KimWu[19],Qua[18],KevAna[14],SinShi[20],BanDam[18],PeiShi[18]}, verifiable/homomorphic/function secret sharing~\cite{SehrawatVipin[21],GHL[21],Boy[17],DodHal[16],GilLin[17],LisPet[19]}, hash functions~\cite{Katz[09],Pei[06]}, secure matrix multiplication computation~\cite{Dung[16],Wang[17]}, verifiable quantum computations~\cite{Urmila[18],Bra[21],OrNam[21],ZhenAlex[21]}, noninteractive zero-knowledge proof system for (any) NP language~\cite{Sina[19]}, classically verifiable quantum computation~\cite{Urmila[18]}, certifiable randomness generation \cite{Bra[21]}, obfuscation~\cite{Huijia[16],Gentry[15],Hal[17],ZviVin[16],AnanJai[16],CousinDi[18]}, multilinear maps \cite{Grg[13],Gentry[15],Gu[17]}, lossy-trapdoor functions \cite{BellKil[12],PeiW[08],HoWee[12]}, quantum homomorphic encryption \cite{Mahadev[18],OreNic[21]}, key exchange \cite{WBos[15],Alkim[16],StebMos[16]}, zero-knowledge protocols for QMA (Quantum Merlin Arthur: a quantum analogue of NP) \cite{AlaAlex[20],FermiMala[22],MalaJam[21],BitOm[20],HuiChu[20],ColaTho[20],GiuCha[21],Omi[21]}, and many more~\cite{Peikert[16],JiaZhen[20],KatzVadim[21]}.

\begin{definition}[Decision-LWE \cite{Reg[05]}]\label{defLWE}
	\emph{For positive integers $w$ and $q \geq 2$, and an error (probability) distribution $\chi$ over $\mathbb{Z}$, the decision-LWE${}_{w, q, \chi}$ problem is to distinguish between the following pairs of distributions: 
		\[((\textbf{a}_i, \langle \textbf{a}_i, \textbf{s} \rangle + e_i \bmod q))_i \quad \text{and} \quad ((\textbf{a}_i, u_i))_i,\] 
		where $i \in [\poly(w)],\, \textbf{a}_i \xleftarrow{\; \$ \;} \mathbb{Z}^{w}_q,\, \textbf{s} \in \mathbb{Z}^w_q,\, e_i \leftarrow \chi,$ and $u_i \xleftarrow{\; \$ \;} \mathbb{Z}_q$.}
\end{definition} 

Regev~\cite{Reg[05]} showed that for certain noise distributions and a sufficiently large $q$, the LWE problem is as hard as the worst-case SIVP and GapSVP under a quantum reduction (see~\cite{Pei[09],Bra[13],ChrisOded[17]} for other reductions). Standard instantiations of LWE assume $\chi$ to be a rounded or discrete Gaussian distribution. Regev's proof requires $\alpha q \geq 2 \sqrt{w}$ for ``noise rate'' $\alpha \in (0,1)$. These results were extended by Applebaum et al.~\cite{Benny[09]} to show that the fixed secret $\textbf{s}$ can be sampled from a low norm distribution. Specifically, they showed that sampling $\textbf{s}$ from the noise distribution $\chi$ does not weaken the hardness of LWE. Later, Micciancio and Peikert discovered that a simple low-norm distribution also works as $\chi$~\cite{Micci[13]}. 

\subsection{Pseudorandom Functions}
In a pseudorandom function (PRF) family~\cite{Gold[86]}, each function is specified by a key such that it can be evaluated deterministically with the key but behaves like a random function without it. Here, we recall the formal definition of a PRF family. Recall that an ensemble of probability distributions is a sequence $\{X_n\}_{n \in \mathbb{N}}$ of probability distributions.

\begin{definition}[Negligible Function]\label{Neg}
	\emph{For security parameter $\L$, a function $\eta(\L)$ is called \textit{negligible} if for all $c > 0$, there exists a $\L_0$ such that $\eta(\L) < 1/\L^c$ for all $\L > \L_0$.}
\end{definition}

\begin{definition}[Computational Indistinguishability~\cite{Gold[82]}]
	\emph{Let $X = \{X_\lambda\}_{\lambda \in \mathbb{N}}$ and $Y = \{Y_\lambda\}_{\lambda \in \mathbb{N}}$ be ensembles, where $X_\lambda$'s and $Y_\lambda$'s are probability distributions over $\{0,1\}^{\kappa(\lambda)}$ for $\lambda \in \mathbb{N}$ and some polynomial $\kappa(\lambda)$. We say that $\{X_\lambda\}_{\lambda \in \mathbb{N}}$ and $\{Y_\lambda\}_{\lambda \in \mathbb{N}}$ are polynomially/computationally indistinguishable if the following holds for every (probabilistic) polynomial-time algorithm $\mathcal{D}$ and all $\lambda \in \mathbb{N}$:
		\[\Big| \Pr[t \leftarrow X_\lambda: \mathcal{D}(t) = 1] - \Pr[t \leftarrow Y_\lambda: \mathcal{D}(t) = 1] \Big| \leq \eta(\lambda),\]
		where $\eta$ is a negligible function.}
\end{definition}	

\begin{remark}[Perfect Indistinguishability]
	We say that $\{X_\lambda\}_{\lambda \in \mathbb{N}}$ and $\{Y_\lambda\}_{\lambda \in \mathbb{N}}$ are perfectly indistinguishable if the following holds for all $t$:
	\[\Pr[t \leftarrow X_\lambda] = \Pr[t \leftarrow Y_\lambda].\]
\end{remark}

We consider adversaries interacting as part of probabilistic experiments called games. For an adversary $\mathcal{A}$ and two games $\mathfrak G_1, \mathfrak G_2$ with which it can interact, $\mathcal{A}'s$ distinguishing advantage is: 
\[Adv_{\mathcal{A}}(\mathfrak{G}_1, \mathfrak{G}_2) := \Big|\Pr[\mathcal{A} \text{ accepts in } \mathfrak G_1] - \Pr[\mathcal{A} \text{ accepts in } \mathfrak G_2]\Big|.\]
For the security parameter $\L$, the two games are said to be computationally indistinguishable if it holds that: $$Adv_{\mathcal{A}}(\mathfrak{G}_1, \mathfrak{G}_2) \leq \eta(\L),$$ where $\eta$ is a negligible function.

\begin{definition}[PRF] 
	\emph{Let $A$ and $B$ be finite sets, and let $\mathcal{F} = \{ F_k: A \rightarrow B \}$ be a function family, endowed with an efficiently sampleable distribution (more precisely, $\mathcal{F}, A$, and $B$ are all indexed by the security parameter $\L)$. We say that $\mathcal{F}$ is a PRF family if the following two games are computationally indistinguishable:
	\begin{enumerate}[label=(\roman*)]
		\item Choose a function $F_k \in \mathcal{F}$ and give the adversary adaptive oracle access to $F_k$.
		\item Choose a uniformly random function $U: A \rightarrow B$ and give the adversary adaptive oracle access to $U.$
	\end{enumerate}}
\end{definition} 

Hence, PRF families are efficient distributions of functions that cannot be efficiently distinguished from the uniform distribution. For a PRF $F_k \in \mathcal{F}$, the index $k$ is called its key/seed. PRFs have a wide range of applications, most notably in cryptography, but also in computational complexity and computational learning theory. For a detailed introduction to PRFs and review of the noteworthy results, we refer the interested reader to the survey by Bogdanov and Rosen \cite{AndAlo[17]}. 

\subsection{Linear Regression}\label{Sec5}
Linear regression is a linear approach to model relationship between a dependent variable and explanatory/independent variable(s). As is the case with most statistical analysis, the goal of regression is to make sense of the observed data in a useful manner. It analyzes the training data and attempts to model the relationship between the dependent and explanatory/independent variable(s) by fitting a linear equation to the observed data. These predictions (often) have errors, which cannot be predicted accurately~\cite{Trevor[09],Montgo[12]}. For linear regression, the mean and variance functions are defined as:
\[\E(Y |X = x) = \beta_0 + \beta_1 x \quad \text{and} \quad \text{var}(Y |X = x) = \sigma^2,\]
respectively, where $\E(\cdot)$ and $\sigma$ denote the expected value and standard deviation, respectively; $\beta_0$ represents the intercept, which is the value of $\E(Y |X = x)$ when $x$ equals zero; $\beta_1$ denotes the slope, i.e., the rate of change in $\E(Y |X = x)$ for a unit change in $X$. The parameters $\beta_0$ and $\beta_1$ are also known as \textit{regression coefficients}. 

For any regression model, the observed value $y_i$ might not always equal its expected value $\E(Y |X = x_i)$. This difference between the observed data and the expected value is called statistical error, and is defined as: $$\epsilon_i = y_i - \E(Y |X = x_i).$$ For linear regression, errors are random variables that correspond to the vertical distance between the point $y_i$ and the mean function $\E(Y |X = x_i)$. Depending on the type and size of the training data, different algorithms such as gradient descent and least squares may be used to compute the values of $\beta_0$ and $\beta_1$. In this paper, we employ least squares linear regression to estimate the values of $\beta_0$ and $\beta_1$, and generate the optimal hypothesis for the target function. Due to the inherent error in all regression models, it holds that: $$h(x) = f(x) + \varepsilon_x,$$ where $h(x)$ is the (optimal) hypothesis of the linear regression model, $f(x)$ is the target function and $\varepsilon_x$ is the total (reducible + irreducible) error at point $x$.

\subsection{Interconnection Network}
In an interconnection network, each device is independent and connects with other devices via point-to-point links, which are two-way communication lines. Therefore, an interconnection network can be modeled as an undirected graph $G = (V, E)$, where each device is a vertex in $V$ and edges in $E$ represent communication lines/channels between the devices. Next, we recall some basic definitions/notations for undirected graphs. 

\begin{definition}
	\emph{The degree $\deg(v)$ of a vertex $v \in V$ is the number of adjacent vertices it has in a graph $G$. The degree of a graph $G$ is defined as: $\deg(G) = \max\limits_{v \in V}(\deg(v))$.}
\end{definition}

If $\deg(v_i) = \deg(v_j)$ for all $v_i, v_j \in V$, then $G$ is called a regular graph. Since it is easy to construct star graphs that are hierarchical, vertex edge symmetric, maximally fault tolerant, and strongly resilient along with having other desirable properties such as small(er) degree, diameter, genus and fault diameter \cite{Akera[89],Sheldon[94]}, networks of star graphs are well-suited to model interconnection networks. For a detailed introduction to interconnection networks, we refer the interested reader to the comprehensive book by Duato et al. \cite{Sudha[02]}.

\subsection{Some Useful Results}
Here, we recall two useful, elementary results from probability theory. 

\begin{definition}[Chebyshev inequality \cite{Bie[53],Cheby[67]}]
	\emph{Let $X$ be a random variable with mean $\mu$ and variance var$(X) = \sigma^2$. Then, the following holds for all $\varsigma >0$:
	\[\Pr[|X - \mu| \geq \varsigma] \leq \dfrac{\sigma^2}{\varsigma^2}. \]}
\end{definition}

\begin{definition}[The Union Bound \cite{Boole[47]}]
	\emph{For any random events $A_1, A_2, \ldots, A_n$, it holds that:
	\[\Pr\left( \bigcup\limits_{i=1}^n A_i \right) \leq \sum\limits_{i=1}^n \Pr(A_i). \]}
\end{definition}

Let $X_1, X_2, \ldots, X_n$ be i.i.d. random variables from the same distribution, i.e., all $X_i$'s have the same mean $\mu$ and standard deviation $\sigma$. Let random variable $\overline{X}_n$ be the average of $X_1, \ldots, X_n$. Then, $\overline{X}_n$ converges almost surely to $\mu$ as $n\rightarrow\infty$.

\section{Related Work}\label{Sec3}
\subsection{Learning with Rounding}\label{LWR}
Naor and Reingold \cite{Naor[9]} introduced synthesizers to construct PRFs via a hard-to-learn deterministic function. The obstacle in using LWE as the hard learning problem in their synthesizers is that the hardness of LWE relies directly on random errors. In fact, without the error, LWE becomes a trivial problem, that can be solved via Gaussian elimination. Therefore, in order to use these synthesizers for constructing LWE-based PRFs, there was a need to replace the random errors with deterministic yet sufficiently independent errors such that the hardness of LWE is not (significantly) weakened. Banerjee et al.~\cite{Ban[12]} addressed this problem by introducing the learning with rounding (LWR) problem, wherein instead of adding a small random error, as done in LWE, a deterministically rounded version of the sample is generated. For $q \geq p \geq 2$, the rounding function, $\lfloor \cdot \rceil_p: \mathbb{Z}_q \rightarrow \mathbb{Z}_p$, is defined as:
\[\lfloor x \rceil_p = \left\lfloor \dfrac{p}{q} \cdot x \right\rceil,\]
i.e., if $\lfloor x \rceil_p = y$, then $y \cdot \lfloor q/p \rceil$ is the integer multiple of $\lfloor q/p \rceil$ that is nearest to $x$. Hence, the error in LWR originates from deterministically rounding $x$ to a (relatively) nearby value in $\mathbb{Z}_p$.  

\begin{definition}[LWR Distribution~\cite{Ban[12]}]
	\emph{Let $q \geq p \geq 2$ be positive integers, then: for a vector $\textbf{s} \in \mathbb{Z}^w_q$, LWR distribution $L_\textbf{s}$ is defined to be a distribution over $\mathbb{Z}^w_q \times \mathbb{Z}_p$ that is obtained by choosing a vector $\textbf{a} \xleftarrow{\; \$ \;} \mathbb{Z}^w_q$ and outputting $(\textbf{a},b = \lfloor \langle \textbf{a},\textbf{s} \rangle \rceil_p).$}
\end{definition}

For a given distribution over $\textbf{s} \in \mathbb{Z}^w_q$ (e.g., the uniform distribution), the decision-LWR${}_{w,q,p}$ problem is to distinguish (with advantage non-negligible in $w)$ between some fixed number of independent samples $(\textbf{a}_i,b_i) \leftarrow L_\textbf{s}$, and the same number of samples drawn uniformly from $\mathbb{Z}^w_q \times \mathbb{Z}_p$. Banerjee et al. proved decision-LWR to be as hard as decision-LWE for a setting of parameters where the modulus and modulus-to-error ratio are superpolynomial in the security parameter \cite{Ban[12]}. Alwen et al.~\cite{Alwen[13]}, Bogdanov et al.~\cite{Andrej[16]}, and Bai et al.~\cite{ShiBai[18]} made further improvements on the range of parameters and hardness proofs for LWR. LWR has been used to construct pseudorandom generators/functions~\cite{Ban[12],Boneh[13],Ban[14],Vipin[19],VipinThesis[19],BenoSan[17]}, and probabilistic~\cite{Jan[18],Round[19]} and deterministic~\cite{Xie[12]} encryption schemes.

As mentioned earlier, hardness reductions of LWR hold for superpolynomial approximation factors over worst-case lattices. Montgomery~\cite{Hart[18]} partially addressed this issue by introducing a new variant of LWR, called Nearby Learning with Lattice Rounding problem, which supports unbounded number of samples and polynomial (in the security parameter) modulus.  

\subsection{LWR/LWE-based Key-homomorphic PRF\lowercase{s}}\label{foll}
Since LWR allows generating derandomized/deterministic LWE instances, it can be used as the hard-to-learn deterministic function in Naor and Reingold's synthesizers, and therefore, construct LWE-based PRF families for specific parameters. Due to the indispensable small error, LWE-based key-homomorphic PRFs only achieve what is called `almost homomorphism'~\cite{Boneh[13]}. 

\begin{definition}[Key-homomorphic PRF~\cite{Boneh[13]}]
	\emph{Let $F: \mathcal{K} \times \mathcal{X} \rightarrow \mathbb{Z}^w_q$ be an efficiently computable function such that $(\mathcal{K}, \oplus)$ is a group. We say that the tuple $(F, \oplus)$ is a $\gamma$-almost key-homomorphic PRF if the following two properties hold:
	\begin{enumerate}[label=(\roman*)]
		\item $F$ is a secure PRF,
		\item for all $k_1, k_2 \in \mathcal{K}$ and $x \in \mathcal{X}$, there exists $\textbf{e} \in [0, \gamma]^w$ such that: $$F_{k_1}(x) + F_{k_2}(x) = F_{k_1 \oplus k_2}(x) + \textbf{e} \bmod q.$$
	\end{enumerate}}
\end{definition}

Multiple key-homomorphic PRF families have been constructed via varying approaches~\cite{Naor[99],Boneh[13],Ban[14],Parra[16],SamK[20],Navid[20]}. 

\section{SSKH PRF: Definition}\label{Sec4}

For any two sets $X$ and $Y$, let $\texttt{PartFunc}(X,\,Y)$ denote the space of partial functions from $X$ to $Y$.

\begin{definition}\label{P2SI}
	\emph{An efficient randomized algorithm $\mathfrak{A}: \Re \to \texttt{PartFunc}(\mathbb{Z},\,\mathbb{Z})$ is probabilistic to static-independent (P2SI) if it takes some random $r\in\Re$ (chosen according to some fixed probability distribution on $\Re$) as input, and outputs a deterministic function $\mathfrak{M}_r: \mathcal{X}_r \to \mathbb{Z}$, where $\mathcal{X}_r \subseteq \mathbb{Z}$, such that, for all $x_i, x_j\in\mathbb{Z}$ with $x_i \neq x_j$:
	\begin{enumerate}
		\item\label{P21} the probability distributions of $\mathfrak{M}_{r}(x_i)$  (taken with respect to the randomness $r\in\Re$ such that $x_i\in\mathcal{X}_r$) and $\mathfrak{M}_{r}(x_j)$ are both computationally indistinguishable from rounded Gaussians with the same parameters, 
		\item the following quantities:
		$$H[\mathfrak{M}_{r}(x_i) | \mathfrak{M}_{r}(x_j)],\quad H[\mathfrak{M}_{r}(x_j) | \mathfrak{M}_{r}(x_i)],\quad H[\mathfrak{M}_{r}(x_i)],\quad H[\mathfrak{M}_{r}(x_j)]$$
		are equal up to a negligible function.
	\end{enumerate}} 
\end{definition}

\begin{figure}[h!]
	\centering
	\includegraphics[scale=.5]{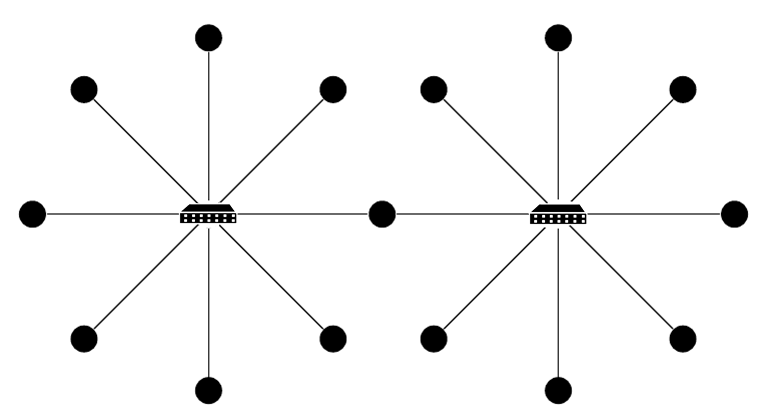}
	\caption{An Example Interconnection Graph}\label{DefFig}
\end{figure}

Next, we define a star-specific key-homomorphic (SSKH) PRF family. Let $G = (V,E)$ be a graph, representing an interconnection network, containing multiple star graphs wherein the leaves of each star graph, $\partial$, represent unique parties and the root represents a central hub that broadcasts messages to all leaves/parties in $\partial$. Different star graphs may have arbitrary numbers of shared leaves. Henceforth, we call such a graph an \textit{interconnection graph}. \Cref{DefFig} depicts a simple interconnection graph with two star graphs, each containing one central hub, respectively, along with eight parties/leaves out of which one leaf is shared by both star graphs. Note that an interconnection graph can also be viewed as a bipartite graph with its vertices partitioned into two disjoint subsets, $V_1, V_2 \subset V$, wherein the vertices in $V_1$ and $V_2$ represent the central hubs and parties, respectively.

\begin{definition}\label{MainDef}
	\emph{Let graph $G = (V, E)$ be an interconnection graph with a set of vertices $V$ and a set of edges $E$. Let there be $\rho$ star graphs $\partial_1, \ldots, \partial_\rho$ in $G$. Let $\mathcal{F}=\left(F^{(\partial_i)}\right)_{i=1,\ldots,\rho}$ be a family of PRFs, where, for each $i$, $F^{(\partial_i)}: \mathcal{K} \times \mathcal{X} \rightarrow \mathbb{Z}^w_q$ with $(\mathcal{K}, \oplus)$ a group. Then, we say that the tuple $(\mathcal{F}, \oplus)$ is a star-specific $(\delta,\gamma,p)$-almost key-homomorphic PRF family if the following two conditions hold:
	\begin{enumerate}[label=(\roman*)]
		\item for all $\partial_i \neq \partial_j~(i,j \in [\rho]), k \in \mathcal{K}$ and $x \in \mathcal{X}$, it holds that:
		\[\Pr[F^{(\partial_i)}_k(x) = F^{(\partial_j)}_k(x)] \leq \delta^w+\eta(\L),\] 
		where $F^{(\partial)}_k(x)$ denotes the PRF computed by parties in star graph $\partial \subseteq V(G)$ on input $x \in \mathcal{X}$ and key $k \in \mathcal{K}$, and $\eta(\L)$ is a negligible function in the security parameter $\L$,
		\item for all $k_1, k_2 \in \mathcal{K}$ and $x \in \mathcal{X}$, there exists a vector $\textbf{e}=(e_1,\ldots,e_w)$ satisfying:
		$$F_{k_1}^{(\partial)}(x) + F^{(\partial)}_{k_2}(x) = F^{(\partial)}_{k_1 \oplus k_2}(x) + \textbf{e} \bmod q,$$
		such that for all $a\in [w]$, it holds that: $\Pr[-\gamma\leq e_a\leq\gamma]\geq p$.
	\end{enumerate}}
\end{definition}

\section{Maximally Cover-free At Most $t$-intersecting $k$-uniform Families}\label{Extremal}
Extremal combinatorics deals with the problem of determining or estimating the maximum or minimum cardinality of a collection of finite objects that satisfies some specific set of requirements. It is also concerned with the investigation of inequalities between combinatorial invariants, and questions dealing with relations among them. For an introduction to the topic, we refer the interested reader to the books by Jukna \cite{Stas[11]} and Bollob\'{a}s \cite{BollB[78]}, and the surveys by Alon \cite{Alon1[03],Alon1[08],Alon1[16],Alon1[20]}. Extremal combinatorics can be further divided into the following distinct fields:
\begin{itemize}
	\item Extremal graph theory, which began with the work of Mantel in 1907 \cite{Aign[95]} and was first investigated in earnest by Tur\'{a}n in 1941 \cite{Tura[41]}. For a survey of the important results in the field, see \cite{Niki[11]}.
	\item Ramsey theory, which was popularised by Erd\H{o}s and Szekeres \cite{Erd[47],ErdGeo[35]} by extending a result of Ramsey from 1929 (published in 1930 \cite{FrankRam[30]}). For a survey of the important results in the field, see \cite{JacFox[15]}.
	\item Extremal problems in arithemetic combinatorics, which grew from the work of van der Waerden in 1927 \cite{BL[27]} and the Erd\H{o}s-Tur\'{a}n conjecture of 1936 \cite{PLPL[36]}. For a survey of the important results in the field, see \cite{Choon[12]}.
	\item Extremal (finite) set theory, which was first investigated by Sperner~\cite{Sperner[28]} in 1928 by establishing the maximum size of an antichain, i.e., a set-system where no member is a superset of another. However, it was Erd\H{o}s et al. \cite{Erdos[61]} who started systematic research in extremal set theory. 
\end{itemize}

Extremal set theory deals with determining the size of set-systems that satisfy certain restrictions. It is one of the most rapidly developing areas in combinatorics, with applications in various other branches of mathematics and theoretical computer science, including functional analysis, probability theory, circuit complexity, cryptography, coding theory, probabilistic methods, discrete geometry, linear algebra, spectral graph theory, ergodic theory, and harmonic analysis \cite{Beimel[15],Beimel[12],Zeev[15],Zeev[11],Klim[09],Sergey[08],Liu[17],SehrawatVipin[21],VipinYvo[20],Polak[13],Blackburn[03],WangThesis[20],Sudak[10],GarVac[94],Gro[00],IWF[20]}. For more details on extremal set theory, we refer the reader to the book by Gerbner and Patkos \cite{GerbBala[18]}; for probabilistic arguments/proofs, see the books by Bollob\'{a}s \cite{Boll[86]} and Spencer \cite{JSpen[87]}. 

Our work in this paper concerns a subfield of extremal set theory, called \textit{intersection theorems}, wherein set-systems under specific intersection restrictions are constructed, and bounds on their sizes are derived. A wide range of methods have been employed to establish a large number of intersection theorems over various mathematical structures, including vector subspaces, graphs, subsets of finite groups with given group actions, and uniform hypergraphs with stronger or weaker intersection conditions. The methods used to derive these theorems have included purely combinatorial methods such as shifting/compressions, algebraic methods (including linear-algebraic, Fourier analytic and representation-theoretic), analytic, probabilistic and regularity-type methods. We shall not give a full account of the known intersection theorems, but only touch upon the results that are particularly relevant to our set-system and its construction. For a broader account, we refer the interested reader to the comprehensive surveys by Ellis \cite{Ell[21]}, and Frankl and Tokushige~\cite{Frankl[16]}. For an introduction to intersecting and cross-intersecting families related to hypergraphs, see~\cite{AMDD[2020],Kleit[79]}. 

\begin{note}
	Set-system and hypergraph are very closely related terms, and commonly used interchangeably. Philosophically, in a hypergraph, the focus is more on vertices, vertex subsets being in ``relation'', and subset(s) of vertices satisfying a specific configuration of relations; whereas in a set-system, the focus is more on set-theoretic properties of the sets.
\end{note}

In this section, we derive multiple intersection theorems for:
\begin{enumerate}
	\item at most $t$-intersecting $k$-uniform families of sets,
	\item maximally cover-free at most $t$-intersecting $k$-uniform families of sets.
\end{enumerate}
We also provide an explicit construction for at most $t$-intersecting $k$-uniform families of sets. Later in the text, we use the results from this section to establish the maximum number of SSKH PRFs that can be  constructed securely by a set of parties against various active/passive and internal/external adversaries.

For $a,b\in\mathbb{Z}$ with $a\leq b$, let $[a,b]:=\{a,a+1,\ldots,b-1,b\}$.  

\begin{definition}
	\emph{$\mathcal{H}\subseteq 2^{[n]}$ is $k$-uniform if $|A|=k$ for all $A\in\mathcal{H}$.}
\end{definition}

\begin{definition}
	\emph{$\mathcal{H}\subseteq 2^{[n]}$ is maximally cover-free if
	$$A\not\subseteq\bigcup_{B\in\mathcal{H}, B\neq A}B$$
	for all $A\in\mathcal{H}$.}
\end{definition}

It is clear that $\mathcal{H}\subseteq 2^{[n]}$ is maximally cover-free if and only if every $A\in\mathcal{H}$ has some element $x_A$ such that $x_A\not\in B$ for all $B\in\mathcal{H}$ where $B\neq A$. Furthermore, the maximum size of a $k$-uniform family $\mathcal{H}\subseteq 2^{[n]}$ that is maximally cover-free is $n-k+1$, and it is realized by the following set system:
$$\mathcal{H}=\left\{[k-1]\cup\{x\}: x\in[k, n]\right\}$$
(and this is unique up to permutations of $[n]$).

\begin{definition}
	\label{intersecting_definitions}
	\emph{Let $t$ be a non-negative integer. We say the set system $\mathcal{H}$ is
	\begin{enumerate}[label = {(\roman*)}]
		\item at most $t$-intersecting if $|A\cap B|\leq t$,
		\item exactly $t$-intersecting if $|A\cap B|=t$,
		\item at least $t$-intersecting if $|A\cap B|\geq t$,
	\end{enumerate}
	for all $A, B\in\mathcal{H}$ with $A\neq B$.}
\end{definition}

Property (iii) in \Cref{intersecting_definitions} is often simply called ``$t$-intersecting'' \cite{Borg[11]}, but we shall use the term ``at least $t$-intersecting'' for clarity.

\begin{definition}
\emph{Let $\mathcal{F},\mathcal{G}\subseteq 2^{[n]}$. We say that $\mathcal{F}$ and $\mathcal{G}$ are equivalent (denoted as $\mathcal{F}\sim\mathcal{G}$) if there exists a permutation $\pi$ of $[n]$ such that $\pi^\ast(\mathcal{F})=\mathcal{G}$, where
$$\pi^\ast(\mathcal{F})=\left\{\{\pi(a) :a\in A\}: A\in\mathcal{F}\right\}.$$}
\end{definition}

For $n,k,t,m\in\mathbb{Z}^+$ with $t\leq k\leq n$, let $N(n,k,t,m)$ denote the collection of all set systems $\mathcal{H}\subseteq 2^{[n]}$ of size $m$ that are at most $t$-intersecting and $k$-uniform, and $M(n,k,t,m)$ denote the collection of set systems $\mathcal{H}\in N(n,k,t,m)$ that are also maximally cover-free.

The following proposition establishes a bijection between equivalence classes of these two collections of set systems (for different parameters):

\begin{proposition}
	\label{maximally_cover_free}
	Suppose $n,k,t,m\in\mathbb{Z}^+$ satisfy $t\leq k\leq n$ and $m<n$. Then there exists a bijection
	$$M(n,k,t,m)\ /\sim\ \leftrightarrow N(n-m,k-1,t,m)\ /\sim.$$
\end{proposition}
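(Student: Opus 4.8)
The plan is to construct the bijection explicitly and then verify that it descends to $\sim$-classes. Recall the observation made just after the definition of maximally cover-free families: every $A\in\mathcal{H}$ owns a \emph{private} element $x_A$ lying in no other member of $\mathcal{H}$, and distinct sets have distinct private elements, so a family of size $m$ supplies at least $m$ pairwise distinct private elements. I would exploit this to trade private elements against a shrinking of the ground set and the uniformity.

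First I would describe both maps on representatives. Going from $N$ to $M$, given $\mathcal{G}=\{B_1,\ldots,B_m\}\in N(n-m,k-1,t,m)$ with $B_i\subseteq[n-m]$, append a fresh \emph{apex} element to each block: set $A_i=B_i\cup\{n-m+i\}$ and $\Phi(\mathcal{G})=\{A_1,\ldots,A_m\}\subseteq 2^{[n]}$. Each apex is private, so $\Phi(\mathcal{G})$ is maximally cover-free; it is $k$-uniform by construction, and since the apexes are pairwise distinct and private, $|A_i\cap A_j|=|B_i\cap B_j|\leq t$, whence $\Phi(\mathcal{G})\in M(n,k,t,m)$. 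Going from $M$ to $N$, given $\mathcal{H}=\{A_1,\ldots,A_m\}\in M(n,k,t,m)$, choose a private element $x_i\in A_i$ for each $i$, delete it to form $B_i=A_i\setminus\{x_i\}$, and relabel the ground set $[n]\setminus\{x_1,\ldots,x_m\}$ (of size $n-m$) to $[n-m]$ by an arbitrary bijection; this yields $\Psi(\mathcal{H})=\{B_1,\ldots,B_m\}$. The blocks are $(k-1)$-uniform and satisfy $|B_i\cap B_j|\leq|A_i\cap A_j|\leq t$, so the only delicate point is that the $B_i$ remain distinct: if $B_i=B_j$ for $i\neq j$, then, as $x_i\notin A_j$ and $x_j\notin A_i$, we get $A_i\cap A_j=B_i=B_j$ of size $k-1$, which the at most $t$-intersecting condition forbids in the regime $t<k-1$ relevant to \Cref{MainThm1}. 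Hence $\Psi(\mathcal{H})\in N(n-m,k-1,t,m)$.

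The hard part is checking that $\Phi$ and $\Psi$ are well defined on $\sim$-classes, and this is where I expect the real work. For $\Phi$ the ambiguity is only in which apex is attached to which block and in the choice of representative $\mathcal{G}$; two attachments differ by a permutation of $\{n-m+1,\ldots,n\}$, and a permutation $\pi$ witnessing $\mathcal{G}\sim\mathcal{G}'$ extends to a permutation of $[n]$ fixing the apexes, so $\Phi(\mathcal{G})\sim\Phi(\mathcal{G}')$. For $\Psi$ there are three sources of ambiguity. The relabeling to $[n-m]$ is harmless, since any two relabelings differ by a permutation of $[n-m]$. If $\mathcal{H}\sim\mathcal{H}'$ via $\sigma$, then $\sigma$ carries private elements to private elements (being a private element, i.e.\ lying in exactly one member, is preserved by set-system isomorphisms), so the two reductions agree up to a permutation. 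The remaining and key point is independence from the choice of private element inside a fixed $A_i$: if $x,y\in A_i$ are both private, then the transposition $(x\ y)$ is an automorphism of $\mathcal{H}$, because $x$ and $y$ belong to exactly the same member (namely $A_i$ alone); conjugating one choice by this automorphism produces the other, so the two reductions are $\sim$-equivalent.

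Finally I would verify that the induced maps are mutually inverse. The composition $\Psi\circ\Phi$ is the identity on classes because the apexes appended by $\Phi$ are private, so the apex set is itself a valid choice of private elements for $\Psi$, whose deletion returns $\mathcal{G}$ up to relabeling. For $\Phi\circ\Psi$, deleting a private element $x_i$ and then appending a fresh apex to the resulting block yields a family related to $\mathcal{H}$ by the permutation of $[n]$ matching each reinstated apex with the deleted $x_i$; since $x_i$ was private, this permutation is an equivalence, so $\Phi(\Psi(\mathcal{H}))\sim\mathcal{H}$. The main obstacle throughout is the well-definedness of $\Psi$ on $\sim$-classes, which the automorphism observation for co-private elements resolves cleanly; the intersection hypothesis enters only to keep the $m$ reduced blocks distinct.
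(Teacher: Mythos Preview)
Your approach is the paper's: your $\Phi$ and $\Psi$ are exactly its maps $g$ and $f$, and the verification that they descend to $\sim$-classes and are mutual inverses follows the same outline. You are more thorough than the paper on well-definedness; in particular, the transposition-automorphism argument showing independence from the choice of private element is a point the paper simply asserts.

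You have also caught a genuine gap the paper overlooks: distinctness of the reduced blocks $B_i=A_i\setminus\{x_i\}$. Your restriction to $t<k-1$ is not merely convenient but necessary---the proposition as stated (with only $t\leq k$) is false. Take $n=4$, $k=2$, $t=1$, $m=2$: both $\{\{1,2\},\{1,3\}\}$ and $\{\{1,2\},\{3,4\}\}$ lie in $M(4,2,1,2)$ and are inequivalent (one pair intersects, the other does not), so $|M(4,2,1,2)/{\sim}|\geq 2$; but $N(2,1,1,2)$ contains only $\{\{1\},\{2\}\}$, so $|N(2,1,1,2)/{\sim}|=1$. Since the paper only invokes the proposition under the hypothesis $t<k-1$ (as in \Cref{asymptotic_bound_for_maximally_cover_free}), your caveat is the correct repair.
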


\begin{proof}
	We will define functions
	\begin{align*}
		\bar{f}:&& M(n,k,t,m)\ /\sim\ &\to N(n-m,k-1,t,m)\ /\sim \\
		\bar{g}:&& N(n-m,k-1,t,m)\ /\sim\ &\to M(n,k,t,m)\ /\sim
	\end{align*}
	that are inverses of each other.
	
	Let $\mathcal{H}\in M(n,k,t,m)$. Since $\mathcal{H}$ is maximally cover-free, for every $A\in\mathcal{H}$, there exists $x_A\in A$ such that $x_A\not\in B$ for all $B\in\mathcal{H}$ where $B\neq A$. Consider the set system $\{A\setminus\{x_A\}: A\in\mathcal{H}\}$.
	First, note that although this set system depends on the choice of $x_A\in A$ for each $A\in\mathcal{H}$, the equivalence class of $\{A\setminus\{x_A\}: A\in\mathcal{H}\}$ is independent of this choice. Hence, we get the following map:
	\begin{align*}
		f:M(n,k,t,m)&\to N(n-m,k-1,t,m)\ /\sim \\
		\mathcal{H}&\mapsto [\{A\setminus\{x_A\}: A\in\mathcal{H}\}].
	\end{align*}
	Furthermore, it is clear that that if $\mathcal{H}\sim\mathcal{H}'$, then $f(\mathcal{H})\sim f(\mathcal{H}')$. So, $f$ induces a well-defined map as:
	$$\bar{f}: M(n,k,t,m)\ /\sim\ \to N(n-m,k-1,t,m)\ /\sim.$$
	
	Next, for a set system $\mathcal{G}=\{G_1,\ldots, G_m\}\in N(n-m,k-1,t,m)$, define
	\begin{align*}
		g: N(n-m,k-1,t,m) &\to M(n,k,t,m)\ /\sim \\
		\mathcal{G} &\mapsto [\{G_i\cup\{n-m+i\}:i\in[m]\}].
	\end{align*}
	Again, this induces a well-defined map
	$$\bar{g}: N(n-m,k-1,t,m)\ /\sim\ \to M(n,k,t,m)\ /\sim$$
	since $g(\mathcal{G})\sim g(\mathcal{G}')$ for any $\mathcal{G}$, $\mathcal{G}'$ such that $\mathcal{G}\sim\mathcal{G}'$.
	
	We can check that $\bar{f}\circ \bar{g}=id_{N(n-m,k-1,t,m)}$ and that $\bar{g}\circ \bar{f}=id_{M(n,k,t,m)}$. Hence, $\bar{f}$ and $\bar{g}$ are bijections. \qed
\end{proof}

\begin{corollary}
	\label{maximally_cover_free_corollary}
	Let $n,k,t,m\in\mathbb{Z}^+$ be such that $t\leq k\leq n$ and $m<n$. Then there exists a maximally cover-free, at most $t$-intersecting, $k$-uniform set system $\mathcal{H}\subseteq 2^{[n]}$ of size $m$ if and only if there exists an at most $t$-intersecting, $(k-1)$-uniform set system $\mathcal{G}\subseteq 2^{[n-m]}$.
\end{corollary}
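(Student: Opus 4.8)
The plan is to deduce the corollary directly from \Cref{maximally_cover_free}. First I would translate the two existence statements into nonemptiness of the relevant collections: a maximally cover-free, at most $t$-intersecting, $k$-uniform family $\mathcal{H}\subseteq 2^{[n]}$ of size $m$ is by definition an element of $M(n,k,t,m)$, and an at most $t$-intersecting, $(k-1)$-uniform family $\mathcal{G}\subseteq 2^{[n-m]}$ of size $m$ is an element of $N(n-m,k-1,t,m)$. Hence the corollary is precisely the claim that $M(n,k,t,m)\neq\emptyset$ if and only if $N(n-m,k-1,t,m)\neq\emptyset$.

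The key step is the elementary observation that a bijection between two quotient sets forces them to be simultaneously empty or nonempty, together with the fact that a quotient $X/\sim$ is empty exactly when $X$ is. Since \Cref{maximally_cover_free} supplies a bijection $M(n,k,t,m)/\sim\ \leftrightarrow\ N(n-m,k-1,t,m)/\sim$, this immediately yields the desired equivalence. To exhibit witnesses concretely, for the forward direction I would take any $\mathcal{H}\in M(n,k,t,m)$ and delete from each set its private element $x_A$ (the map $\bar{f}$), obtaining an element of $N(n-m,k-1,t,m)$; for the converse I would take $\mathcal{G}\in N(n-m,k-1,t,m)$ and adjoin the fresh element $n-m+i$ to its $i$-th set (the map $\bar{g}$), obtaining an element of $M(n,k,t,m)$.

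I do not anticipate a genuine obstacle, as the corollary is an immediate reading of the proposition; the only delicate point is the size bookkeeping. Both existence statements should be understood as referring to families of size exactly $m$, matching the common size parameter appearing in $M(n,k,t,m)$ and $N(n-m,k-1,t,m)$, so that the bijection applies verbatim. Should the hypothesis of the converse instead furnish an at most $t$-intersecting, $(k-1)$-uniform family of size exceeding $m$, I would first restrict to an arbitrary $m$-element subfamily, which retains both $(k-1)$-uniformity and the at-most-$t$-intersecting property, before invoking $\bar{g}$.
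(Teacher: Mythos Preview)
Your proposal is correct and is precisely the argument the paper has in mind: the corollary is stated without proof, prefaced only by ``Using Proposition~\ref{maximally_cover_free}, we immediately obtain the following as a corollary,'' so the intended derivation is exactly your reduction of the two existence statements to nonemptiness of $M(n,k,t,m)$ and $N(n-m,k-1,t,m)$, followed by the bijection of Proposition~\ref{maximally_cover_free}. Your remark on the size bookkeeping (that the $\mathcal{G}$ in the converse should be read as having size $m$, or else be pruned to an $m$-element subfamily) is apt and addresses the only point where the corollary's wording is slightly loose.
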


\begin{remark}
	Both Proposition \ref{maximally_cover_free} and Corollary \ref{maximally_cover_free_corollary} remain true if, instead of at most $t$-intersecting families, we consider exactly $t$-intersecting or at least $t$-intersecting families.
\end{remark}

At least $t$-intersecting families have been completely characterized by Ahlswede and Khachatrian \cite{Ahlswede[97]}, but the characterization of exactly $t$-intersecting and at most $t$-intersecting families remain open.

Let $\varpi(n,k,t)=\max\left\{|\mathcal{H}|: \mathcal{H}\subseteq 2^{[n]}\text{ is at most }t\text{-intersecting and }k\text{-uniform}\right\}.$

\begin{proposition}
	\label{simple_bound}
	Suppose $n,k,t\in\mathbb{Z}^+$ are such that $t\leq k\leq n$. Then
	$$\varpi(n,k,t)\leq\frac{\binom{n}{t+1}}{\binom{k}{t+1}}.$$
\end{proposition}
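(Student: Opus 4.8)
The plan is to use a double-counting argument on $(t+1)$-element subsets of $[n]$. The key observation is that the at most $t$-intersecting condition says any two distinct members of $\mathcal{H}$ share at most $t$ elements, so no $(t+1)$-subset of $[n]$ can be contained in two different members of $\mathcal{H}$. First I would fix the family $\mathcal{H}$ of size $\varpi(n,k,t)$ and consider the bipartite incidence relation between members $A\in\mathcal{H}$ and $(t+1)$-subsets $T\subseteq A$.

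The main counting step proceeds as follows. Each $A\in\mathcal{H}$ has exactly $|A|=k$ elements since $\mathcal{H}$ is $k$-uniform, so it contains exactly $\binom{k}{t+1}$ subsets of size $t+1$. Summing over all members of $\mathcal{H}$, the total number of incident pairs $(A,T)$ equals $|\mathcal{H}|\cdot\binom{k}{t+1}$. On the other hand, I count the same pairs by grouping according to $T$: for each $(t+1)$-subset $T\subseteq[n]$, the at most $t$-intersecting property guarantees that $T$ is contained in at most one $A\in\mathcal{H}$ (if $T\subseteq A$ and $T\subseteq B$ with $A\neq B$, then $|A\cap B|\geq t+1>t$, a contradiction). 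Hence each $T$ contributes at most one incident pair, and the total number of incident pairs is at most the number of $(t+1)$-subsets of $[n]$, namely $\binom{n}{t+1}$.

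Combining the two counts yields $|\mathcal{H}|\cdot\binom{k}{t+1}\leq\binom{n}{t+1}$, and rearranging gives the desired bound
$$\varpi(n,k,t)=|\mathcal{H}|\leq\frac{\binom{n}{t+1}}{\binom{k}{t+1}}.$$
Since the inequality $t\leq k\leq n$ ensures both binomial coefficients are well-defined and positive, the division is valid.

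I do not anticipate a serious obstacle here; this is a clean application of the standard ``no $(t+1)$-set is covered twice'' double-counting paradigm that underlies many intersection theorems. The only point requiring a moment of care is verifying the direction of the intersection hypothesis: the at most $t$-intersecting condition is exactly what forbids a common $(t+1)$-subset, so I would state that implication explicitly to make the bound on incidences per $T$ rigorous.
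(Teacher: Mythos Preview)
Your proposal is correct and follows essentially the same double-counting argument as the paper: count pairs $(A,T)$ with $A\in\mathcal{H}$ and $T$ a $(t+1)$-subset of $A$, getting $|\mathcal{H}|\binom{k}{t+1}$ on one side, and use the at most $t$-intersecting property to bound the count by $\binom{n}{t+1}$ on the other.
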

\begin{proof}
	Let $\mathcal{H}\subseteq 2^{[n]}$ be an at most $t$-intersecting and $k$-uniform family. The number of pairs $(X, A)$, where $A\in\mathcal{H}$ and $X\subseteq A$ is of size $t+1$, is equal to $|\mathcal{H}|\cdot\binom{k}{t+1}$.
	Since $\mathcal{H}$ is at most $t$-intersecting, any $(t+1)$-element subset of $[n]$ lies in at most one set in $\mathcal{H}$. Thus,
	$$|\mathcal{H}|\cdot\binom{k}{t+1}\leq\binom{n}{t+1}\implies |\mathcal{H}|\leq\frac{\binom{n}{t+1}}{\binom{k}{t+1}}.$$ \qed
\end{proof}

Using Proposition \ref{maximally_cover_free}, we immediately obtain the following as a corollary:

\begin{corollary}
	\label{simple_bound_corollary}
	Suppose $\mathcal{H}\subseteq 2^{[n]}$ is maximally cover-free, at most $t$-intersecting and $k$-uniform. Then
	$$|\mathcal{H}|\leq\frac{\binom{n-|\mathcal{H}|}{t+1}}{\binom{k-1}{t+1}}.$$
\end{corollary}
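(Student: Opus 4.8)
The plan is to derive Corollary \ref{simple_bound_corollary} as a direct consequence of the bijective correspondence established in Proposition \ref{maximally_cover_free}, combined with the counting bound of Proposition \ref{simple_bound}. Let $\mathcal{H}\subseteq 2^{[n]}$ be maximally cover-free, at most $t$-intersecting, and $k$-uniform, and write $m=|\mathcal{H}|$. First I would observe that we may assume $m<n$: indeed, since $\mathcal{H}$ is maximally cover-free and $k$-uniform, its size is at most $n-k+1$, and because $k\geq t+1\geq 2$ (as $t$ is a non-negative integer and $t<k$ in the relevant regime, or at worst $t\leq k$ forcing $k\geq 1$), we have $m\leq n-k+1\leq n-1<n$, so the hypotheses of Proposition \ref{maximally_cover_free} are met. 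This lets me invoke the proposition with the parameters $(n,k,t,m)$.

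Next I would apply Proposition \ref{maximally_cover_free}, which gives a bijection
$$M(n,k,t,m)\ /\sim\ \leftrightarrow N(n-m,k-1,t,m)\ /\sim.$$
Since $\mathcal{H}\in M(n,k,t,m)$, this collection is nonempty, and hence $N(n-m,k-1,t,m)$ is nonempty as well. In other words, there exists an at most $t$-intersecting, $(k-1)$-uniform family $\mathcal{G}\subseteq 2^{[n-m]}$ with $|\mathcal{G}|=m$. (This is precisely the content of Corollary \ref{maximally_cover_free_corollary}, which I could cite directly in place of re-deriving the existence of $\mathcal{G}$.)

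Finally, I would bound $m=|\mathcal{G}|$ using Proposition \ref{simple_bound} applied to $\mathcal{G}$ in the ground set $[n-m]$ with uniformity $k-1$. This yields
$$m=|\mathcal{G}|\leq\varpi(n-m,k-1,t)\leq\frac{\binom{n-m}{t+1}}{\binom{k-1}{t+1}}=\frac{\binom{n-|\mathcal{H}|}{t+1}}{\binom{k-1}{t+1}},$$
which is exactly the claimed inequality after substituting $m=|\mathcal{H}|$. The argument is essentially a two-line chain once the groundwork is in place, so there is no substantial obstacle; the only point requiring a moment's care is verifying that $m<n$ so that Proposition \ref{maximally_cover_free} is applicable, and checking that the binomial coefficient $\binom{k-1}{t+1}$ in the denominator is well-defined and nonzero, i.e.\ that $k-1\geq t+1$. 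The latter holds whenever $t<k-1$; in the boundary case $t=k-1$ one must interpret the bound appropriately (the denominator equals $1$ when $t+1=k-1$ fails, so the cleanest statement assumes $t+1\leq k-1$), but since the corollary is stated without that caveat, I would simply note that $\binom{k-1}{t+1}>0$ requires $t\leq k-2$ and that this is the intended regime inherited from Proposition \ref{simple_bound}.
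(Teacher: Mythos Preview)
Your proposal is correct and follows essentially the same route as the paper: reduce via Proposition~\ref{maximally_cover_free} (equivalently Corollary~\ref{maximally_cover_free_corollary}) to an at most $t$-intersecting, $(k-1)$-uniform family on $[n-m]$, then apply Proposition~\ref{simple_bound}. Your additional remarks verifying $m<n$ and discussing the regime $t\leq k-2$ needed for $\binom{k-1}{t+1}>0$ are more explicit than the paper, which simply states the corollary follows immediately.
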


Similarly, by applying Proposition \ref{maximally_cover_free}, other results on at most $t$-intersecting and $k$-uniform set systems can also be translated into results on set systems that, in addition to having these two properties, are maximally cover-free. Thus, henceforth, we do not explicitly state such results when their derivation is trivial.

\subsection{Bounds for Small $n$}
In this section, we give several bounds on $\varpi(n,k,t)$ for small values of $n$.

\begin{lemma}
	\label{bound_for_small_n_lemma}
	Let $n,k,t\in\mathbb{Z}^+$ be such that $t\leq k\leq n<\frac{1}{2}k\left(\frac{k}{t}+1\right)$. Let $m'$ be the least positive integer such that $n<m'k-\frac{1}{2}m'(m'-1)t$. Then
	$$\varpi(n,k,t)=m'-1.$$
\end{lemma}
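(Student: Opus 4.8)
The plan is to establish the equality by proving the two matching bounds $\varpi(n,k,t)\le m'-1$ and $\varpi(n,k,t)\ge m'-1$ separately. Throughout I would work with the auxiliary function $g(m):=mk-\binom{m}{2}t=mk-\tfrac12 m(m-1)t$, so that by definition $m'$ is the least positive integer with $g(m')>n$.

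For the upper bound I would use the Bonferroni (inclusion–exclusion) inequality. If $A_1,\dots,A_m$ is an at most $t$-intersecting, $k$-uniform family, then $\bigl|\bigcup_i A_i\bigr|\ge \sum_i|A_i|-\sum_{i<j}|A_i\cap A_j|\ge mk-\binom{m}{2}t=g(m)$, and since the family lives in $[n]$ this forces $g(m)\le n$. As $g(m')>n$, no such family of size $m'$ can exist; and since every subfamily of an at most $t$-intersecting $k$-uniform family is again at most $t$-intersecting and $k$-uniform, no family of size $\ge m'$ exists either, giving $\varpi(n,k,t)\le m'-1$. Note this half needs no hypothesis on $n$.

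For the lower bound I would give an explicit construction attaining the Bonferroni bound with equality. Set $m:=m'-1$, assign to each pair $\{i,j\}\subseteq[m]$ a block $B_{ij}$ of $t$ fresh elements and to each $i$ a block $P_i$ of $k-(m-1)t$ fresh elements, and put $A_i:=P_i\cup\bigcup_{j\ne i}B_{ij}$. Then $|A_i|=k$, distinct sets meet exactly in $B_{ij}$ so $|A_i\cap A_j|=t$, and the total number of elements used is precisely $g(m'-1)$. This is a legitimate family provided $k-(m-1)t\ge0$, i.e. $(m'-2)t\le k$, and provided $n\ge g(m'-1)$; the latter is immediate from minimality of $m'$ (with $m'\ge 2$, which follows from $n\ge k=g(1)$).

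I expect the one genuine obstacle to be verifying $(m'-2)t\le k$, and this is exactly where the hypothesis $n<\tfrac12 k(\tfrac{k}{t}+1)$ enters. A direct computation gives $g(k/t+1)=\tfrac12 k(\tfrac{k}{t}+1)$, so the hypothesis reads $n<g(k/t+1)$. Viewing $g$ as a downward parabola with vertex $m^{*}=k/t+\tfrac12$, the mirror image of $k/t+1$ is $k/t$, so $g(m)\ge g(k/t+1)$ precisely for $m\in[k/t,\,k/t+1]$. This closed interval has length $1$ and hence contains an integer $m_0$ with $g(m_0)\ge g(k/t+1)>n$, which simultaneously shows that $m'$ is well defined and that $m'\le m_0\le k/t+1$. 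Consequently $(m'-1)t\le k$, so a fortiori $(m'-2)t\le k$, and the construction is feasible. The subtle point is precisely this parabola/interval argument, which guarantees that $m'$ lands on the increasing branch of $g$ so that the disjoint-block construction is optimal; the counting itself is routine. Combining the two bounds yields $\varpi(n,k,t)=m'-1$.
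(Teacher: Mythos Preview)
Your proposal is correct and follows essentially the same route as the paper: the same Bonferroni/union-size lower bound for the upper direction, the same parabola argument (vertex at $k/t+\tfrac12$, so an integer in $[k/t,\,k/t+1]$ witnesses well-definedness of $m'$ and forces $m'\le k/t+1$), and an equivalent explicit construction achieving $g(m'-1)$ elements. Your block construction with $P_i$ and $B_{ij}$ is in fact a slightly cleaner presentation of the paper's construction (the paper encodes the private part of $A_i$ as extra pair-blocks $(l,\{i,j\})$ with $j>m$ plus a remainder of size $\beta$, but the result is isomorphic), and you are more explicit about the edge case $m'\ge 2$.
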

\begin{proof}
	First, we show that there exists $m^\star\in\mathbb{Z}^+$ such that $n<m^\star k-\frac{1}{2}m^\star(m^\star-1)t$. Consider the quadratic polynomial $p(x)=xk-\frac{1}{2}x(x-1)t$. Note that $p(x)$ achieves its maximum value at $x=\frac{k}{t}+\frac{1}{2}$. If we let $m^\star$ be the unique positive integer such that $\frac{k}{t}\leq m^\star<\frac{k}{t}+1$, then
	$$p(m^\star)\geq p\left(\frac{k}{t}\right)=\frac{1}{2}k\left(\frac{k}{t}+1\right)>n,$$
	as required.
	
	Next, suppose $\mathcal{H}$ is an at most $t$-intersecting, $k$-uniform set family with $|\mathcal{H}|\geq m'$. Let $A_1,\ldots, A_{m'}\in\mathcal{H}$ be distinct. Then
	$$n\geq\left|\bigcup_{i=1}^{m'} A_i\right|=\sum_{i=1}^{m'} \left|A_i\setminus\bigcup_{j=1}^{i-1}A_j\right|\geq\sum_{i=0}^{m'-1} (k-it)=m'k-\frac{1}{2}m'(m'-1)t,$$
	which is a contradiction. This proves that $|\mathcal{H}|\leq m'-1$.
	
	It remains to construct an at most $t$-intersecting, $k$-uniform set family $\mathcal{H}\subseteq 2^{[n]}$ with $|\mathcal{H}|=m'-1$. Let $m=m'-1$. The statement is trivial if $m=0$, so we may assume that $m\in\mathbb{Z}^+$. By the minimality of $m'$, we must have $n\geq mk-\frac{1}{2}m(m-1)t$. Let $k=\alpha t+\beta$ with $\alpha,\beta\in\mathbb{Z}$ and $0\leq\beta\leq t-1$. Define a set system $\mathcal{H}=\{A_1,\ldots,A_m\}$ as follows:
	$$A_i=\left\{(l,\{i,j\}):l\in[t], j\in[\alpha+1]\setminus\{i\}\right\}\cup\left\{(i,j):j\in[\beta]\right\}.$$
	It is clear, by construction, that $\mathcal{H}$ is at most $t$-intersecting and $k$-uniform. Furthermore, since $\alpha=\lfloor k/t\rfloor\geq m$, the number of elements in the universe of $\mathcal{H}$ is
	\begin{align*}
		&t\cdot\left|\{i,j\}:1\leq i<j\leq\alpha+1,\,i\leq m\right|+m\beta \\
		=\ &t\left(\binom{\alpha+1}{2}-\binom{\alpha+1-m}{2}\right)+m\beta \\
		=\ &t\left(m\alpha-\frac{1}{2}m(m-1)\right)+m\beta \\
		=\ &mk-\frac{1}{2}m(m-1)t.
	\end{align*} \qed
\end{proof}

\begin{proposition}
	\label{bound_for_small_n}
	Let $n,k,t\in\mathbb{Z}^+$ be such that $t\leq k\leq n$. 
	\begin{enumerate}[label = {(\alph*)}]
		\item If $n<\frac{1}{2}k\left(\frac{k}{t}+1\right)$, then
		$$\varpi(n,k,t)=\left\lfloor\frac{1}{2}+\frac{k}{t}-\sqrt{\left(\frac{1}{2}+\frac{k}{t}\right)^2-\frac{2n}{t}}\right\rfloor.$$
		\item If $t\mid k$ and $n=\frac{1}{2}k\left(\frac{k}{t}+1\right)$, then
		$$\varpi(n,k,t)=\frac{k}{t}+1.$$
	\end{enumerate}
\end{proposition}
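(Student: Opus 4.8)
The plan is to derive both parts directly from Lemma \ref{bound_for_small_n_lemma}, which already establishes that $\varpi(n,k,t)=m'-1$ where $m'$ is the least positive integer satisfying $n < m'k - \tfrac{1}{2}m'(m'-1)t$. So the entire task reduces to finding a closed-form expression for $m'-1$ in terms of $n,k,t$, i.e.\ to solving an inequality involving the quadratic $p(x)=xk-\tfrac{1}{2}x(x-1)t$.

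For part (a), under the hypothesis $n<\tfrac12 k(\tfrac{k}{t}+1)$, Lemma \ref{bound_for_small_n_lemma} applies verbatim, so I would first rewrite the defining inequality $n < m'k - \tfrac{1}{2}m'(m'-1)t$ as a quadratic condition in $m'$. Rearranging gives $\tfrac{t}{2}(m')^2 - (k+\tfrac{t}{2})m' + n < 0$, i.e.\ $(m')^2 - (2k/t + 1)m' + 2n/t < 0$. The smallest positive integer $m'$ for which this \emph{fails} to hold on the way up (i.e.\ where $p$ first drops below $n$ past its peak, or the smallest $m'$ clearing the threshold) is governed by the roots of the quadratic. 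Since $p(x)$ peaks at $x=\tfrac12+\tfrac{k}{t}$ and $n$ lies below the maximum value $p(\tfrac12+\tfrac{k}{t})$, the relevant root is the smaller one, $x_- = \tfrac12 + \tfrac{k}{t} - \sqrt{(\tfrac12+\tfrac{k}{t})^2 - \tfrac{2n}{t}}$. The least integer $m'$ with $p(m')>n$ is then $\lfloor x_-\rfloor + 1$, so that $m'-1 = \lfloor x_-\rfloor$, which is exactly the claimed formula. I would verify this identification carefully by checking that $p(m)>n \iff m > x_-$ for $m$ in the range below the vertex, which follows since $p$ is increasing there.

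For part (b), the hypothesis $n=\tfrac12 k(\tfrac{k}{t}+1)$ sits exactly at the boundary excluded by Lemma \ref{bound_for_small_n_lemma}, so that lemma does not apply directly and I would redo the relevant estimates. With $t\mid k$, set $\kappa=k/t\in\mathbb{Z}^+$; then $p(\kappa)=p(\kappa+1)=\tfrac12 k(\kappa+1)=n$ by the symmetry of $p$ about its vertex $\kappa+\tfrac12$. The upper bound $\varpi\le \kappa+1$ follows from the same union-of-sets counting argument as in the lemma: any $\kappa+2$ distinct sets would force $n \ge p(\kappa+2) = \tfrac12 k(\kappa+1) - t < n$ (using that $p(\kappa+2)<p(\kappa+1)=n$ strictly), a contradiction. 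For the matching construction realizing size $\kappa+1$, I would adapt the explicit family from the lemma's proof with $m=\kappa+1$; since $\beta=0$ here (as $t\mid k$), the universe size computes to exactly $mk - \tfrac12 m(m-1)t = p(\kappa+1) = n$, so the construction fits inside $[n]$.

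The main obstacle will be the boundary bookkeeping in part (b): because $n$ equals $p(\kappa)=p(\kappa+1)$ exactly, I must be scrupulous about whether the union bound uses strict or non-strict inequalities, and confirm that $\kappa+1$ sets can be packed while $\kappa+2$ cannot. The arithmetic of the floor-and-square-root identity in part (a) is routine once the correct root is identified, but I would double-check the edge case where $x_-$ is itself an integer (i.e.\ $p(m')=n$ exactly), ensuring the strict inequality in the lemma's definition of $m'$ is handled consistently with the floor function.
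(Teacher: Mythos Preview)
Your part (a) is correct and is exactly the paper's approach: invoke Lemma \ref{bound_for_small_n_lemma} and identify $m'-1$ as $\lfloor x_-\rfloor$ where $x_-$ is the smaller root of $p(x)=n$.

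Your part (b) contains a genuine gap in the upper bound. You write that ``any $\kappa+2$ distinct sets would force $n \ge p(\kappa+2) = \tfrac12 k(\kappa+1) - t < n$, a contradiction.'' But there is no contradiction here: the union-of-sets estimate gives $n \ge p(\kappa+2)$, and $p(\kappa+2)=n-t<n$, so the inequality $n\ge n-t$ is simply true. The union bound from the lemma stops being useful once $m'$ passes the vertex of $p$, because the increments $k-(i-1)t$ become small (eventually nonpositive), and the estimate $|\bigcup A_i|\ge p(m')$ no longer forces anything. This is precisely why the paper does \emph{not} reuse the lemma's counting at the boundary case $n=\tfrac12 k(\kappa+1)$.

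The paper's argument for the upper bound in (b) is genuinely different. Assuming $|\mathcal{H}|\ge\kappa$, it takes any three sets $A_1,A_2,A_3\in\mathcal{H}$, extends them to $\kappa$ distinct sets, and runs the union bound with $m=\kappa$; since $p(\kappa)=n$, equality must hold throughout, which forces $|A_3\setminus(A_1\cup A_2)|=k-2t$ and hence $A_1\cap A_2\cap A_3=\varnothing$. Thus every element of $[n]$ lies in at most two sets of $\mathcal{H}$, and double counting incidences gives $k|\mathcal{H}|\le 2n$, i.e.\ $|\mathcal{H}|\le\kappa+1$. Your construction for the lower bound (the lemma's family with $m=\kappa+1$, $\beta=0$) is fine and coincides with the paper's.
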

\begin{proof}
	\begin{enumerate}[label = {(\alph*)}]
		\item Note that $m=\left\lfloor\frac{1}{2}+\frac{k}{t}-\sqrt{\left(\frac{1}{2}+\frac{k}{t}\right)^2-\frac{2n}{t}}\right\rfloor$ satisfies $n\geq mk-\frac{1}{2}m(m-1)t$ and $m'=m+1$ satisfies $n< m'k-\frac{1}{2}m'(m'-1)t$; hence, the result follows immediately from Lemma \ref{bound_for_small_n_lemma}.
		\item Let $\mathcal{H}\subseteq 2^{[n]}$ be an at most $t$-intersecting, $k$-uniform set family. We may assume that $|\mathcal{H}|\geq\frac{k}{t}$. We will first show that any three distinct sets in $\mathcal{H}$ have empty intersection. Let $A_1$, $A_2$ and $A_3$ be any three distinct sets in $\mathcal{H}$, and let $A_4,\,\ldots,\,A_{\frac{k}{t}}\in\mathcal{H}$ be such that the $A_i$'s are all distinct. Then
		$$\left|\bigcup_{i=1}^{\frac{k}{t}} A_i\right|=\sum_{i=1}^{\frac{k}{t}} \left|A_i\setminus\bigcup_{j=1}^{i-1}A_j\right|\geq\sum_{i=0}^{\frac{k}{t}-1} (k-it)=\frac{1}{2}k\left(\frac{k}{t}+1\right)=n,$$
		and thus we have must equality everywhere. In particular, we obtain $|A_3\setminus (A_1\cup A_2)|=k-2t$, which together which the fact that $\mathcal{H}$ is at most $t$-intersecting, implies that $A_1\cap A_2\cap A_3=\varnothing$, as claimed. Therefore, every $x\in[n]$ lies in at most $2$ sets in $\mathcal{H}$. Now,
		$$|\mathcal{H}|\cdot k=|(A,x):A\in\mathcal{H},\ x\in A|\leq 2n\implies |\mathcal{H}|\leq\frac{2n}{k}=\frac{k}{t}+1,$$
		proving the first statement.
		
		Next, we shall exhibit an at most $t$-intersecting, $k$-uniform set family $\mathcal{H}\subseteq 2^{[n]}$, where $n=\frac{1}{2}k\left(\frac{k}{t}+1\right)$, with $|\mathcal{H}|=\frac{k}{t}+1$. Let $\mathcal{H}=\{A_1,\ldots,A_{\frac{k}{t}+1}\}$ with
		$$A_i=\left\{(l,\{i,j\}):l\in[t], j\in\left[\frac{k}{t}+1\right]\setminus\{i\}\right\}.$$
		It is clear that $\mathcal{H}$ is exactly $t$-intersecting and $k$-uniform, and that it is defined over a universe of $t\cdot\dbinom{k/t+1}{2}=n$ elements.  \qed
	\end{enumerate}
\end{proof}

\begin{remark}
	The condition $n<\frac{1}{2}k\left(\frac{k}{t}+1\right)$ in Proposition \ref{bound_for_small_n}(a) is necessary. Indeed, if $n=\dfrac{1}{2}k\left(\dfrac{k}{t}+1\right)$, then $$\left\lfloor\frac{1}{2}+\frac{k}{t}-\sqrt{\left(\frac{1}{2}+\frac{k}{t}\right)^2-\frac{2n}{t}}\right\rfloor=\frac{k}{t}<\frac{k}{t}+1.$$
\end{remark}

Next, we examine the case where $n=\frac{1}{2}k\left(\frac{k}{t}+1\right)+1$. Unlike earlier cases, we do not have exact bounds for this case. But what is perhaps surprising is that, for certain $k$ and $t$, the addition of a single element to the universe set can increase the maximum size of the set family by $3$ or more.

\begin{proposition}
	\label{bound_beyond_small_n}
	Let $n,k,t\in\mathbb{Z}^+$ be such that $t\leq k\leq n$ and $t\mid k$. If $n=\frac{1}{2}k\left(\frac{k}{t}+1\right)+1$, then
	$$\varpi(n,k,t)\leq\frac{\frac{k}{t}+1}{1-\frac{k}{n}}=\left(\frac{k^2+kt+2t}{k^2-kt+2t}\right)\left(\frac{k}{t}+1\right).$$
\end{proposition}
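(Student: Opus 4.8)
The plan is to bound $m:=|\mathcal{H}|$ through a double-counting argument on element degrees, finished off by Cauchy--Schwarz (equivalently, convexity). For each $x\in[n]$, let $d_x$ be the number of sets of $\mathcal{H}$ containing $x$. Counting incidences gives $\sum_{x\in[n]}d_x=mk$, while counting, for each element, the pairs of sets meeting at it yields $\sum_{x\in[n]}\binom{d_x}{2}=\sum_{\{A,B\}\subseteq\mathcal{H}}|A\cap B|\le t\binom{m}{2}$, where the inequality is precisely the at-most-$t$-intersecting hypothesis.

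Next I would apply Cauchy--Schwarz over the $n$ coordinates, $\sum_x d_x^2\ge\tfrac1n\big(\sum_x d_x\big)^2=\tfrac{(mk)^2}{n}$, and rewrite $\sum_x d_x^2=2\sum_x\binom{d_x}{2}+\sum_x d_x$. Substituting the two identities above gives $\tfrac{(mk)^2}{n}\le tm(m-1)+mk$, and after dividing by $m$ and clearing denominators this collapses to the single clean inequality $m(k^2-nt)\le n(k-t)$.

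The remaining work is to convert this into the stated bound. Using $t\mid k$ and $n=\tfrac12 k(\tfrac{k}{t}+1)+1$, a short computation shows $k^2-nt=\tfrac{t}{2}\big(\tfrac{k^2}{t}-k-2\big)$, which is strictly positive exactly when $k(\tfrac{k}{t}-1)>2$; in that generic regime one gets $m\le\tfrac{n(k-t)}{k^2-nt}$, and it then remains to verify the elementary inequality $\tfrac{n(k-t)}{k^2-nt}\le\tfrac{(k/t+1)\,n}{\,n-k\,}$, i.e.\ $(k-t)(n-k)\le(\tfrac{k}{t}+1)(k^2-nt)$, which after substituting the value of $n$ reduces to $k-t\ge 2$ — a condition automatically implied by $k^2>nt$ in this setting. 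The two displayed closed forms then coincide once one substitutes $k=t\cdot(k/t)$ and simplifies $\tfrac{n}{n-k}$.

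The main obstacle is the handful of degenerate parameter choices where $k^2-nt\le 0$, since there Cauchy--Schwarz is vacuous ($m(k^2-nt)\le 0\le n(k-t)$ gives no bound on $m$). A quick case check shows this occurs only for $k=t$ (where the at-most-$t$-intersecting condition is automatic, so $\mathcal{H}$ can be all $k$-subsets, but $n=k+1$ forces $\binom{n}{k}=n$, well within the bound) and for $(k,t)=(2,1)$ with $n=4$ (where $\varpi=\binom{4}{2}=6$ meets the bound with equality). Disposing of these two cases directly completes the proof. I note that the Cauchy--Schwarz estimate $\tfrac{n(k-t)}{k^2-nt}$ is in fact strictly smaller than the stated $\tfrac{k/t+1}{1-k/n}$ whenever $k-t>2$, so the proposition is not tight; the cleaner closed form is bought at the price of some slack.
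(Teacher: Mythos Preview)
Your argument is correct and complete, including the careful handling of the degenerate cases $k=t$ and $(k,t)=(2,1)$. However, it follows a genuinely different route from the paper. The paper's proof is an element-deletion argument: pick an element $x$ of minimum degree (at most $\lfloor km/n\rfloor$ by pigeonhole), delete the sets through $x$, and observe that the remaining family lives on a universe of size $n-1=\tfrac12 k(\tfrac{k}{t}+1)$; then invoke the exact value $\varpi(n-1,k,t)=\tfrac{k}{t}+1$ from the preceding proposition to get $m-\lfloor km/n\rfloor\le \tfrac{k}{t}+1$, which rearranges directly to the stated bound. Your approach instead uses the Fisher-type double count with Cauchy--Schwarz, which is exactly the (real-relaxed) content of the paper's later Proposition~\ref{larger_n}; it is more self-contained (no appeal to the $n-1$ case) and, as you observe, actually yields the sharper bound $m\le \tfrac{n(k-t)}{k^2-nt}$ before you give away slack to match the cleaner closed form. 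The paper's route, on the other hand, explains \emph{why} the particular expression $\tfrac{k/t+1}{1-k/n}$ arises: it is literally the $n{-}1$ bound amplified by the deletion step.
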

\begin{proof}
	Let $\mathcal{H}\subseteq 2^{[n]}$ be an at most $t$-intersecting and $k$-uniform family. There exists some element $x\in[n]$ such that $x$ is contained in at most $\lfloor\frac{k|\mathcal{H}|}{n}\rfloor$ sets in $\mathcal{H}$. We construct a set family $\mathcal{H}'\subseteq 2^{[n]\setminus\{x\}}$ by taking those sets in $\mathcal{H}$ that do not contain $x$. Since $\mathcal{H}'$ is defined over a universe of $\frac{1}{2}k\left(\frac{k}{t}+1\right)$ elements, we obtain the following by applying Proposition \ref{bound_for_small_n}:
	\begin{align*}
		|\mathcal{H}|-\left\lfloor\frac{k|\mathcal{H}|}{n}\right\rfloor\leq|\mathcal{H}'|\leq\frac{k}{t}+1&\implies\left\lceil|\mathcal{H}|-\frac{k|\mathcal{H}|}{n}\right\rceil\leq\frac{k}{t}+1 \\
		&\implies|\mathcal{H}|-\frac{k|\mathcal{H}|}{n}\leq\frac{k}{t}+1 \\
		&\implies|\mathcal{H}|\leq\frac{\frac{k}{t}+1}{1-\frac{k}{n}}.
	\end{align*}\qed
\end{proof}

\begin{remark}\label{FanoRemark}
	\begin{enumerate}[label = {(\alph*)}]
		\item If $k=3$, $t=1$, and $n=\frac{1}{2}k\left(\frac{k}{t}+1\right)+1=7$, then the bound in the Proposition \ref{bound_beyond_small_n} states that $\varpi(n,k,t)\leq\left(\frac{k^2+kt+2t}{k^2-kt+2t}\right)\left(\frac{k}{t}+1\right)=7$. The Fano plane, depicted in \Cref{fano_plane}, is an example of a $3$-uniform family of size $7$, defined over a universe of $7$ elements, that is exactly $1$-intersecting. Thus, the bound in \Cref{bound_beyond_small_n} can be achieved, at least for certain choices of $k$ and $t$. An interesting side note: Fano plane has applications/relations to integer factorization \cite{GuyR[75],Leh2[74],Shanks[85]} and octonians \cite{HansFre[85],MicMJ[22],MicMar[22],Baez[02]}, both of which have direct applications to cryptography \cite{Wmull[16],Zake[10],Lipi[21],Neal[00],SongYan[08],RSA[78]}. 
		\begin{figure}[H]
			\centering
			\includegraphics[scale=.5]{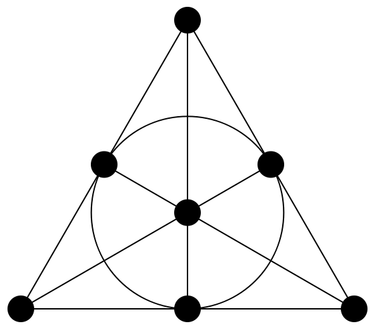}
			\caption{The Fano plane}
			\label{fano_plane}
		\end{figure}
		\item Note that
		$$\left(\frac{k^2+kt+2t}{k^2-kt+2t}\right)\left(\frac{k}{t}+1\right)-\left(\frac{k}{t}+1\right)=\frac{2k^2+2kt}{k^2-kt+2t}.$$
		We can show that the above expression is (strictly) bounded above by $6$ (for $k\neq t$), with slightly better bounds for $t=1,\,2,\,3,\,4$. It follows that
		$$\varpi(n,k,t)\leq\begin{cases}
			\frac{k}{t}+4 & \text{if }t=1, \\
			\frac{k}{t}+5 & \text{if }t=2,\,3,\,4, \\ 
			\frac{k}{t}+6 & \text{if }t\geq 5.
		\end{cases}
		$$
		Furthermore, $\lim_{k\rightarrow\infty}\frac{2k^2+2kt}{k^2-kt+2t}=2$; thus, for fixed $t$, we have $\varpi(n,k,t)\leq\frac{k}{t}+3$ for large enough $k$.
	\end{enumerate}
\end{remark}

Next, we give a necessary condition for the existence of at most $t$-intersecting and $k$-uniform families $\mathcal{H}\subseteq 2^{[n]}$, which implicitly gives a bound on $\varpi(n,k,t)$.

\begin{proposition}
	\label{larger_n}
	Let $n,k,t\in\mathbb{Z}^+$ satisfy $t\leq k\leq n$, and $\mathcal{H}\subseteq 2^{[n]}$ be an at most $t$-intersecting and $k$-uniform family with $|\mathcal{H}|=m$. Then
	$$(n-r)\left\lfloor\frac{km}{n}\right\rfloor^2+r\left\lceil\frac{km}{n}\right\rceil^2\leq (k-t)m+tm^2$$
	where $r=km-n\left\lfloor\frac{km}{n}\right\rfloor$.
\end{proposition}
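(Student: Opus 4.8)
The plan is to run a double-counting argument on the element degrees and then invoke the convexity of the squaring function, in its integer-optimized form, to produce the stated lower bound. For each element $x\in[n]$, let $d_x$ denote the number of sets in $\mathcal{H}$ that contain $x$. Counting the incidences $(x,A)$ with $x\in A$ in two ways, and using $k$-uniformity together with $|\mathcal{H}|=m$, gives
$$\sum_{x\in[n]} d_x=km.$$

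Next I would count, in two ways, the triples $(x,\{A,B\})$ where $A\neq B$ and $x\in A\cap B$. Summing over elements yields $\sum_{x}\binom{d_x}{2}$, while summing over unordered pairs of distinct sets yields $\sum_{\{A,B\}}|A\cap B|$. Since $\mathcal{H}$ is at most $t$-intersecting, every such intersection has size at most $t$, so this count is at most $t\binom{m}{2}$. Writing $\binom{d_x}{2}=\frac{1}{2}(d_x^2-d_x)$ and substituting $\sum_x d_x=km$, a short rearrangement yields the key inequality
$$\sum_{x\in[n]} d_x^2\leq (k-t)m+tm^2,$$
which is exactly the right-hand side of the target.

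It then remains to bound $\sum_x d_x^2$ from below by the stated left-hand side. The $d_x$ are non-negative integers summing to $km$ over $n$ terms, so I would minimize $\sum_x d_x^2$ subject to these constraints. By convexity, the sum of squares is minimized when the degrees are as equal as possible: writing $km=qn+r$ with $q=\lfloor km/n\rfloor$ and $r=km-n\lfloor km/n\rfloor$, the minimizing configuration assigns $r$ of the degrees the value $q+1=\lceil km/n\rceil$ and the remaining $n-r$ degrees the value $q$. This is justified by the standard smoothing argument: if some pair satisfies $d_i\geq d_j+2$, replacing $(d_i,d_j)$ by $(d_i-1,d_j+1)$ strictly decreases $\sum_x d_x^2$, so at the minimum all degrees lie within $1$ of each other. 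Evaluating the minimum gives
$$(n-r)\left\lfloor\frac{km}{n}\right\rfloor^2+r\left\lceil\frac{km}{n}\right\rceil^2\leq\sum_{x\in[n]} d_x^2,$$
and chaining this with the inequality from the previous paragraph completes the proof.

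The only delicate point is getting the integer minimization of $\sum_x d_x^2$ exactly right so that the floor and ceiling terms match the statement; the two double-counting steps are routine. A naive Cauchy--Schwarz bound $\sum_x d_x^2\geq (km)^2/n$ would be too weak to recover the floor/ceiling form, so the integrality of the degrees must be exploited directly through the smoothing argument.
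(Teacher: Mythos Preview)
Your proposal is correct and follows essentially the same approach as the paper: the paper uses the degree-multiplicity variables $\alpha_j=|\{x:d_x=j\}|$ in place of your individual degrees $d_x$, but the two double-counting identities and the smoothing minimization of $\sum d_x^2$ are identical in substance.
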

\begin{proof}
	Let $\alpha_j$ be the number of elements that is contained in exactly $j$ sets in $\mathcal{H}$. We claim that the following holds: 
	\begin{align}
		\sum_{j=0}^{m}\alpha_j&=n, \label{eq_1} \\
		\sum_{j=0}^{m}j\alpha_j&=km, \label{eq_2} \\
		\sum_{j=0}^{m}j(j-1)\alpha_j&\leq tm(m-1) \label{eq_3}.
	\end{align}
	(\ref{eq_1}) is immediate, (\ref{eq_2}) follows from double counting the set $\{(A,x):A\in\mathcal{H},\ x\in A\}$, and (\ref{eq_3}) follows from considering $\{(A,B,x):A,B\in\mathcal{H},\ A\neq B,\ x\in A\cap B\}$ and using the fact that $\mathcal{H}$ is at most $t$-intersecting. This proves the claim.
	
	Next, let us find non-negative integer values of $\alpha_0,\ldots,\alpha_m$ satisfying both (\ref{eq_1}) and (\ref{eq_2}) that minimize the expression $\sum_{j=0}^{m}j(j-1)\alpha_j$. Note that
	$$\sum_{j=0}^{m}j(j-1)\alpha_j=\sum_{j=0}^{m}(j^2\alpha_j-j\alpha_j)=\sum_{j=0}^{m}j^2\alpha_j-km.$$
	So, we want to minimize $\sum_{j=0}^{m}j^2\alpha_j$, subject to the restrictions (\ref{eq_1}) and (\ref{eq_2}). If $n\nmid km$, this is achieved by letting $\alpha_{\lfloor\frac{km}{n}\rfloor}=n-r$ and $\alpha_{\lceil\frac{km}{n}\rceil}=r$, with all other $\alpha_j$'s equal to $0$. If $n\mid km$, we let $\alpha_{\frac{km}{n}}=n$ with all other $\alpha_j$'s equal to $0$.
	
	Indeed, it is easy to see that the above choice of $\alpha_0,\ldots,\alpha_m$ satisfy both (\ref{eq_1}) and (\ref{eq_2}). Now, let $\alpha_0,\ldots,\alpha_m$ be some other choice of the $\alpha_j$'s that also satisfy both (\ref{eq_1}) and (\ref{eq_2}). We will show that the function $f(\alpha_0,\ldots,\alpha_m)=\sum_{j=0}^{m}j^2\alpha_j$ can be decreased with a different choice of $\alpha_0,\ldots,\alpha_m$.
	
	Suppose $\alpha_i\neq 0$ for some $i\neq\lfloor\frac{km}{n}\rfloor,\lceil\frac{km}{n}\rceil$, and assume that $i<\lfloor\frac{km}{n}\rfloor$ (the other case where $i>\lceil\frac{km}{n}\rceil$ is similar). Since the $\alpha_j$'s satisfy both (\ref{eq_1}) and (\ref{eq_2}), there must be some $i_1$ with $i_1\geq\lceil\frac{km}{n}\rceil$ (the inequality is strict if $n\mid km$) such that $\alpha_{i_1}\neq 0$. 	It follows that if we decrease $\alpha_i$ and $\alpha_{i_1}$ each by one, and increase $\alpha_{i+1}$ and $\alpha_{i_1-1}$ each by one, constraints (\ref{eq_1}) and (\ref{eq_2}) continue to be satisfied. Furthermore, considering that $i_1\geq\lfloor\frac{km}{n}\rfloor+1>i+1,$ we get:
	\begin{align*}
		f(&\alpha_1,\ldots,\alpha_i,\alpha_{i+1},\ldots,\alpha_{i_1-1},\alpha_{i_1},\ldots,\alpha_m)-f(\alpha_1,\ldots,\alpha_i-1,\alpha_{i+1}+1,\ldots,\alpha_{i_1-1}+1,\alpha_{i_1}-1,\ldots,\alpha_m) \\
		&=\ i^2-(i+1)^2-(i_1-1)^2+i_1^2 = 2i_1-2i-2>0.
	\end{align*}
	This proves the claim that the choice of $\alpha_{\lfloor\frac{km}{n}\rfloor}=n-r$ and $\alpha_{\lceil\frac{km}{n}\rceil}=r$ minimizes $f$.
	
	Therefore, we can only find non-negative integers $\alpha_0,\ldots,\alpha_m$ satisfying all three conditions above if and only if
	$$(n-r)\left\lfloor\frac{km}{n}\right\rfloor^2+r\left\lceil\frac{km}{n}\right\rceil^2-km\leq tm(m-1),$$
	as desired.  \qed
\end{proof}

\begin{remark}
	For fixed $k$ and $t$, if $n$ is sufficiently large, then the inequality in Proposition \ref{larger_n} will be true for all $m$. Thus, the above proposition is only interesting for values of $n$ that are not too large.
\end{remark}

\subsection{Asymptotic Bounds}
The study of Steiner systems has a long history, dating back to the 19th century work on triple block designs by Pl\"{u}cker \cite{Plucker[35]}, Kirkman \cite{Kirkman[47]}, Steiner \cite{Steiner[53]}, Reiss \cite{Reiss[59]}, Noether \cite{Noet[79]}, Netto \cite{Netto[93]}, Moore \cite{Moore[93]}, Sylvester \cite{Syl[93]}, Power \cite{Power[67]}, and Clebsch and Lindemann \cite{Ferdi[76]}. The term Steiner (triple) systems was coined in 1938 by Witt \cite{Witt[38]}. A Steiner system is defined as an arrangement of a set of elements in triples such that each pair of elements is contained in exactly one triple. The Fano plane --- discussed in \Cref{FanoRemark} and depicted in \Cref{fano_plane} --- represents a unique Steiner system of order 7; it can be seen as having 7 elements in 7 blocks of size 3 such that each pair of elements is contained in exactly 1 block. Steiner systems have strong connections to a wide range of topics, including statistics, linear coding, finite group theory, finite geometry, combinatorial design, experimental design, storage systems design, wireless communication, low-density parity-check code design, distributed storage, and batch codes. In cryptography, Steiner systems primarily have applications to (anonymous) secret sharing \cite{Stinson[87],Blundo[96]} and low-redundancy private information retrieval \cite{Fazeli[15]}. For a broader introduction to the topic, we refer the interested reader to~\cite{Wilson[03]} (also see \cite{Tri[99],Charles[06]}). 

In this section, we will see how at most $t$-intersecting families are related to Steiner systems. Using a result from Keevash \cite{Keevash[19]} about the existence of Steiner systems with certain parameters, we obtain an asymptotic bound on the maximum size of at most $t$-intersecting families.

\begin{definition}
	\emph{A Steiner system $S(t,k,n)$, where $t\leq k\leq n$, is a family $\mathcal{S}$ of subsets of $[n]$ such that
	\begin{enumerate}
		\item $|A|=k$ for all $A\in\mathcal{S}$,
		\item any $t$-element subset of $[n]$ is contained in exactly one set in $\mathcal{S}$.
	\end{enumerate}
	The elements of $\mathcal{S}$ are known as blocks.}
\end{definition}

From the above definition, it is clear that there exists a family $\mathcal{H}$ that achieves equality in Proposition \ref{simple_bound} if and only if $S(t+1,k,n)$ exists. It is easy to derive the following well known necessary condition for the existence of a Steiner system with given parameters:

\begin{proposition}
	If $S(t,k,n)$ exists, then $\binom{n-i}{t-i}$ is divisible by $\binom{k-i}{t-i}$ for all $0\leq i\leq t$, and the number of blocks in $S(t,k,n)$ is equal to $\binom{n}{t}/\binom{k}{t}$.
\end{proposition}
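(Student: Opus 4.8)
The plan is to prove both claims at once by a single double-counting argument applied to incidences between $i$-subsets, $t$-subsets, and blocks. Fix an integer $i$ with $0\leq i\leq t$ and fix any $i$-element subset $I\subseteq[n]$. I would count the collection of pairs $(T,B)$ where $T$ is a $t$-element subset of $[n]$ with $I\subseteq T$, and $B\in\mathcal{S}$ is a block with $T\subseteq B$.

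First I would count these pairs by fixing $T$. The number of $t$-subsets $T$ containing $I$ is $\binom{n-i}{t-i}$, obtained by choosing the remaining $t-i$ elements from the $n-i$ elements of $[n]\setminus I$. By the defining property of a Steiner system, each such $T$ lies in exactly one block $B$; hence the number of pairs is exactly $\binom{n-i}{t-i}$.

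Next I would count the same pairs by fixing $B$. Let $\lambda_i$ denote the number of blocks containing $I$. For each such block $B$ (which has size $k$ and contains $I$), the number of $t$-subsets $T$ with $I\subseteq T\subseteq B$ is $\binom{k-i}{t-i}$, again by choosing $t-i$ further elements, this time from the $k-i$ elements of $B\setminus I$. Since every $t$-subset $T\subseteq B$ has $B$ as its unique containing block, no pair is counted twice, so the number of pairs equals $\lambda_i\binom{k-i}{t-i}$.

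Equating the two counts gives $\binom{n-i}{t-i}=\lambda_i\binom{k-i}{t-i}$. As $\lambda_i$ is a non-negative integer, this shows $\binom{k-i}{t-i}$ divides $\binom{n-i}{t-i}$, establishing the divisibility claim for every $0\leq i\leq t$ (and, incidentally, that $\lambda_i$ does not depend on the chosen $I$). Specializing to $i=0$, so that $I=\varnothing$ and $\lambda_0$ is the total number of blocks, yields the block count $\binom{n}{t}/\binom{k}{t}$. The argument is entirely elementary; the only point needing care is the verification that no block is overcounted in the second enumeration, which follows immediately from the uniqueness clause in the definition of a Steiner system.
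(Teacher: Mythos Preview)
Your proof is correct and is the standard double-counting argument for this classical fact; the paper itself does not supply a proof, merely citing the result as well known, so there is nothing to compare against. One minor expository point: the uniqueness clause in the Steiner system definition is what justifies the \emph{first} count (each $T$ lies in exactly one block), whereas the second count needs no such justification---summing over distinct blocks $B\supseteq I$ and, for each, over distinct $T$ with $I\subseteq T\subseteq B$ already enumerates each pair once by construction.
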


In 2019, Keevash \cite{Keevash[19]} proved the following result, providing a partial converse to the above, and answering in the affirmative a longstanding open problem in the theory of designs.

\begin{theorem}[\cite{Keevash[19]}]
	\label{keevash}
	For any $k,t\in\mathbb{Z}^+$ with $t\leq k$, there exists $n_0(k,t)$ such that for all $n\geq n_0(k,t)$, a Steiner system $S(t,k,n)$ exists if and only if
	$$\binom{k-i}{t-i}\text{ divides }\binom{n-i}{t-i}\quad\text{for all }i=0,1\ldots,t-1.$$
\end{theorem}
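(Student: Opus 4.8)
The plan is to separate the two directions, which are of wildly different difficulty. The forward (``only if'') direction is elementary and is essentially the content of the proposition immediately preceding this statement: assuming $S(t,k,n)$ exists, I would fix, for each $i$ with $0\le i\le t-1$, an arbitrary $i$-element subset $I\subseteq[n]$ and double-count the flags $(X,A)$ with $I\subseteq X\subseteq A$, $|X|=t$, and $A$ a block. Counting by blocks gives (number of blocks through $I$)$\cdot\binom{k-i}{t-i}$, while counting by $t$-sets gives $\binom{n-i}{t-i}$ (each $t$-set $X\supseteq I$ lies in exactly one block, which then automatically contains $I$). Hence the number of blocks through $I$ equals $\binom{n-i}{t-i}/\binom{k-i}{t-i}$, and since this is a cardinality it must be a non-negative integer, forcing $\binom{k-i}{t-i}\mid\binom{n-i}{t-i}$. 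This direction needs no new ideas.

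The reverse (``if'') direction is the substantive content, and it is precisely Keevash's celebrated existence theorem for designs. Here I would not attempt a self-contained argument but would cite \cite{Keevash[19]}, since a full proof is far beyond the scope of this paper. At the level of strategy, the approach is a randomized algebraic construction combined with absorption. One first builds most of the design by a random greedy (``nibble'') process that covers all but a vanishing fraction of the $t$-subsets of $[n]$; the difficulty is that the uncovered remainder need not itself decompose into $k$-element blocks, so the partial design cannot simply be completed in an ad hoc way. Keevash's innovation is to reserve in advance a highly structured, algebraically defined family of blocks --- an ``absorber'' --- whose flexibility allows the sparse, pseudorandom leftover to be swallowed, completing the partial design into a genuine $S(t,k,n)$.

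The main obstacle is therefore exactly this completion step: showing that the divisibility conditions, which are the only \emph{local} obstructions, suffice to guarantee that the \emph{global} leftover can be absorbed. This is where the heavy machinery lives (the randomized algebraic construction and the robustness analysis of the absorbing structure), and it explains why the theorem requires $n\ge n_0(k,t)$ rather than holding for every $n$ meeting the divisibility relations. For the purposes of this paper I would treat the full theorem as a black box, using only that its ``if'' direction supplies, for all large $n$ satisfying the divisibility constraints, a Steiner system $S(t+1,k,n)$ --- which is exactly what is needed to attain equality in Proposition~\ref{simple_bound} and thereby pin down the asymptotics of $\varpi(n,k,t)$.
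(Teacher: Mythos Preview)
Your proposal is correct and aligns with the paper's treatment: the paper does not prove this theorem at all but simply cites it as Keevash's result \cite{Keevash[19]}, using it purely as a black box to feed into Proposition~\ref{asymptotic_bound}. Your sketch of the elementary ``only if'' direction and your high-level description of Keevash's randomized algebraic absorption strategy are accurate and in fact go beyond what the paper itself provides.
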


Using this result, we will derive asymptotic bounds for the maximum size of an at most $t$-intersecting and $k$-uniform family.

\begin{proposition}
	\label{asymptotic_bound}
	Let $k,t\in\mathbb{Z}^+$ with $t<k$, and $C<1$ be any positive real number.
	\begin{enumerate}[label = {(\roman*)}]
	\item \label{i} There exists $n_1(k,t,C)$ such that for all integers $n\geq n_1(k,t,C)$, there is an at most $t$-intersecting and $k$-uniform family $\mathcal{H}\subseteq 2^{[n]}$ with
	$$|\mathcal{H}|\geq\frac{Cn^{t+1}}{k(k-1)\cdots(k-t)}.$$
	\item For all sufficiently large $n$,
	$$\frac{Cn^{t+1}}{k(k-1)\cdots(k-t)}\leq\varpi(n,k,t)<\frac{n^{t+1}}{k(k-1)\cdots(k-t)}.$$
	In particular,
	$$\varpi(n,k,t)\sim\frac{n^{t+1}}{k(k-1)\cdots(k-t)}.$$
	\end{enumerate}
\end{proposition}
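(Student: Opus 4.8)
The plan is to obtain the upper bound directly from Proposition~\ref{simple_bound} and the lower bound by exhibiting Steiner systems (via Theorem~\ref{keevash}) inside $2^{[n]}$.

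For the upper bound, Proposition~\ref{simple_bound} gives $\varpi(n,k,t)\leq \binom{n}{t+1}/\binom{k}{t+1}=\frac{n(n-1)\cdots(n-t)}{k(k-1)\cdots(k-t)}$. Since $n(n-1)\cdots(n-t)<n^{t+1}$ for every $n\geq 1$ (each factor after the first is strictly smaller than $n$), the strict inequality $\varpi(n,k,t)<\frac{n^{t+1}}{k(k-1)\cdots(k-t)}$ follows at once. This settles the right-hand inequality of part~(ii) with no hypothesis on the size of $n$.

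For part~\ref{i}, the key observation (recorded just before Theorem~\ref{keevash}) is that a Steiner system $S(t+1,k,m)$, viewed as a family of $k$-subsets of $[m]$, is at most $t$-intersecting (two distinct blocks cannot share a $(t+1)$-subset) and $k$-uniform, and it has exactly $\binom{m}{t+1}/\binom{k}{t+1}$ blocks, thereby achieving equality in Proposition~\ref{simple_bound}. Since $t<k$, Theorem~\ref{keevash} guarantees that $S(t+1,k,m)$ exists for all sufficiently large $m$ satisfying the divisibility conditions $\binom{k-i}{t+1-i}\mid\binom{m-i}{t+1-i}$ for $i=0,\ldots,t$. The main obstacle is that this only produces good families for the special values of $m$ admitting a Steiner system, whereas we need a bound for \emph{every} large $n$. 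I would overcome this by showing these special values have bounded gaps: each divisibility condition is a congruence on the integer polynomial $Q_i(m)=(m-i)(m-i-1)\cdots(m-t)$, and $Q_i(m)\bmod M$ is periodic in $m$ with period $M$ for any modulus $M$ (because $Q_i(m+M)\equiv Q_i(m)\pmod M$); hence the set $V$ of admissible $m$ is a nonempty union of residue classes modulo $D:=\mathrm{lcm}$ of the relevant moduli, so every interval of $D$ consecutive integers meets $V$.

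Granting this, for each large $n$ I would choose $m\in V$ with $n-D<m\leq n$ and $m$ above the Keevash threshold, take the corresponding Steiner system $S(t+1,k,m)$ as a family $\mathcal{H}\subseteq 2^{[m]}\subseteq 2^{[n]}$, and estimate its size as $\frac{m(m-1)\cdots(m-t)}{k(k-1)\cdots(k-t)}\geq\frac{(n-D)(n-D-1)\cdots(n-D-t)}{k(k-1)\cdots(k-t)}$. Since $D$ is a fixed constant, $\frac{(n-D)(n-D-1)\cdots(n-D-t)}{n^{t+1}}\to 1$ as $n\to\infty$, so for all $n$ beyond some threshold $n_1(k,t,C)$ the numerator exceeds $Cn^{t+1}$, proving part~\ref{i}. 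Finally, combining part~\ref{i} with the upper bound yields the two-sided estimate in part~(ii) for every $C<1$; letting $C\uparrow 1$ then gives $\varpi(n,k,t)\sim\frac{n^{t+1}}{k(k-1)\cdots(k-t)}$.
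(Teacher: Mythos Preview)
Your proposal is correct and follows essentially the same route as the paper: the upper bound via Proposition~\ref{simple_bound}, and the lower bound by embedding a Steiner system $S(t+1,k,m)$ with $m\leq n$ close to $n$ (guaranteed by Keevash) and comparing $\binom{m}{t+1}/\binom{k}{t+1}$ to $Cn^{t+1}/(k(k-1)\cdots(k-t))$. The only cosmetic difference is in the bounded-gap step: the paper exhibits explicitly that every $m\equiv t\pmod{k(k-1)\cdots(k-t)}$ satisfies the divisibility conditions, whereas you argue abstractly via periodicity of the polynomial congruences (you should note that $V$ is nonempty, e.g.\ because $m=k$ trivially satisfies all the conditions).
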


\begin{proof}
	\begin{enumerate}[label = {(\roman*)}]
	\item Let $t'=t+1$. By Theorem \ref{keevash}, there exists $n_0(k,t')$ such that for all $N\geq n_0(k,t')$, a Steiner system $S(t',k,N)$ exists if
	\begin{equation}
		\tag{$\ast$}
		\label{keevash_condition}
		\binom{k-i}{t'-i}\text{ divides }\binom{N-i}{t'-i}\quad\text{for all }i=0,1\ldots,t'-1.
	\end{equation}
	
	Suppose $n$ is sufficiently large. Let $n'\leq n$ be the largest integer such that (\ref{keevash_condition}) is satisfied with $N=n'$. Since
	\begin{align*}
	&\binom{k-i}{t'-i}\text{ divides }\binom{N-i}{t'-i} \\
	\iff\ &(k-i)\cdots(k-t'+1)\mid(N-i)\cdots(N-t'+1),
	\end{align*}
	all $N$ of the form $\lambda k(k-1)\cdots(k-t'+1)+t'-1$ with $\lambda\in\mathbb{Z}$ will satisfy (\ref{keevash_condition}). Hence,
	$n-n'\leq k(k-1)\cdots(k-t'+1)$.
	
	By our choice of $n'$, there exists a Steiner system $S(t',k,n')$, which is an at most $t$-intersecting and $k$-uniform set family, defined over the universe $[n']\subseteq [n]$, such that
	\begin{align*}
		|S(t',k,n')|&=\frac{\binom{n'}{t'}}{\binom{k}{t'}}=\frac{n'(n'-1)\cdots(n'-t'+1)}{k(k-1)\cdots(k-t'+1)} \\
		&\geq\frac{(n-\alpha)(n-\alpha-1)\cdots(n-\alpha-t'+1)}{k(k-1)\cdots(k-t'+1)},
	\end{align*}
	where $\alpha=\alpha(k,t')=k(k-1)\cdots(k-t'+1)$ is independent of $n$. Since $C<1$, there exists $n_2(k,t',C)$ such that for all $n\geq n_2(k,t',C)$,
	$$\frac{(n-\alpha)(n-\alpha-1)\cdots(n-\alpha-t'+1)}{n^{t'}}\geq C,$$
	from which it follows that
	$$|S(t',k,n')|\geq\frac{Cn^{t'}}{k(k-1)\cdots(k-t'+1)}=\frac{Cn^{t+1}}{k(k-1)\cdots(k-t)}$$
	for all sufficiently large $n$. From the above argument, we see that we can pick $$n_1(k,t,C)=\max\left(n_0(k,t')+\alpha(k,t'), n_2(k,t',C)\right). $$
	
	\item 	By Proposition \ref{simple_bound},
	$$\varpi(n,k,t)\leq\frac{\binom{n}{t+1}}{\binom{k}{t+1}}=\frac{n(n-1)\cdots(n-t)}{k(k-1)\cdots(k-t)}<\frac{n^{t+1}}{k(k-1)\cdots(k-t)}.$$
	The other half of the inequality follows immediately from \ref{i}.  \qed
	\end{enumerate}
\end{proof}

\begin{proposition}
	\label{asymptotic_bound_for_maximally_cover_free}
	Let $k,t\in\mathbb{Z}^+$ with $t<k-1$, and $C<1$ be any positive real number. Then for all sufficiently large $N$,
	\begin{enumerate}[label = {(\roman*)}]
		\item there exists a maximally cover-free, at most $t$-intersecting and $k$-uniform family $\mathcal{H}\subseteq 2^{[N]}$ with $|\mathcal{H}|\geq CN$,
		\item the maximum size $\nu(N,k,t)$ of a maximally cover-free, at most $t$-intersecting and $k$-uniform family $\mathcal{H}\subseteq 2^{[N]}$ satisfies
		$$CN\leq\nu(N,k,t)<N.$$
	\end{enumerate}
\end{proposition}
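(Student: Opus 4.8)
The plan is to reduce this statement about maximally cover-free families directly to Proposition~\ref{asymptotic_bound}, which already provides asymptotic bounds for the (non-cover-free) at most $t$-intersecting and $k$-uniform case. The bridge is Corollary~\ref{maximally_cover_free_corollary}: a maximally cover-free, at most $t$-intersecting, $k$-uniform family $\mathcal{H}\subseteq 2^{[N]}$ of size $m$ exists if and only if there exists an at most $t$-intersecting, $(k-1)$-uniform family $\mathcal{G}\subseteq 2^{[N-m]}$ of size $m$. Since we are assuming $t<k-1$, the parameter $k-1$ still satisfies $t<k-1$, so Proposition~\ref{asymptotic_bound}\ref{i} applies with $k$ replaced by $k-1$.

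First I would prove part (i). Fix a positive real $C<1$ and pick an auxiliary constant $C'$ with $C<C'<1$. By Proposition~\ref{asymptotic_bound}\ref{i} (applied with uniformity $k-1$ and constant $C'$), for all sufficiently large $n$ there is an at most $t$-intersecting, $(k-1)$-uniform family $\mathcal{G}\subseteq 2^{[n]}$ with $|\mathcal{G}|\geq C'n^{t+1}/\bigl((k-1)(k-2)\cdots(k-1-t)\bigr)$. The goal is to feed such a $\mathcal{G}$ into $\bar{g}$ from Proposition~\ref{maximally_cover_free} to produce a maximally cover-free family in $2^{[N]}$ with $N=n+|\mathcal{G}|$ and size $|\mathcal{G}|$. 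The key point is that $|\mathcal{G}|$ grows polynomially of degree $t+1\geq 2$ in $n$, so $|\mathcal{G}|$ dominates $n$; hence $N=n+|\mathcal{G}|$ is asymptotically $|\mathcal{G}|(1+o(1))$, which forces $|\mathcal{H}|=|\mathcal{G}|\geq CN$ once $n$ (and therefore $N$) is large enough. The main obstacle here is bookkeeping: Corollary~\ref{maximally_cover_free_corollary} takes $N$ and $m$ as inputs and relates them to the universe size $N-m$ of $\mathcal{G}$, whereas Proposition~\ref{asymptotic_bound} naturally produces $\mathcal{G}$ over a prescribed universe $[n]$ and outputs its size. So I would parametrize by $n$ (the universe size of $\mathcal{G}$), set $m=|\mathcal{G}|$, and define $N=n+m$; I must then verify that as $n$ ranges over all sufficiently large integers, the resulting values of $N$ eventually cover every sufficiently large integer—or, more carefully, argue that for every sufficiently large target $N$ one can select a suitable $n$. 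Because $|\mathcal{G}|$ can be taken to grow without the universe being ``tight'' (one may always discard blocks to shrink the family while keeping it at most $t$-intersecting and $(k-1)$-uniform), I can tune $m$ downward to hit the required $N=n+m$ exactly, which resolves this coverage issue.

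For part (ii), the lower bound $CN\leq\nu(N,k,t)$ is immediate from part (i), since $\nu(N,k,t)$ is by definition the maximum size of such a family and part (i) exhibits one of size at least $CN$. The upper bound $\nu(N,k,t)<N$ is precisely the observation recorded just after the definition of maximally cover-free families: every $A\in\mathcal{H}$ must contain a private element $x_A$ not lying in any other member, the $x_A$ are distinct, and each consumes a distinct point of $[N]$, so $|\mathcal{H}|\leq N$; strictness follows because a $k$-uniform set with $k\geq 2$ cannot consist solely of its private element, so the points are not exhausted by private elements alone (equivalently, one invokes the sharper bound $\nu(N,k,t)\leq N-k+1$ from the maximally-cover-free discussion). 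I expect the genuinely delicate step to be the asymptotic coverage argument in part (i)—ensuring that the construction yields families over the universe $[N]$ for \emph{all} sufficiently large $N$, not merely for a sparse sequence of $N$'s arising from the Steiner-system parameters—since Keevash's theorem (Theorem~\ref{keevash}), which underlies Proposition~\ref{asymptotic_bound}, only guarantees Steiner systems for universe sizes satisfying the divisibility conditions. This is exactly why Proposition~\ref{asymptotic_bound}\ref{i} already absorbs a loss (passing from the nearest admissible $n'$ to $n$ costs a bounded additive term $\alpha(k,t')$), and I would lean on that robustness rather than re-deriving it.
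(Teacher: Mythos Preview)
Your proposal is correct and follows essentially the same route as the paper: reduce via the bijection of Proposition~\ref{maximally_cover_free}/Corollary~\ref{maximally_cover_free_corollary} to Proposition~\ref{asymptotic_bound} applied with uniformity $k-1$, introduce an intermediate constant $C'$ (the paper calls it $C_0$) to absorb the $o(1)$ loss, and then argue that the resulting construction covers all sufficiently large $N$. The only cosmetic difference is in the coverage step: the paper fixes, for each target $N$, the largest $n$ with $n+\tfrac{n^{t+1}}{(k-1)\cdots(k-t-1)}\leq N$ and bounds directly, whereas you propose to fix $n$ and then prune $\mathcal{G}$ to hit $N=n+m$ exactly---both work and encode the same asymptotic estimate that $m\sim N$ since $t+1\geq 2$.
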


\begin{proof}
	We note that (ii) follows almost immediately from (i). So, we prove (i).
	
	Fix $C_0$ such that $C<C_0<1$. It follows from Propositions \ref{maximally_cover_free} and \ref{asymptotic_bound} that for all integers $n\geq n_1(k-1,t,C_0)$, there exists a maximally cover-free, at most $t$-intersecting and $k$-uniform family $\mathcal{H}\subseteq 2^{\left[n+\frac{n^{t+1}}{(k-1)(k-2)\cdots(k-t-1)}\right]}$ with
	$$|\mathcal{H}|\geq\frac{C_0n^{t+1}}{(k-1)(k-2)\cdots(k-t-1)}.$$
	
	Since $C<C_0$, there exist $\delta>1$ and $\varepsilon>0$ such that $C_0>\delta(1+\varepsilon)C$. Given $N$, let $n\in\mathbb{Z}^+$ be maximum such that
	$$n+\frac{n^{t+1}}{(k-1)(k-2)\cdots(k-t-1)}\leq N.$$
	Assume that $N$ is sufficiently large so that $n\geq n_1(k-1,t,C_0)$. Then, by the above, there is a maximally cover-free, at most $t$-intersecting and $k$-uniform family $\mathcal{H}\subseteq 2^{[N]}$ so that
	$$|\mathcal{H}|\geq\frac{C_0n^{t+1}}{(k-1)(k-2)\cdots(k-t-1)}.$$
	Since $n$ is maximal, we have
	$$N<(n+1)+\frac{(n+1)^{t+1}}{(k-1)(k-2)\cdots(k-t-1)}.$$
	If $N$ (and thus $n$) is sufficiently large such that
	$$(n+1)<\frac{\varepsilon(n+1)^{t+1}}{(k-1)(k-2)\cdots(k-t-1)}\quad\text{and}\quad\left(1+\frac{1}{n}\right)^{t+1}<\delta,$$
	then
	$$N<\frac{(1+\varepsilon)(n+1)^{t+1}}{(k-1)(k-2)\cdots(k-t-1)}<\frac{\delta(1+\varepsilon)n^{t+1}}{(k-1)(k-2)\cdots(k-t-1)}$$
	and it follows that
	$$|\mathcal{H}|\geq\frac{C_0n^{t+1}}{(k-1)(k-2)\cdots(k-t-1)}>\frac{C_0N}{\delta(1+\varepsilon)}>CN.$$ \qed
\end{proof}

\subsection{An Explicit Construction}
While Proposition \ref{asymptotic_bound} provides an answer for the maximum size of an at most $t$-intersecting and $k$-uniform family for large enough $n$, we cannot explicitly construct such set families since Theorem \ref{keevash} (and hence Proposition \ref{asymptotic_bound}) is nonconstructive. In this section, we establish a method that explicitly constructs set families with larger parameters from set families with smaller parameters.

Fix a positive integer $t$. For an at most $t$-intersecting and $k$-uniform family $\mathcal{H}\subseteq 2^{[n]}$, define $$s(\mathcal{H})=\frac{k|\mathcal{H}|}{n}$$ as the ``relative size'' of $\mathcal{H}$ with respect to the parameters $k$ and $n$. Note that the maximum possible value of $|\mathcal{H}|$ should increase with larger $n$ and decrease with larger $k$, hence $s(\mathcal{H})$ is a reasonable measure of the ``relative size'' of $\mathcal{H}$.

The following result shows that it is possible to construct a sequence of at most $t$-intersecting and $k_j$-uniform families $\mathcal{H}\subseteq 2^{[n_j]}$, where $k_j\rightarrow\infty$, such that all set families in the sequence have the same relative size.

\begin{proposition}
	Let $\mathcal{H}\subseteq 2^{[n]}$ be an at most $t$-intersecting and $k$-uniform family. Then there exists a sequence of set families $\mathcal{H}_j$ such that
	\begin{enumerate}[label = {(\alph*)}]
		\item $\mathcal{H}_j$ is an at most $t$-intersecting and $k_j$-uniform set family,
		\item $s(\mathcal{H}_j)=s(\mathcal{H})$ for all $j$,
		\item $\lim_{j\rightarrow\infty}k_j=\infty$.
	\end{enumerate}
\end{proposition}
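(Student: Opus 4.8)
The plan is to reduce the statement to a single, parameter-free problem: realizing a fixed rational number as the relative size of at most $t$-intersecting families of unbounded uniformity. First I would record that $s(\mathcal{H})=k|\mathcal{H}|/n$ is a fixed positive rational; write it in lowest terms as $s=p/q$ with $p,q\in\mathbb{Z}^+$. Observe that only this number, and not the internal structure of $\mathcal{H}$, needs to be preserved (the proposition merely asserts existence of the sequence), so it suffices to produce, for arbitrarily large uniformity, an at most $t$-intersecting family whose relative size equals $p/q$ exactly. This reframing matters, because one cannot hope to enlarge the uniformity of a \emph{given} family while keeping both $t$ and $s$ fixed: any ``product'' that thickens sets along shared elements multiplies the intersection bound, while padding each set with private elements forces $s\to 1$.

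For the base construction I would fix a large prime power $b$ and work in the grid universe $\mathbb{F}_b\times Y$, where $Y\supseteq\mathbb{F}_b$ is a finite set whose size is chosen later. To each polynomial $f\in\mathbb{F}_b[X]$ of degree at most $t$ I associate its graph $G_f=\{(x,f(x)):x\in\mathbb{F}_b\}$, a set of size exactly $b$. Since two distinct polynomials of degree at most $t$ agree on at most $t$ points, $|G_f\cap G_g|\le t$ for $f\neq g$; hence any subfamily of these graphs is at most $t$-intersecting and $b$-uniform. Crucially, there are $b^{t+1}$ such polynomials, so I may select any number $m\le b^{t+1}$ of them, and the points of $Y$ outside $\mathbb{F}_b$ lie in no graph, so they inflate the universe without touching either the uniformity or the intersection pattern.

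The remaining step is to tune the two free parameters — the number $m$ of selected graphs and the size $|Y|$ — so that the relative size is exactly $p/q$. With $k=b$, $n=b\cdot|Y|$ and family size $m$, the relative size equals $bm/(b|Y|)=m/|Y|$, so I would set $|Y|=qr$ and $m=pr$ for a suitable positive integer $r$, giving relative size exactly $p/q$. The constraints are $|Y|=qr\ge b$ (so that $\mathbb{F}_b$ embeds into $Y$) and $m=pr\le b^{t+1}$ (so that enough polynomials exist), i.e. $b/q\le r\le b^{t+1}/p$. Since $t\ge 1$, for all sufficiently large $b$ one has $pb\le qb^{t+1}$, so this interval is nonempty and contains an integer $r$; I pick one. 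Relabelling the universe as $[n_j]$ and taking $b=b_j\to\infty$ along prime powers (with the corresponding $r_j$) yields families $\mathcal{H}_j$ that are at most $t$-intersecting, $k_j$-uniform with $k_j=b_j\to\infty$, and of relative size $s(\mathcal{H}_j)=p/q=s(\mathcal{H})$.

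I expect the only genuinely delicate point to be this last step: matching an \emph{arbitrary} rational relative size \emph{exactly} (not merely approximately) at unbounded uniformity. What makes it tractable is precisely that the polynomial construction decouples the uniformity $b$ from the supply of sets $b^{t+1}$ and from the universe size; the elementary estimate $b/q\le b^{t+1}/p$ for large $b$ then guarantees room to choose $r$. A secondary check, immediate from the definition of $G_f$, is that the padding points of $Y$ belong to no graph and hence neither change $k_j$ nor create new intersections, so the at most $t$-intersecting and $k_j$-uniform properties are inherited from the polynomial graphs alone.
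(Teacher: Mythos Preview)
Your argument is correct, but the route is genuinely different from the paper's.

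The paper builds the sequence \emph{recursively from the given family}. Setting $\mathcal{H}_1=\mathcal{H}$ and, given $\mathcal{H}_j\subseteq 2^{[n_j]}$ with $m=|\mathcal{H}_j|$, it takes $2m$ pairwise disjoint copies $\mathcal{G}^{(1)},\ldots,\mathcal{G}^{(m)},\mathcal{H}^{(1)},\ldots,\mathcal{H}^{(m)}$ of $\mathcal{H}_j$ and defines $A_{h,i}=B_h^{(i)}\sqcup C_i^{(h)}$ for $1\le h,i\le m$. Because the two halves of $A_{h_1,i_1}\cap A_{h_2,i_2}$ live in different copies unless the corresponding index matches, at most one half contributes and the intersection stays $\le t$. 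One then checks $k_{j+1}=2k_j$, $n_{j+1}=2mn_j$, $|\mathcal{H}_{j+1}|=m^2$, so $s(\mathcal{H}_{j+1})=s(\mathcal{H}_j)$.

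You instead discard $\mathcal{H}$ entirely, retain only the rational $s=p/q$, and realise it via Reed--Solomon graphs over $\mathbb{F}_b\times Y$, padding $Y$ to hit $p/q$ exactly. This is clean and gives more freedom in choosing $k_j$ (any large prime power, not just $2^jk$). The paper's approach, in turn, is field-free and shows something slightly stronger in spirit: the seed family itself can be iteratively amplified while preserving $s$. In particular, your motivational remark that ``one cannot hope to enlarge the uniformity of a given family while keeping both $t$ and $s$ fixed'' via a product is not quite right --- the paper's disjoint-copy product does precisely that --- though this does not affect the validity of your proof. Your feasibility check $b/q\le r\le b^{t+1}/p$ is fine since the section fixes $t\in\mathbb{Z}^+$, so $t\ge 1$.
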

\begin{proof}
	We will define the set families $\mathcal{H}_j$ inductively. Let $\mathcal{H}_1=\mathcal{H}$, and $\mathcal{H}_j\subseteq 2^{[n_j]}$ be an at most $t$-intersecting $k_j$-uniform family for some $j\in\mathbb{Z}^+$ such that $m=|\mathcal{H}_j|$. Consider set families $\mathcal{G}^{(1)},\ldots,\mathcal{G}^{(m)},\mathcal{H}^{(1)},\ldots,\mathcal{H}^{(m)}$, defined over disjoint universes such that each $\mathcal{G}^{(\ell)}$ (and similarly, each $\mathcal{H}^{(\ell)}$) is isomorphic to $\mathcal{H}_j$. Let
	$$\mathcal{G}^{(\ell)}=\{B_1^{(\ell)},\ldots,B_m^{(\ell)}\},\quad \mathcal{H}^{(\ell)}=\{C_1^{(\ell)},\ldots,C_m^{(\ell)}\}.$$
	For $1\leq h,i\leq m$, define the sets $A_{h,i}=B_h^{(i)}\sqcup C_i^{(h)}$, and let
	$$\mathcal{H}_{j+1}=\{A_{h,i}: 1\leq h,i\leq m\}.$$
	It is clear that $\mathcal{H}_{j+1}$ is a $2k_j$-uniform family defined over a universe of $2mn_j$ elements, and that $|\mathcal{H}_{j+1}|=m^2$. We claim that $\mathcal{H}_{j+1}$ is at most $t$-intersecting. Indeed, if $(h_1,i_1)\neq (h_2,i_2)$, then
	\begin{align*}
		|A_{h_1,i_1}\cap A_{h_2,i_2}|&=|(B_{h_1}^{(i_1)}\sqcup C_{i_1}^{(h_1)})\cap (B_{h_2}^{(i_2)}\sqcup C_{i_2}^{(h_2)})| \\
		&=|B_{h_1}^{(i_1)}\cap B_{h_2}^{(i_2)}|+|C_{i_1}^{(h_1)}\cap C_{i_2}^{(h_2)}| \\
		&=
		\begin{cases}
			|C_{i_1}^{(h_1)}\cap C_{i_2}^{(h_2)}|\leq t & \text{if }h_1=h_2\text{ and }i_1\neq i_2, \\
			|B_{h_1}^{(i_1)}\cap B_{h_2}^{(i_2)}|\leq t & \text{if }h_1\neq h_2\text{ and }i_1=i_2, \\
			0 & \text{if }h_1\neq h_2\text{ and }i_1\neq i_2.
		\end{cases}
	\end{align*}
	Finally,
	$$s(\mathcal{H}_{j+1})=\frac{k_{j+1}|\mathcal{H}_{j+1}|}{n_{j+1}}=\frac{2k_jm^2}{2mn_j}=\frac{k_j|\mathcal{H}_j|}{n_j}=s(\mathcal{H}_j).$$ \qed
\end{proof}

\begin{remark}
	In the above proposition, $n_j$, $k_j$, and $|\mathcal{H}_j|$ grow with $j$. Clearly, given a family $\mathcal{H}$, it is also possible to construct a sequence of set families $\mathcal{H}_j$ such that $s(\mathcal{H}_j)=s(\mathcal{H})$ for all $j$, where $n_j$ and $|\mathcal{H}_j|$ grow with $j$, while $k_j$ stays constant.
	
	It is natural to ask, therefore, if it is possible to construct a sequence of set families satisfying $s(\mathcal{H}_j)=s(\mathcal{H})$, where $n_j$ and $k_j$ grow with $j$, but $|\mathcal{H}_j|$ stays constant. In fact, this is not always possible. Indeed, let $\mathcal{H}$ be the Fano plane, then $t=1$, $n=7$, $k=3$, and $|\mathcal{H}|=7$. Note that $\mathcal{H}$ satisfies Proposition \ref{larger_n} with equality, i.e.,
	$$\frac{(k|\mathcal{H}|)^2}{n}=k|\mathcal{H}|+t(|\mathcal{H}|^2-|\mathcal{H}|).$$
	If we let $n'=\lambda n$ and $k'=\lambda k$ for some $\lambda>1$, then
	$$\frac{(k'|\mathcal{H}|)^2}{n'}=\lambda\frac{(k|\mathcal{H}|)^2}{n}=\lambda\left(k|\mathcal{H}|+t(|\mathcal{H}|^2-|\mathcal{H}|)\right)>k'|\mathcal{H}|+t(|\mathcal{H}|^2-|\mathcal{H}|).$$
	Hence, by Proposition \ref{larger_n}, there is no $k'$-uniform and at most $t$-intersecting family $\mathcal{H}'\subseteq 2^{[n']}$ such that $|\mathcal{H}'|=|\mathcal{H}|=7$.
\end{remark}

\section{Generating Rounded Gaussians from Physical Communications}\label{Sec6}
In this section, we describe our procedure, called \textit{Rounded Gaussians from Physical Communications (RGPC)}, that generates deterministic errors from a rounded Gaussian distribution --- which we later prove to be sufficiently independent in specific settings. RGPC is comprised of the following two subprocedures: 
\begin{itemize}
	\item Hypothesis generation: a protocol to generate a linear regression hypothesis from the training data, which, in our case, is comprised of the physical layer communications between participating parties.
	\item Rounded Gaussian error generation: this procedure allows us to use the linear regression hypothesis --- generated by using physical layer communications as training data --- to derive deterministic rounded Gaussian errors. The outcome of this procedure is that it samples from a rounded Gaussian distribution in a manner that is (pseudo)random to a PPT external/internal adversary but is deterministic to the authorized parties. 
\end{itemize}

\subsection{Setting and Central Idea}\label{broad}
For the sake of intelligibility, we begin by giving a brief overview of our central idea. Let there be a set of $n \geq 2$ parties, $\mathcal{P} = \{P_i\}_{i=1}^n$. All parties agree upon a function $f(x) = \beta_0 + \beta_1 x,$ with intercept $\beta_0 \leftarrow \mathbb{Z}$ and slope $\beta_1 \leftarrow \mathbb{Z}$. Let $\mathcal{H} \subseteq 2^{\mathcal{P}}$ be a family of sets such that each set $H_i \in \mathcal{H}$ forms a star graph $\partial_i$ wherein each party is connected to a central hub $C_i \notin H_i$ (for all $i \in [|\mathcal{H}|])$ via two channels: one Gaussian and another error corrected. If $\mathcal{H}$ is $k$-uniform and at most $t$-intersecting, then each star in the interconnection graph formed by the sets $H_i \in \mathcal{H}$ contains exactly $k$ members and $2k$ channels such that $|\partial_i \cap \partial_j| \leq t$. During the protocol, each party $P_j$ sends out message pairs of the form $x_j, f(x_j)$, where $x_j \leftarrow \mathbb{Z}$ and $f$ is a randomly selected function of specific type (more on this later), to the central hubs of all stars that it is a member of, such that:
\begin{itemize}
	\item $f(x)$ is sent over the Gaussian channel, 
	\item $x$ is sent over the error corrected channel.
\end{itemize}

For the sake of simplicity, we only consider a single star for analyses in this section. All arguments and analyses from this section naturally extend to the setting with multiple stars. Due to the guaranteed errors occurring in the Gaussian channel, the messages recorded at each central hub $C_i$ are of the form: $y = f(x) + \varepsilon_x,$ where $\varepsilon_x$ belongs to some Gaussian distribution $\mathcal{N}(0, \sigma^2)$ with mean zero and standard deviation $\sigma$. Recall that for most schemes, the value of $\sigma$ primarily relies on the smoothing parameter. In our experiments, which are discussed in \Cref{Sim}, we choose the range of $\sigma$ as $\sigma \in [10,300]$ --- which lies in the range specified by New Hope \cite{Alkim[16]}, BCNS \cite{WBos[15]}, and BLISS \cite{DucDur[13]}. Note that even though techniques such as Gaussian `convolutions' \cite{Pie[10]} can be used to shrink the range of $\sigma$ that is required for the security guarantees of the given cryptosystem \cite{Regazz[18]}, such techniques are not necessary for our solution as any range for $\sigma$ can be trivially realized by generating training data from a distribution with the suitable $\sigma$ value/range. The central hub, $C_i$, forwards $\{x, y\}$ to all parties over the respective error corrected channels in $\partial_i$. 

In our algorithm, we use least squares linear regression which aims to minimize the sum of the squared residuals. We know that the hypothesis generated by linear regression is of the form: $h(x) = \hat{\beta}_0 + \hat{\beta}_1 x$. Thus, the statistical error, with respect to our target function, comes out as: 
\begin{equation}\label{hypoEq}
	\bar{e}_x = |y - h(x)|.
\end{equation}
Due to the nature of the physical layer errors and independent channels, we know that the errors $\varepsilon_x$ are random and independent. Thus, it follows that for restricted settings, the error terms $\bar{e}_{x_i}$ and $\bar{e}_{x_j}$ are independent (with respect to a PPT adversary) for all $x_i \neq x_j$, and --- are expected to --- belong to a Gaussian distribution. Next, we round $\bar{e}_x$ to the nearest integer as: $e_x =  \lfloor \bar{e}_x \rceil$ to get the final error, $e_x$, which: 
\begin{itemize}
	\item is determined by $x$, 
	\item belongs to a rounded Gaussian distribution.
\end{itemize}

We know from~\cite{Reg[05],Gold[10],Duc[15],Hul[17]} that --- with appropriate parameters --- rounded Gaussians satisfy the hardness requirements for multiple LWE-based constructions. We are now ready to discuss RGPC protocol in detail.

\begin{note}\label{ImpNote}
With a sufficiently large number of messages, $f(x)$ can be very closely approximated by the linear regression hypothesis $h(x)$. Therefore, with a suitable choice of parameters, it is quite reasonable to expect that the error distribution is Gaussian (which is indeed the case --- see \Cref{Lemma1}, where we use drowning/smudging to argue about the insignificance of negligible uniform error introduced by linear regression analysis). Considering this, we also examine the more interesting case wherein the computations are performed in $\mathbb{Z}_m$ (for some $m \in \mathbb{Z}^+ \setminus \{1\}$) instead of over $\mathbb{Z}$. However, our proofs and arguments are presented according to the former case, i.e., where the computations are performed over $\mathbb{Z}$. We leave adapting the proofs and arguments for the latter case, i.e., computations over $\mathbb{Z}_m$, as an open problem.
\end{note}
 
\subsection{Hypothesis Generation from Physical Layer Communications}\label{Gen}
In this section, we describe hypothesis generation from physical layer communications which allows us to generate an optimal linear regression hypothesis, $h(x)$, for the target function $f(x)$. As mentioned in \Cref{ImpNote}, we consider the case wherein the error computations are performed in $\mathbb{Z}_m$. As described in \Cref{broad}, the linear regression data for each subset of parties $H_i \in \mathcal{H}$ is comprised of the messages exchanged within star graph $\partial_i$ --- that is formed by the parties in $H_i \cup C_i$.
 
\subsubsection{Assumptions.}\label{assumption}
We assume that the following conditions hold:
\begin{enumerate}
	\item Value of the integer modulus $m$: 
	\begin{itemize}
		\item is either known beforehand, or 
		\item can be derived from the target function.
	\end{itemize}
	\item \label{assume2} Size of the dataset, i.e., the total number of recorded physical layer messages, is reasonably large such that there are enough data points to accurately fit linear regression on any function period. In our experiments, we set it to $2^{16}$ messages.
	\item \label{assume} For a dataset $\mathcal{D} = \{(x_i,y_i)\}~(i \in [\ell])$, it holds for the slope, $\beta_1$, of $f(x)$, that $\ell/\beta_1$ is superpolynomial. For our experiments, we set $\beta_1$ such that $\ell/\beta_1\geq 100$.
\end{enumerate}

\subsubsection{Setup.}
Recall that the goal of linear regression is to find subset(s) of data points that can be used to generate a hypothesis $h(x)$ to approximate the target function, which in our case is $f(x) + \varepsilon_x$. Then, we extrapolate it to the entire dataset. However, since modulo is a periodic function, there is no explicit linear relationship between $x \leftarrow \mathbb{Z}$ and $y = f(x) + \varepsilon_x \bmod m$, even without the error term $\varepsilon_x$. Thus, we cannot directly apply linear regression to the entire dataset $\mathcal{D} = \{(x_i,y_i)\}~(i \in [\ell])$ and expect meaningful results unless $\beta_0=0$ and $\beta_1 = 1$.

\begin{figure}[h!]
	\centering
	\includegraphics[scale=.084]{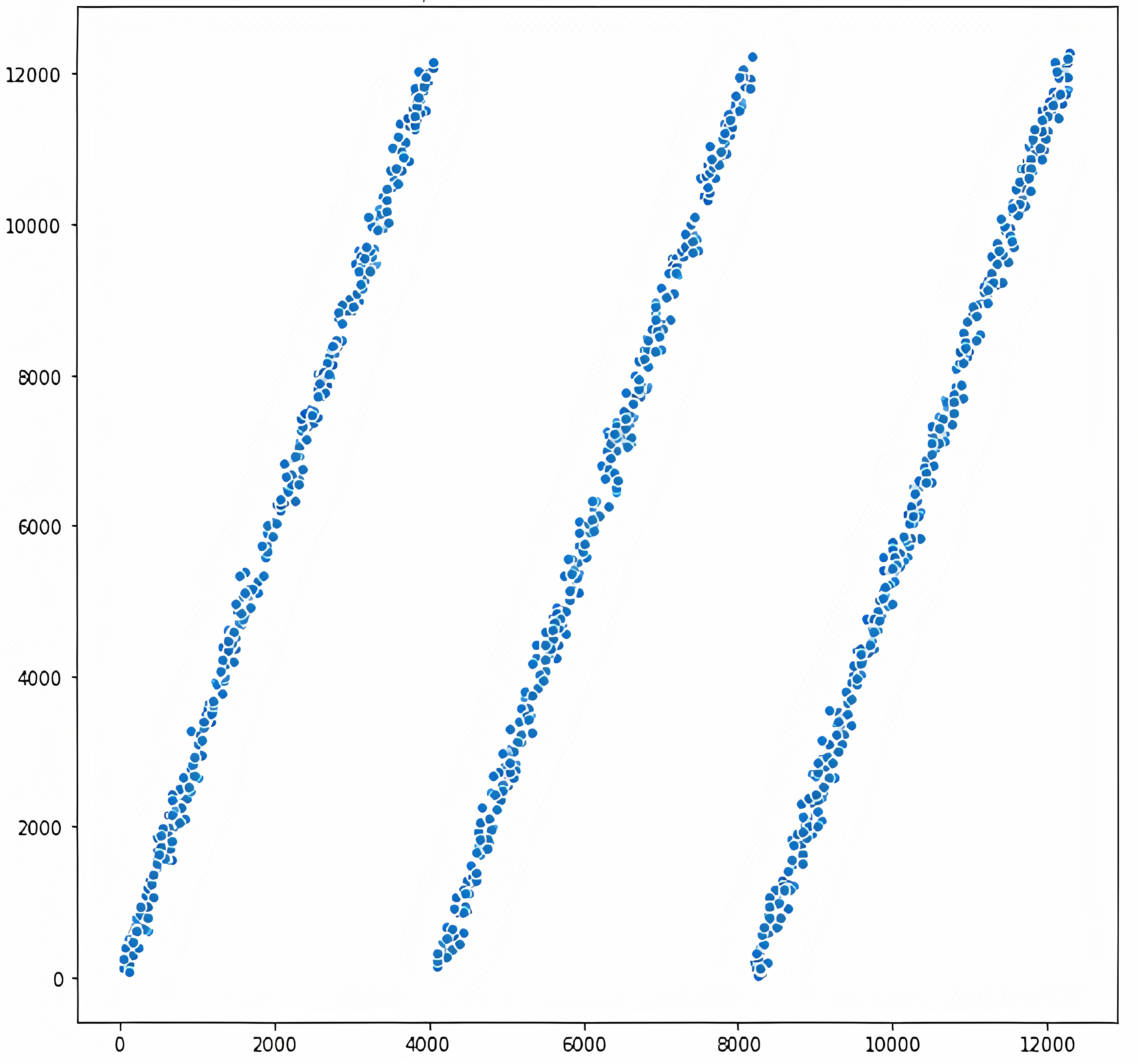}
	\caption{Scatter plot for $y = 3 x + \varepsilon_x \bmod 12288$ (three periods)}\label{Fig1}
\end{figure}

We arrange the dataset, $\mathcal{D}$, in ascending order with respect to the $x_i$ values, i.e., for $1 \leq i < j \leq m$ and all $x_i, x_j \in \mathcal{D}$, it holds that: $x_i < x_j$. Let $\mathcal{S} = \{x_i\}_{i=1}^\ell$ denote the ordered set with $x_i$ $(\forall i \in [\ell])$ arranged in ascending order. Observe that the slope of $y = f(x) + \varepsilon_x \bmod m$ is directly proportional to the number of periods on any given range, $[x_i, x_j]$. For example, observe the slope in \Cref{Fig1}, which depicts the scatter plot for $y = 3x + \varepsilon_x \bmod 12288$ with three periods. Therefore, in order to find a good linear fit for our target function on a subset of dataset that lies inside the given range, $[x_i, x_j]$, we want to correctly estimate the length of a single period. Consequently, our aim is to find a range $[x_i, x_j]$ for which the following is minimized:

\begin{equation}\label{eq}
	\Big| \hat{\beta}_1 - \dfrac{m}{x_j - x_i} \Big|,
\end{equation}
where $\hat{\beta}_1$ denotes the slope for our linear regression hypothesis $h(x) = \hat{\beta}_0 + \hat{\beta}_1 x$ computed on the subset with $x$ values in $[x_i,x_j]$.

\subsubsection{Generating Optimal Hypothesis.}
The following procedure describes our algorithm for finding the optimal hypothesis $h(x)$ and the target range $[x_i, x_j]$ that satisfies \Cref{eq} for $\beta_0=0$. When $\beta_0$ is not necessarily $0$, a small modification to the procedure (namely, searching over all intervals $[x_i, x_j]$, instead of searching over only certain intervals as described below) is needed.

Let $\kappa$ denote the total number of periods, then it follows from Assumption~\ref{assume} (from Section~\ref{assumption}) that $\kappa \leq \lceil \ell/100 \rceil$. Let $\delta_{\kappa,i} = |\hat{\beta}_1(\kappa, i) - \kappa|$, where $\hat{\beta}_1(\kappa, i)$ denotes that $\hat{\beta}_1$ is computed over the range $\left[x_{\left\lfloor(i-1)\ell/\kappa\right\rfloor+1},x_{\left\lfloor i\ell/\kappa\right\rfloor}\right]$.

\begin{enumerate}
	\item Initialize the number of periods with $\kappa = 1$ and calculate $\delta_{1,1} = |\hat{\beta}_1(1,1) - 1|$. 
	\item Compute the $\delta_{\kappa,i}$ values for all $1 < \kappa \leq \lceil \ell/100 \rceil$ and $i \in [\kappa]$. For instance, $\kappa = 2$ denotes that we consider two ranges: $\hat{\beta}_1 (2,1)$ is calculated on $\left[ x_1, x_{\lfloor \ell/\kappa \rfloor}\right]$ and $\hat{\beta}_1 (2,2)$ on $\left[x_{\lfloor \ell/\kappa \rfloor +1}, x_\ell\right]$. Hence, we compute $\delta_{2,i}$ for these two ranges. Similarly, $\kappa =3$ denotes that we consider three ranges $\left[ x_1, x_{\lfloor \ell/\kappa \rfloor}\right]$, $\left[x_{\lfloor \ell/\kappa \rfloor +1}, x_{\lfloor 2\ell/\kappa \rfloor}\right]$ and $\left[x_{\lfloor 2\ell/\kappa \rfloor +1}, x_\ell\right]$, and we compute $\hat{\beta}(3,i)$ and $\delta_{3,i}$ over these three ranges. Hence, $\delta_{\kappa,i}$ values are computed for all $(\kappa,i)$ that satisfy $1 \leq i \leq \kappa \leq \lceil \ell/100 \rceil$.
	\item Identify the optimal value $\delta = \min_{\kappa,i}(\delta_{\kappa,i})$, which is the minimum over all $\kappa \in [\lceil \ell/100 \rceil]$ and $i \in [\kappa]$.
	\item After finding the minimal $\delta$, output the corresponding (optimal) hypothesis $h(x)$.
\end{enumerate}

What the above algorithm does is basically a grid search over $\kappa$ and $i$ with the performance metric being minimizing the $\delta_{\kappa,i}$ value.

	\begin{center}
		\textbf{Grid search: more details}
	\end{center}
	\begin{myenv}
		Grid search is an approach used for hyperparameter tuning. It methodically builds and evaluates a model for each combination of parameters. Due to its ease of implementation and parallelization, grid search has prevailed as the de-facto standard for hyperparameter optimization in machine learning, especially in lower dimensions. For our purpose, we tune two parameters $\kappa$ and $i$. Specifically, we perform grid search to find hypotheses $h(x)$ for all $\kappa$ and $i$ such that $\kappa \in [\lceil \ell/100 \rceil]$ and $i \in [\kappa]$. Optimal hypothesis is the one with the smallest value of the performance metric $\delta_{\kappa,i}$. 
	\end{myenv}

\subsection{Simulation and Testing}\label{Sim}
We tested our RGPC algorithm with varying values of $m$ and $\beta_1$ for the following functions: 
\begin{itemize}
	\item $f(x) = \beta_0 + \beta_1 x,$ 
	\item $f(x) = \beta_0 + \beta_1 \sqrt{x},$ 
	\item $f(x) = \beta_0 + \beta_1 x^2,$ 
	\item $f(x) = \beta_0 + \beta_1 \sqrt[3]{x},$ 
	\item $f(x) = \beta_0 + \beta_1 \ln (x+1).$ 
\end{itemize}

To generate the training data, we simulated the channel noise, $\varepsilon_x$, as a random Gaussian noise (introduced by the Gaussian channel), which we sampled from various distributions with zero mean, standard deviation $\sigma \in [10, 300]$, and $m \in [20000]$. Final channel noise was computed by rounding $\varepsilon_x$ to the nearest integer and reducing the result modulo $m$.

For each function, we generated $2^{16}$ unique input-output pairs, exchanged over Gaussian channels, i.e., the dataset for each function is of the form $\mathcal{D} = \{(x_i,y_i)\}$, where $i \in [2^{16}]$. As expected, given the dataset $\mathcal{D}$ with data points $x_i, y_i = f(x_i) + \varepsilon_i \bmod m$, our algorithm always converged to the optimal range, yielding close approximations for the target function with deterministic errors, $\bar{e}_{x} = |y - h(x)| \bmod m$. \Cref{Fig2} shows a histogram of the errors $\bar{e}_x$ generated by our RGPC protocol --- with our training data --- for the target (linear) function $y = 546 x + \varepsilon_x \bmod 12288$. The errors indeed belong to a truncated Gaussian distribution, bounded by the modulus $12288$ from both tails.

\begin{figure}
	\centering
	\includegraphics[scale=.080]{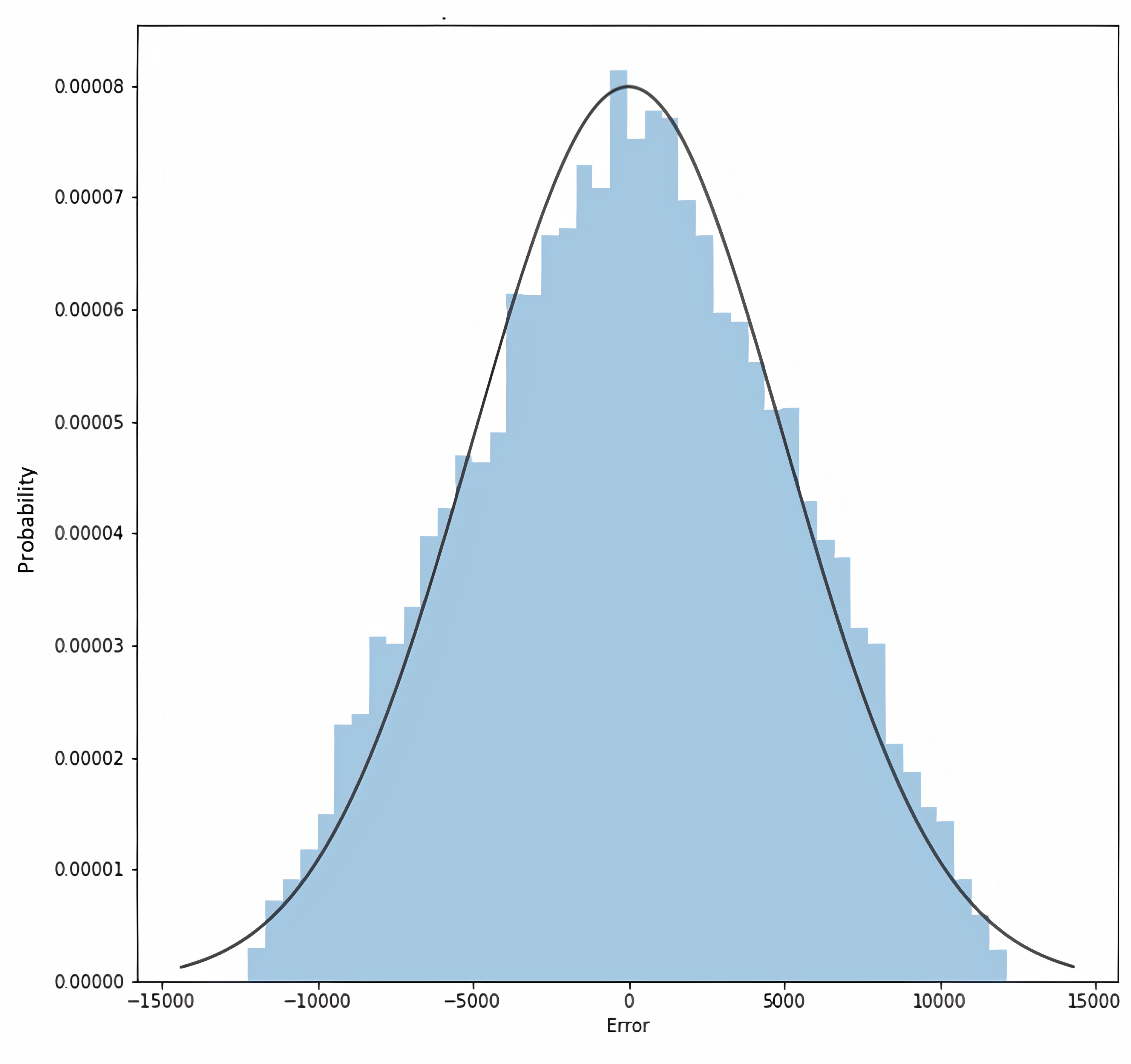}
	\caption{Distribution plot of $\bar{e}_x$ for $y = 546 x + \varepsilon_x \bmod 12288$. Slope estimate: $\beta_1 = 551.7782$.}\label{Fig2}
\end{figure}

\begin{figure}
	\centering
	\includegraphics[scale=.115]{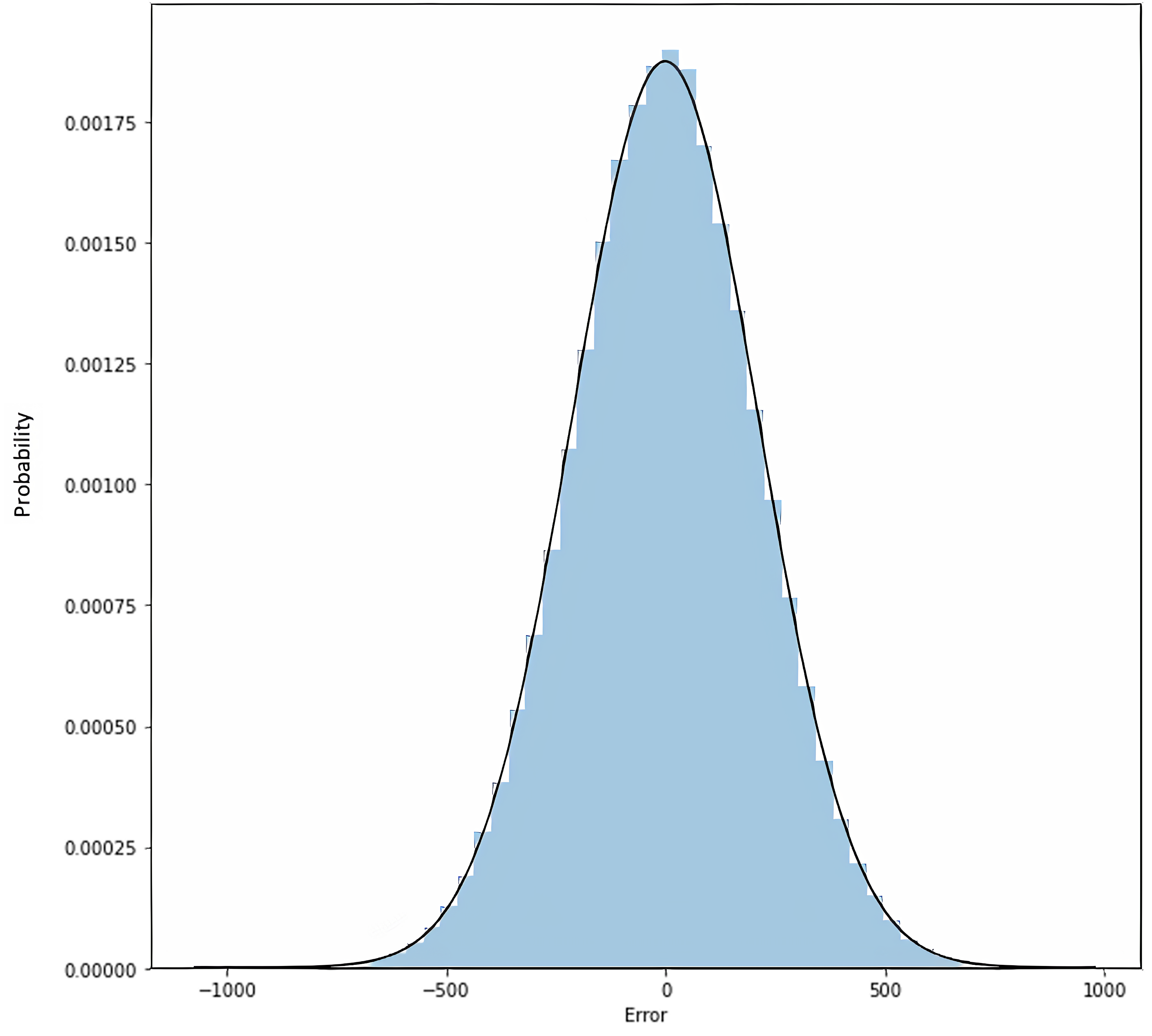}
	\caption{Distribution plot of $\bar{e}_x$ for $y = 240 \sqrt{x} + \varepsilon_x \bmod 12288$. Slope estimate: $\beta_1 = 239.84$.}\label{Fig3}
\end{figure}

Moving on to the cases wherein the independent variable $x$ and the dependent variable $y$ have a nonlinear relation: the most representative example of such a relation is the power function $f(x)=\beta_1 x^\vartheta$, where $\vartheta \in \mathbb{R}$. We know that nonlinearities between variables can sometimes be linearized by transforming the independent variable. Hence, we applied the following transformation: if we let $x_{\upsilon}=x^\vartheta$, then $f_\upsilon(x_\upsilon) = \beta_1 x_\upsilon = f(x)$ is linear in $x_\upsilon$. This can now be solved by applying our hypothesis generation algorithm for linear functions. \Cref{Fig3,Fig4,Fig6,Fig7} show the histograms of the errors $\bar{e}_x$ generated by our training datasets for the various nonlinear functions from the list given at the beginning of \Cref{Sim}. It is clear that the errors for these nonlinear functions also belong to truncated Gaussian distributions, bounded by their respective moduli from both tails.

\begin{figure}
	\centering
	\includegraphics[scale=.115]{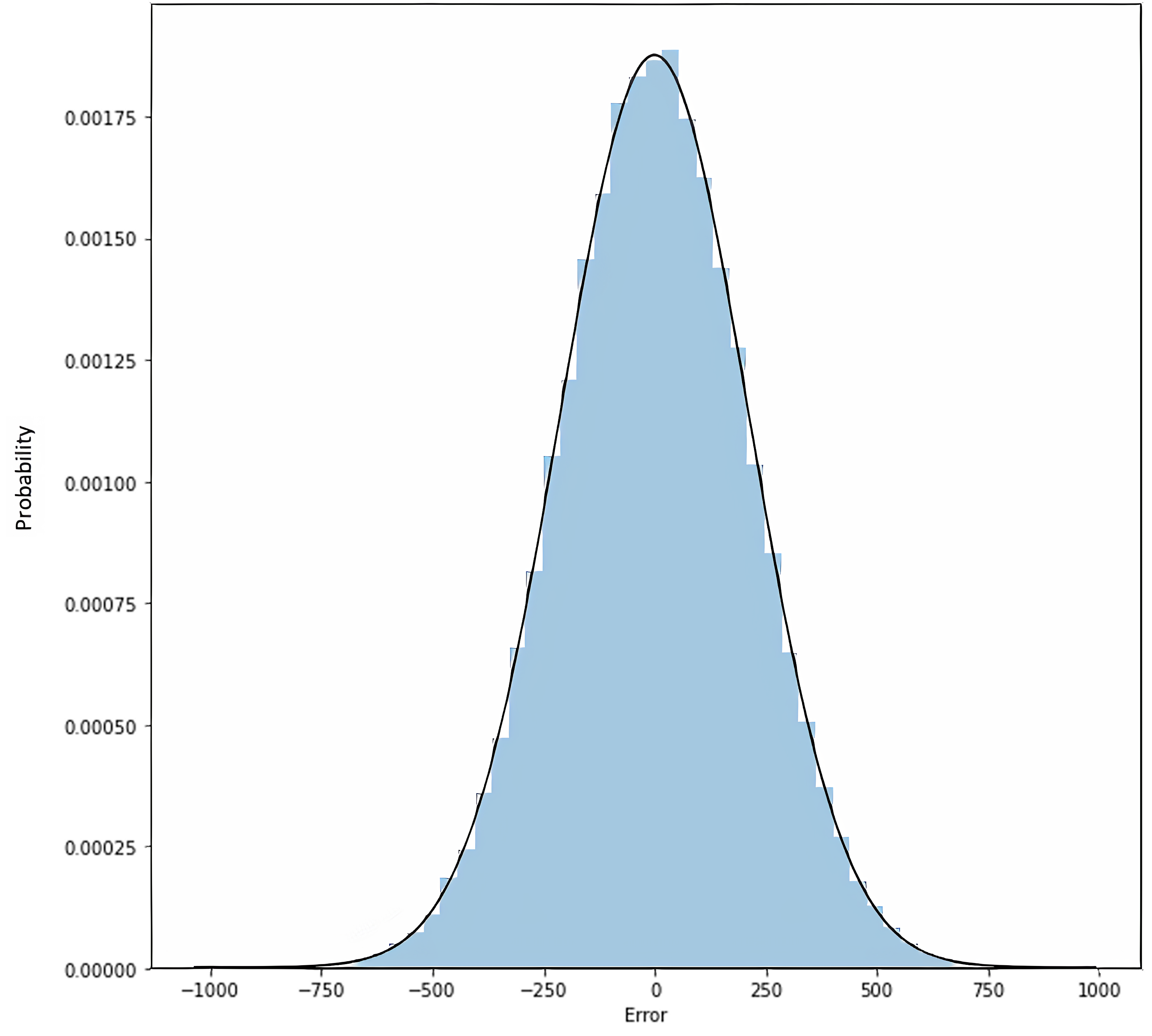}
	\caption{Distribution plot of $\bar{e}_x$ for $y = 125 x^2 + \varepsilon_x \bmod 10218$. Slope estimate: $\beta_1 = 124.51$.}\label{Fig4}
\end{figure} 

\begin{figure}
	\centering
	\includegraphics[scale=.115]{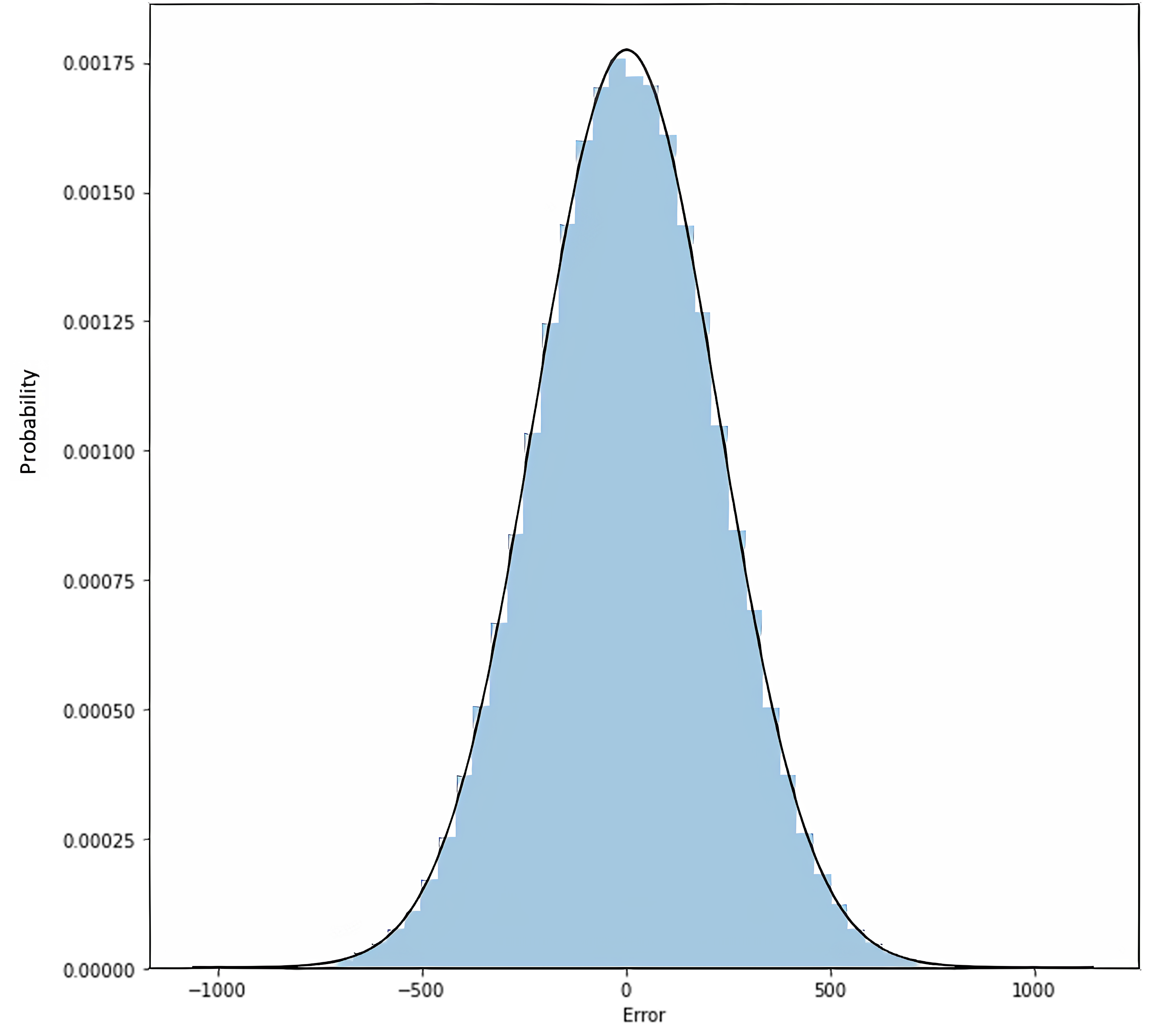}
	\caption{Distribution plot of $\bar{e}_x$ for $y = 221 \sqrt[3]{x} + \varepsilon_x \bmod 11278$. Slope estimate: $\beta_1 = 221.01$.}\label{Fig6}
\end{figure} 

\begin{figure}
	\centering
	\includegraphics[scale=.115]{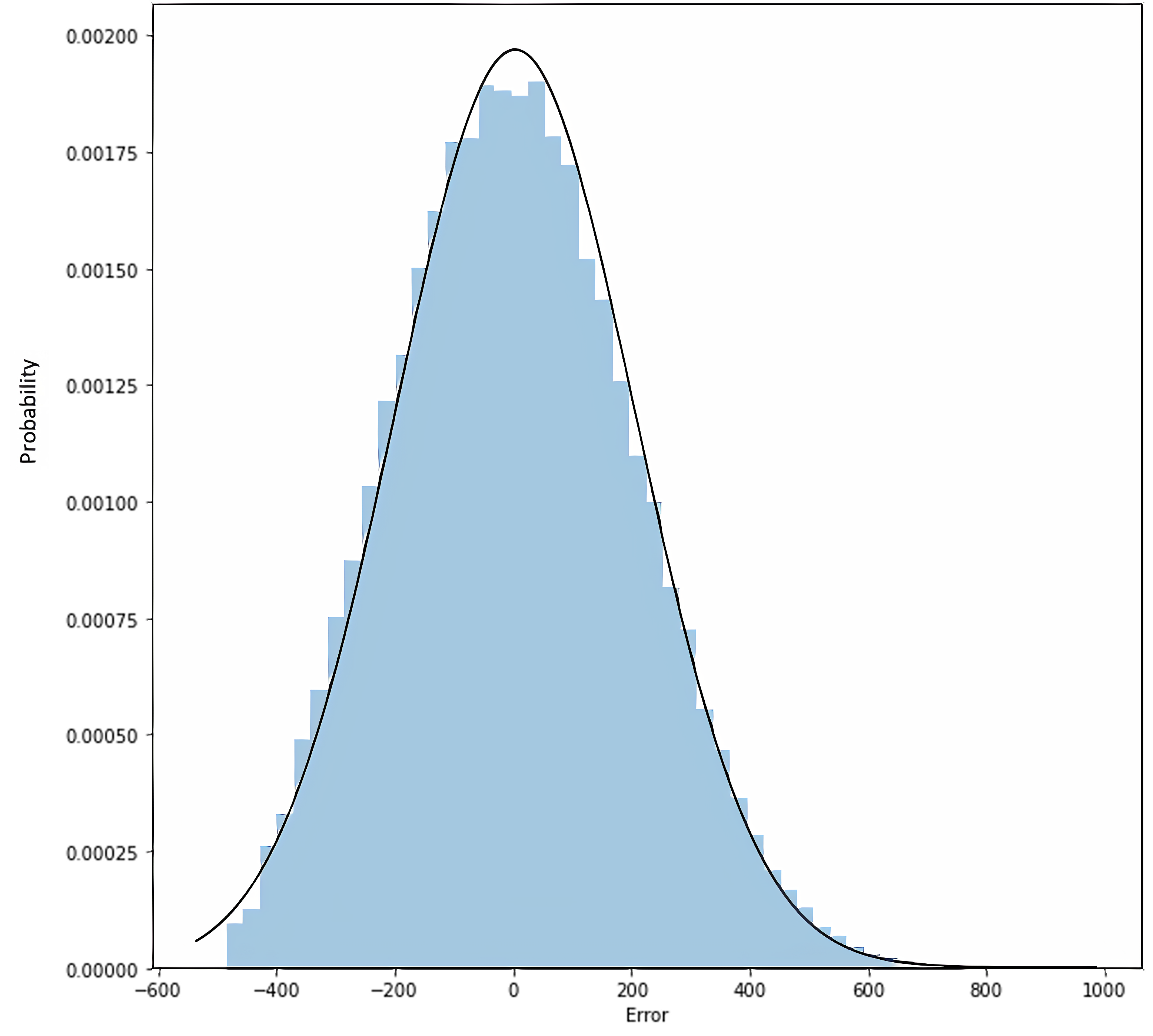}
	\caption{Distribution plot of $\bar{e}_x$ for $y = 53 \ln (x+1) + \varepsilon_x \bmod 8857$. Slope estimate: $\beta_1 = 54.48$.}\label{Fig7}
\end{figure} 

\subsection{Complexity}\label{Time}
Let the size of the dataset collected by recording the physical layer communications be $\ell$. Then, the complexity for least squares linear regression is $\mathrm{\Theta}(\ell)$ additions and multiplications. It follows from Assumption~\ref{assume} (from \Cref{assumption}) that $\ell^2$ is an upper bound on the maximum number of evaluations required for grid search. Therefore, the overall asymptotic complexity of our algorithm to find optimal hypothesis, and thereafter generate deterministic rounded Gaussian errors is $O(\ell^3)$. In comparison, the complexity of performing least squares linear regression on a dataset $\{(x_i, y_i)\}$ that has not been reduced modulo $m$ is simply $\mathrm{\Theta}(\ell)$ additions and multiplications.

\subsection{Error Analysis}
Before moving ahead, we recommend the reader revisit \Cref{ImpNote}.

Our RGPC protocol generates errors, $e_{x} = \lceil |y - h(x)| \rfloor$, in a deterministic manner by generating a mapping defined by the hypothesis $h$. Recall that $h$ depends on two factors, namely the randomly generated function $f(x)$ and the $\ell$ random channel errors $\varepsilon_x$ --- both of which are derived via $\ell$ random inputs $x \leftarrow \mathbb{Z}$. Revisiting \Cref{P2SI}, we observe that: 
\begin{itemize}
	\item RGPC takes $\ell$ random elements $x \leftarrow \mathbb{Z}$ and $\ell$ evaluations of $f(x)$, each with an added random channel error $\varepsilon_x$. Collectively, these form the random input, $r$, described in \Cref{P2SI},
	\item RGPC returns a deterministic mapping which depends on the above random input, $r$. Following the notation from \Cref{P2SI}, we can write it as $\mathfrak{M}_r$. However, we choose the notation $\mathfrak{M}_h$ instead, but note that both are equivalent henceforth,
	\item Proving that $\mathfrak{M}_h$ satisfies the requirements outlined for the deterministic function/mapping in \Cref{P2SI} would directly establish RGPC as a P2SI algorithm --- as described in \Cref{P2SI}. 
\end{itemize}

Let difference of two values modulo $m$ is always represented by an element in $(-m/2,(m+1)/2]$. We make the further assumption that there exists a constant $b\geq 1$ such that the $x_i$ values satisfy:
\begin{equation}
	\label{assumption_on_x}
	\tag{$\dagger$}
	\frac{(x_i-\bar{x})^2}{\sum_{j=1}^{\ell} (x_j-\bar{x})^2}\leq\frac{b}{\ell}
\end{equation}
for all $i=1,\ldots,\ell$, where $\bar{x}=\sum_{j=1}^{\ell}\frac{x_j}{\ell}$.

Observe that if $\ell-1$ divides $m-1$ and the $x_i$ values are $0,\frac{m-1}{\ell-1},\ldots,\frac{(\ell-1)(m-1)}{\ell-1}$, then $$\sum_{j=1}^{\ell} (x_j-\bar{x})^2=\frac{\ell(\ell^2-1)(m-1)^2}{12(\ell-1)^2},$$ and the numerator is bounded above by $\frac{(m-1)^2}{4}$. Thus, \Cref{assumption_on_x} is satisfied for $b=3$. In general, by the strong law of large numbers, choosing a large enough number of $x_i$'s uniformly at random from $[0, m-1]$ will, with very high probability, yield $\bar{x}$ close to $\frac{m-1}{2}$ and $\frac{1}{\ell}{\sum_{j=1}^{\ell} (x_j-\bar{x})^2}$ close to $\frac{(m^2-1)}{12}$ (since $X\sim U(0, m-1)\implies \E(X)=\frac{m-1}{2}$ and $\var(X)=\frac{m^2-1}{12}$). Hence, \Cref{assumption_on_x} is satisfied with a small constant $b$, e.g., $b=4$.

Recall that the dataset is $\mathcal{D} = \{(x_i, y_i)\}_{i=1}^\ell$, where $y_i=f(x_i)+\varepsilon_i=\beta_0+\beta_1 x_i+\varepsilon_i$, with $\varepsilon_i\sim \mathcal{N}(0,\sigma^2)$. Further recall that the regression line is given by $h(x)=\hat{\beta_0}+\hat{\beta}_1x$. Then the error $\bar{e}_i$ is given by
\begin{align*}
\bar{e}_i&=(\hat{\beta_0}+\hat{\beta}_1x_i)-y_i=(\hat{\beta_0}+\hat{\beta}_1x_i)-(\beta_0+\beta_1 x_i+\varepsilon_i) \\
&=(\hat{\beta_0}-\beta_0)+(\hat{\beta_1}-\beta_1)x_i-\varepsilon_i.
\end{align*}

The joint distribution of the regression coefficients $\hat{\beta_0}$ and $\hat{\beta_1}$ is given by the following well known result:

\begin{proposition}
	\label{regression_hypothesis_distribution}
	Let $y_1,y_2,\ldots,y_\ell$ be independently distributed random variables such that $y_i\sim \mathcal{N}(\alpha+\beta x_i,\sigma^2)$ for all $i=1,\ldots ,\ell$. If $\hat{\alpha}$ and $\hat{\beta}$ are the least square estimates of $\alpha$ and $\beta$ respectively, then:
	$$\begin{pmatrix}
		\hat{\alpha} \\
		\hat{\beta}
	\end{pmatrix}
	\sim \mathcal{N}\left(
	\begin{pmatrix}
		\alpha \\
		\beta
	\end{pmatrix}
	,\,
	\sigma^2
	\begin{pmatrix}
		\ell & \sum_{i=1}^{\ell} x_i \\
		\sum_{i=1}^{\ell} x_i & \sum_{i=1}^{\ell} x_i^2
	\end{pmatrix}^{-1}
	\right).$$
\end{proposition}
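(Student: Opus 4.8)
The plan is to recognise this as the standard normal linear model and to exploit the fact that affine images of Gaussian vectors are again Gaussian, so that only a mean and a covariance need to be computed. First I would write the model in matrix form as $\mathbf{y} = \mathbf{X}\bm{\theta} + \bm{\varepsilon}$, where $\mathbf{y} = (y_1,\ldots,y_\ell)^T$, the parameter vector is $\bm{\theta} = (\alpha,\beta)^T$, the design matrix $\mathbf{X}\in\mathbb{R}^{\ell\times 2}$ has $i$-th row $(1,\,x_i)$, and $\bm{\varepsilon} = (\varepsilon_1,\ldots,\varepsilon_\ell)^T \sim \mathcal{N}(\mathbf{0},\,\sigma^2\mathbf{I})$ collects the independent errors. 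The key observation is that $\mathbf{X}^T\mathbf{X} = \begin{pmatrix} \ell & \sum_i x_i \\ \sum_i x_i & \sum_i x_i^2\end{pmatrix}$ is exactly the matrix being inverted in the statement, and it is invertible provided the $x_i$ are not all equal (which we may assume, since otherwise the regression is degenerate).

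Next I would recall that the least squares estimate minimises $\|\mathbf{y}-\mathbf{X}\bm{\theta}\|^2$; setting the gradient to zero gives the normal equations $\mathbf{X}^T\mathbf{X}\hat{\bm{\theta}} = \mathbf{X}^T\mathbf{y}$, and hence the closed form $\hat{\bm{\theta}} = (\mathbf{X}^T\mathbf{X})^{-1}\mathbf{X}^T\mathbf{y}$. (Equivalently, one may start from the explicit scalar formulas $\hat\beta = \sum_i(x_i-\bar x)(y_i-\bar y)/\sum_i(x_i-\bar x)^2$ and $\hat\alpha = \bar y - \hat\beta\,\bar x$, and check they coincide with this expression.) Since $\hat{\bm{\theta}}$ is a \emph{fixed} linear map applied to the Gaussian vector $\mathbf{y}$, it is itself a bivariate Gaussian, so it remains only to compute its mean vector and covariance matrix.

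For the mean, $\mathbb{E}[\hat{\bm{\theta}}] = (\mathbf{X}^T\mathbf{X})^{-1}\mathbf{X}^T\mathbb{E}[\mathbf{y}] = (\mathbf{X}^T\mathbf{X})^{-1}\mathbf{X}^T\mathbf{X}\bm{\theta} = \bm{\theta}$, which is the claimed centre (and establishes unbiasedness). For the covariance, I would apply the rule $\mathrm{Cov}(\mathbf{A}\mathbf{y}) = \mathbf{A}\,\mathrm{Cov}(\mathbf{y})\,\mathbf{A}^T$ with $\mathbf{A} = (\mathbf{X}^T\mathbf{X})^{-1}\mathbf{X}^T$ and $\mathrm{Cov}(\mathbf{y}) = \sigma^2\mathbf{I}$; the two outer factors then telescope, giving $(\mathbf{X}^T\mathbf{X})^{-1}\mathbf{X}^T(\sigma^2\mathbf{I})\mathbf{X}(\mathbf{X}^T\mathbf{X})^{-1} = \sigma^2(\mathbf{X}^T\mathbf{X})^{-1}$, exactly as stated. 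Combining the mean and covariance yields the desired distribution.

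There is no genuine obstacle here, as the statement is classical; the only points requiring care are (a) justifying that $\mathbf{X}^T\mathbf{X}$ is invertible, and (b) invoking closure of the Gaussian family under affine maps rather than recomputing the joint density from scratch. If one insists on an elementary route bypassing the linear-model machinery, the (mild) effort shifts to verifying $\mathrm{Cov}(\hat\alpha,\hat\beta)$ directly from the scalar formulas, which after simplification reproduces the off-diagonal entry $-\sigma^2\sum_i x_i \big/ \big(\ell\sum_i x_i^2 - (\sum_i x_i)^2\big)$ of $\sigma^2(\mathbf{X}^T\mathbf{X})^{-1}$.
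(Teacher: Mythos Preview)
Your argument is correct and is the standard textbook derivation. Note that the paper does not actually prove this proposition: it is introduced with the phrase ``is given by the following well known result'' and then immediately applied, so there is nothing to compare against. Your write-up would serve perfectly well as the missing proof; the only caveat worth flagging explicitly (as you already do) is the nondegeneracy assumption that the $x_i$ are not all equal, so that $\mathbf{X}^T\mathbf{X}$ is invertible.
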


Applying Proposition \ref{regression_hypothesis_distribution}, and using the fact that $\mathbf{X}\sim\mathcal{N}(\boldsymbol{\mu},\mathbf{\Sigma})\implies\mathbf{A}\mathbf{X}\sim\mathcal{N}(\mathbf{A}\boldsymbol{\mu},\mathbf{A}\mathbf{\Sigma}\mathbf{A}^T)$, we get:
\begin{align*}
(\hat{\beta_0}-\beta_0)+(\hat{\beta_1}-\beta_1)x_i&\sim
	\mathcal{N}\left(0,\,
		\sigma^2\frac{\sum_{j=1}^\ell x_j^2-2x_i\sum_{j=1}^\ell x_j+\ell x_i^2}{\ell\sum_{j=1}^\ell x_j^2-(\sum_{j=1}^\ell x_j)^2}
	\right) \\
	&=
	\mathcal{N}\left(0,\,
		\frac{\sigma^2}{\ell}\left(1+\frac{\ell(x_i-\bar{x})^2}{\sum_{j=1}^\ell (x_j-\bar{x})^2}\right)
	\right).
\end{align*}
Thus, by \Cref{assumption_on_x}, the variance of $(\hat{\beta_0}-\beta_0)+(\hat{\beta_1}-\beta_1)x_i$ is bounded above by $(1+b)\sigma^2/\ell$.

Since $Z\sim \mathcal{N}(0,1)$ satisfies $|Z|\leq 2.807034$ with probability $0.995$, by the union bound, $\bar{e}_i$ is bounded by
$$|\bar{e}_i|\leq 2.807034\left(1+\sqrt{\frac{1+b}{\ell}}\right)\sigma$$
with probability at least $0.99$.

\begin{note}\label{note1}
	Our protocol allows fine-tuning the Gaussian errors by tweaking the standard deviation $\sigma$ and mean $\mu$ for the Gaussian channel. Hence, in our proofs and arguments, we only use the term ``target rounded Gaussian''. 
\end{note}

\begin{lemma}\label{Lemma1}
	Suppose that the number of samples $\ell$ is superpolynomial, and that \Cref{assumption_on_x} is satisfied with some constant $b$. Then, the errors $e_i$ belong to the target rounded Gaussian distribution.
\end{lemma}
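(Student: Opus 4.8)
The plan is to show that the rounded error $e_i=\lfloor\bar e_i\rceil$ lies within negligible statistical distance of the target rounded Gaussian $\Psi(0,\sigma^2)$. First I would use the decomposition already established above the lemma,
\[
\bar e_i=\Delta_i-\varepsilon_i,\qquad \Delta_i:=(\hat\beta_0-\beta_0)+(\hat\beta_1-\beta_1)x_i,
\]
where $\varepsilon_i\sim\mathcal N(0,\sigma^2)$ is the channel noise. The key observation is that $-\varepsilon_i$ is, by construction, an exact sample from the target continuous Gaussian $\mathcal N(0,\sigma^2)$, so $\lfloor-\varepsilon_i\rceil$ is distributed exactly as the target rounded Gaussian; the whole deviation from the target comes from the small regression term $\Delta_i$.

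Next I would quantify $\Delta_i$ using the distribution derived from Proposition~\ref{regression_hypothesis_distribution} just above the lemma, namely $\Delta_i\sim\mathcal N\!\left(0,\tfrac{\sigma^2}{\ell}\bigl(1+\tfrac{\ell(x_i-\bar x)^2}{\sum_j(x_j-\bar x)^2}\bigr)\right)$, whose variance is at most $(1+b)\sigma^2/\ell$ by Assumption~$(\dagger)$. A Gaussian tail bound then gives $|\Delta_i|\le c\,\sigma\sqrt{(1+b)/\ell}$ with overwhelming probability for a small absolute constant $c$. I would then invoke the Drowning/Smudging theorem, which bounds the statistical distance between $\Psi(0,\sigma^2)$ and $\Psi(0,\sigma^2)+y$ by $|y|/\sigma$: taking $y$ to be the shift induced by $\Delta_i$, this bounds the statistical distance between the target rounded Gaussian and the rounded actual error by $|\Delta_i|/\sigma\le c\sqrt{(1+b)/\ell}$. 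Since $\ell$ is superpolynomial in the security parameter, $\sqrt{(1+b)/\ell}$ is negligible, so $e_i$ is statistically indistinguishable from the target rounded Gaussian, which is exactly what ``belongs to the target rounded Gaussian distribution'' means here.

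The step I expect to be the main obstacle is the \emph{dependence} between the clean Gaussian part $-\varepsilon_i$ and the perturbation $\Delta_i$: both are linear functionals of the same noise vector $\boldsymbol\varepsilon$, with $\mathrm{cov}(-\varepsilon_i,\Delta_i)=-h_{ii}\sigma^2\neq 0$ where $h_{ii}=\tfrac1\ell+\tfrac{(x_i-\bar x)^2}{\sum_j(x_j-\bar x)^2}$, so $\bar e_i$ is not literally a target sample plus an \emph{independent} shift and the Drowning theorem cannot be applied verbatim. I would resolve this in one of two equivalent ways. The clean route bypasses the decomposition: since $\bar e_i$ is the negative of the $i$-th least-squares residual, it is exactly Gaussian with mean $0$ and variance $\sigma^2(1-h_{ii})$ with $h_{ii}\le(1+b)/\ell$, so one bounds the statistical distance between $\mathcal N(0,\sigma^2(1-h_{ii}))$ and the target $\mathcal N(0,\sigma^2)$ directly (e.g.\ via Pinsker's inequality, giving $O(h_{ii})=O(1/\ell)$) and uses that the deterministic rounding map cannot increase statistical distance. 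The alternative, which preserves the Drowning framing intended in the text, employs the coupling $T:=-\varepsilon_i$ alongside the actual $\bar e_i=T+\Delta_i$: the marginals are correct and $|T-\bar e_i|=|\Delta_i|$, so the rounded values disagree only when a half-integer falls in an interval of length $|\Delta_i|$, an event of probability $O(|\Delta_i|/\sigma)$, recovering the same negligible bound. Either way the conclusion follows, and I would include the residual-variance computation explicitly since it makes the dependence issue disappear.
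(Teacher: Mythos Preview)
Your approach matches the paper's: the paper's proof is a short paragraph that (i) decomposes $\bar e_i$ into the Gaussian channel noise and the regression-fitting term, (ii) observes the latter has standard deviation of order $\sigma/\sqrt{\ell}$, (iii) invokes drowning/smudging to conclude $\bar e_i$ is statistically indistinguishable from the channel Gaussian, and (iv) rounds. That is exactly your outline before the ``obstacle'' paragraph.

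Where you differ is that you go further than the paper does. The paper applies drowning/smudging without commenting on the dependence between $-\varepsilon_i$ and $\Delta_i$; you flag this correctly (both are linear in the same noise vector, with $\mathrm{cov}(-\varepsilon_i,\Delta_i)=-h_{ii}\sigma^2$) and then offer two remedies. Your ``clean route'' --- observing that $\bar e_i$ is exactly $\mathcal N(0,\sigma^2(1-h_{ii}))$ because it is the negative of the $i$-th OLS residual, then bounding the statistical distance to $\mathcal N(0,\sigma^2)$ directly and pushing through the rounding map --- is more rigorous than what the paper actually writes, and arguably what the paper's sketch \emph{should} have said. Your coupling alternative also works and is closer in spirit to the drowning statement. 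Either way your proof is correct and is a tightening, not a departure, from the paper's argument; the paper simply does not engage with the dependence issue you raise.
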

\begin{proof}
	Recall that the error $\bar{e}_i$ has two components: one is the noise introduced by the Gaussian channel and second is the error due to regression fitting. The first component is naturally Gaussian. The standard deviation for the second component is of order $\sigma/\sqrt{\ell}$. Hence, it follows from drowning/smudging that for a superpolynomial $\ell$, the error distribution for $\bar{e}_i$ is statistically indistinguishable from the Gaussian distribution to which the first component belongs. Therefore, it follows that the final output, $e_i = \lceil \bar{e}_i \rfloor$, of the RGPC protocol belongs to the target rounded Gaussian distribution (see \Cref{note1}).  \qed
\end{proof}

To prove that all conditions in \Cref{P2SI} are satisfied for $\mathfrak{M}_h$, it remains to show that no external PPT adversary $\mathcal{A}$ has non-negligible advantage in guessing $e_x$ beforehand. We assume that for an attack query $x$, $\mathcal{A}$ makes $\poly(\L)$ queries to $\mathfrak{M}_h$ for $x' \neq x$. 

\begin{lemma}\label{Obs1}
	For an external PPT adversary $\mathcal{A}$, it holds that $\mathfrak{M}_h: x \mapsto e_x$ maps a random element $x \leftarrow \mathbb{Z}$ to a random element $e_x$ in the target rounded Gaussian distribution $\mathrm{\Psi}(0, \hat{\sigma}^2)$.
\end{lemma}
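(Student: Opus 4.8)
The plan is to reduce the lemma to two assertions: that the marginal law of $e_x$ is the target rounded Gaussian $\mathrm{\Psi}(0,\hat{\sigma}^2)$, and that conditioning on the adversary's view — the collection of outputs $\mathfrak{M}_h(x')$ obtained from its $\poly(\L)$ oracle queries at points $x'\neq x$ — leaves this law unchanged up to a negligible function. The first assertion is already available: by \Cref{Lemma1}, whenever $\ell$ is superpolynomial the output $e_x=\lceil\bar{e}_x\rfloor$ belongs to the target rounded Gaussian. Thus the entire task is to show that the transcript of queries leaks no non-negligible information about $e_x$.

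For this second step I would invoke the error decomposition developed just before \Cref{Lemma1}. Writing $\bar{e}_x=(\hat{\beta}_0-\beta_0)+(\hat{\beta}_1-\beta_1)x-\varepsilon_x$, the term $\varepsilon_x$ is the genuine channel noise at the query point, whereas the remaining regression-fitting term has variance bounded above by $(1+b)\sigma^2/\ell$ under \Cref{assumption_on_x}. Hence this fitting term has standard deviation of order $\sigma/\sqrt{\ell}$, and by drowning/smudging — exactly as in the proof of \Cref{Lemma1} — it is statistically negligible against the $\Theta(\sigma)$ channel noise for superpolynomial $\ell$. Consequently $e_x$ is determined, up to negligible statistical distance, by the single channel error $\varepsilon_x$ alone.

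It then suffices to argue that $\varepsilon_x$ is independent of the adversary's view. By \Cref{Gauss}, the channel errors are drawn i.i.d.\ across distinct inputs, so $\varepsilon_x$ is independent of every $\varepsilon_{x'}$ with $x'\neq x$; since each queried output $\mathfrak{M}_h(x')$ is, up to the same negligible fitting term, a function of $\varepsilon_{x'}$ only, the whole transcript is independent of $\varepsilon_x$. The spatial-decorrelation property recalled in \Cref{Wave} guarantees that an external eavesdropper, situated at least half a wavelength from the communicating parties, observes an uncorrelated channel and therefore cannot measure $\varepsilon_x$ directly either. Combining these observations, the conditional law of $e_x$ given the adversary's view equals its marginal $\mathrm{\Psi}(0,\hat{\sigma}^2)$ up to a negligible function — precisely condition (2) of \Cref{P2SI} — so no external PPT adversary can guess $e_x$ with advantage beyond that of blindly sampling $\mathrm{\Psi}(0,\hat{\sigma}^2)$.

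The main obstacle is the coupling introduced by the fact that the single hypothesis $h$ is fitted to the \emph{entire} training set, including the pair $(x,y)$ that defines $e_x$: strictly speaking the outputs $e_{x'}$ and $e_x$ are correlated through $h$ and are not literally independent. Overcoming this requires the quantitative leave-one-out bound that any individual data point perturbs the least-squares coefficients $(\hat{\beta}_0,\hat{\beta}_1)$ by only $O(\sigma/\sqrt{\ell})$, so that for superpolynomial $\ell$ this coupling is washed out by smudging. The bulk of the rigor therefore lies in verifying that the \emph{accumulated} influence of the $\poly(\L)$ revealed outputs on the distribution of $e_x$ remains negligible, rather than in the independence of the raw channel errors themselves.
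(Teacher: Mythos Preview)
Your proposal is correct and follows essentially the same line as the paper: invoke \Cref{Lemma1} for the marginal distribution, then argue that the i.i.d.\ nature of the channel errors $\varepsilon_x$ (together with the randomness of $f$ and of the inputs) prevents an external adversary from predicting $e_x$. The paper's own proof is considerably more informal---it simply lists the sources of randomness (random coefficients of $f$, random inputs $x$, i.i.d.\ channel noise) and concludes directly, without the explicit error decomposition, the smudging bound on the fitting term, or the spatial-decorrelation argument you supply; in particular, your identification of the coupling-through-$h$ obstacle and the leave-one-out reasoning needed to dismiss it is more careful than anything in the paper's argument, which glosses over this point entirely.
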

\begin{proof}
	It follows from \Cref{Lemma1} that $\mathfrak{M}_h$ outputs from the target rounded Gaussian distribution. Note the following straightforward observations:
	\begin{itemize}
		\item Since each coefficient of $f(x)$ is randomly sampled from $\mathbb{Z}_m$, $f(x)$ is a random function.
		\item The inputs to $f(x)$, $x \leftarrow \mathbb{Z}$, are sampled randomly.
		\item The Gaussian channel introduces a noise $\varepsilon_x$ to $f(x)$, that is drawn i.i.d. from a Gaussian distribution $\mathcal{N}(0,\sigma^2)$ \cite{JanaSri[09],JieYan[12],PremSne[14],XiQi[16],HaiKui[11],XuKui[12],Ye[10],Jiang[13],Zeng[15]}. Hence, the receiving parties get a random element $f(x) + \varepsilon_x$.
		\item It follows from the observations stated above that $\lceil |f(x) - h(x)| \rfloor$ outputs a random element from the target rounded Gaussian $\mathrm{\Psi}(0, \hat{\sigma}^2)$.
	\end{itemize}
	Hence, $\mathfrak{M}_h: x \mapsto e_x$ is a deterministic mapping that maps each of the $\ell$ inputs $\{x_i\}_{i=1}^\ell$ to an element $e_x$ in the desired rounded Gaussian $\mathrm{\Psi}(0, \hat{\sigma}^2)$ such that all conditions from \Cref{P2SI} are satisfied. It follows that given an input $x$ not seen before, no external PPT adversary $\mathcal{A}$ has non-negligible advantage in guessing $e_x$. \qed
\end{proof}

\section{Mutual Information Analysis}\label{Mutual}
\begin{definition}
	\emph{Let $f_X$ be the p.d.f. of a continuous random variable $X$. Then, the \emph{differential entropy} of $X$ is
	$$H(X)=-\int f_X(x)\log f_X(x)\,dx.$$}
\end{definition}

\begin{definition}
	\emph{The \emph{mutual information}, $I(X;Y)$, of two continuous random variables $X$ and $Y$ with joint p.d.f.\ $f_{X,Y}$ and marginal p.d.f.'s $f_X$ and $f_Y$ respectively, is
	$$I(X;Y)=\int\int f_{X,Y}(x,y)\log\left(\frac{f_{X,Y}(x,y)}{f_X(x)f_Y(y)}\right)\,dy\,dx.$$}
\end{definition}

From the above definitions, it is easy to prove that $I(X;Y)$ satisfies the equality:
$$I(X;Y)=H(X)+H(Y)-H(X,Y).$$

Let us now describe our aim. Suppose, for $i=1,2,\ldots,\ell$, we have
$$y_i\sim \mathcal{N}(\alpha+\beta x_i,\sigma^2)\quad\text{and}\quad z_i\sim \mathcal{N}(\alpha+\beta w_i,\sigma^2),$$
with $x_i=w_i$ for $i\in[a]$. Let $h_1(x)=\hat{\alpha}_1 x+\hat{\beta}_1$ and $h_2(w)=\hat{\alpha}_2 w+\hat{\beta}_2$ be the linear regression hypotheses obtained from the samples $(x_i,y_i)$ and $(w_i,z_i)$, respectively. Our goal is to compute an expression for the mutual information
$$I((\hat{\alpha_1},\hat{\beta_1});(\hat{\alpha_2},\hat{\beta_2})).$$

We begin by recalling the following standard fact:

\begin{proposition}
	\label{multivariate_normal_differential_entropy}
	Let $\mathbf{X}\sim \mathcal{N}(\mathbf{v},\mathbf{\Sigma})$, where $\mathbf{v}\in\mathbb{R}^d$ and $\mathbf{\Sigma}\in\mathbb{R}^{d\times d}$. Then:
	$$H(\mathbf{X})=\frac{1}{2}\log(\det\mathbf{\Sigma})+\frac{d}{2}(1+\log(2\pi)).$$
\end{proposition}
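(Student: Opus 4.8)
The plan is to evaluate the differential entropy directly from its definition, $H(\mathbf{X})=-\int f_{\mathbf{X}}(\mathbf{x})\log f_{\mathbf{X}}(\mathbf{x})\,d\mathbf{x}$, by substituting the multivariate Gaussian density. Since $\mathbf{X}\sim\mathcal{N}(\mathbf{v},\mathbf{\Sigma})$ admits a density, $\mathbf{\Sigma}$ is symmetric positive definite and hence invertible, so
$$f_{\mathbf{X}}(\mathbf{x})=\frac{1}{(2\pi)^{d/2}(\det\mathbf{\Sigma})^{1/2}}\exp\left(-\frac{1}{2}(\mathbf{x}-\mathbf{v})^T\mathbf{\Sigma}^{-1}(\mathbf{x}-\mathbf{v})\right).$$
Taking logarithms gives
$$\log f_{\mathbf{X}}(\mathbf{x})=-\frac{d}{2}\log(2\pi)-\frac{1}{2}\log(\det\mathbf{\Sigma})-\frac{1}{2}(\mathbf{x}-\mathbf{v})^T\mathbf{\Sigma}^{-1}(\mathbf{x}-\mathbf{v}).$$

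First I would multiply through by $-f_{\mathbf{X}}(\mathbf{x})$ and integrate term by term. The first two terms are constants, so integrating them against $f_{\mathbf{X}}$ simply reproduces the constants (using $\int f_{\mathbf{X}}(\mathbf{x})\,d\mathbf{x}=1$), contributing $\tfrac{d}{2}\log(2\pi)+\tfrac{1}{2}\log(\det\mathbf{\Sigma})$. This reduces the whole computation to evaluating the expectation of the quadratic form, namely
$$H(\mathbf{X})=\frac{d}{2}\log(2\pi)+\frac{1}{2}\log(\det\mathbf{\Sigma})+\frac{1}{2}\,\E\!\left[(\mathbf{X}-\mathbf{v})^T\mathbf{\Sigma}^{-1}(\mathbf{X}-\mathbf{v})\right].$$

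The one substantive step is therefore to show $\E[(\mathbf{X}-\mathbf{v})^T\mathbf{\Sigma}^{-1}(\mathbf{X}-\mathbf{v})]=d$. I would handle this with the trace trick: since a scalar equals its own trace and the trace is cyclic, $(\mathbf{X}-\mathbf{v})^T\mathbf{\Sigma}^{-1}(\mathbf{X}-\mathbf{v})=\mathrm{tr}\!\left(\mathbf{\Sigma}^{-1}(\mathbf{X}-\mathbf{v})(\mathbf{X}-\mathbf{v})^T\right)$. Using linearity of expectation and of the trace, together with the defining property $\E[(\mathbf{X}-\mathbf{v})(\mathbf{X}-\mathbf{v})^T]=\mathbf{\Sigma}$ of the covariance matrix, this expectation becomes $\mathrm{tr}(\mathbf{\Sigma}^{-1}\mathbf{\Sigma})=\mathrm{tr}(\mathbf{I}_d)=d$. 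Substituting back yields $H(\mathbf{X})=\tfrac{1}{2}\log(\det\mathbf{\Sigma})+\tfrac{d}{2}\log(2\pi)+\tfrac{d}{2}=\tfrac{1}{2}\log(\det\mathbf{\Sigma})+\tfrac{d}{2}(1+\log(2\pi))$, as claimed.

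The only place requiring genuine care is the evaluation of the quadratic-form expectation; the trace manipulation makes this painless and avoids diagonalizing $\mathbf{\Sigma}$. An alternative I would keep in reserve is to apply the whitening transformation $\mathbf{Y}=\mathbf{\Sigma}^{-1/2}(\mathbf{X}-\mathbf{v})$, which has a standard normal distribution, and invoke the change-of-variables rule $H(\mathbf{A}\mathbf{X}+\mathbf{b})=H(\mathbf{X})+\log|\det\mathbf{A}|$ to reduce to the product of $d$ independent standard one-dimensional Gaussians; this recovers the $\det\mathbf{\Sigma}$ factor cleanly but needs the affine-transformation lemma for differential entropy, so the direct trace computation is the more self-contained route.
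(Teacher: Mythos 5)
Your proof is correct. Note that the paper does not prove this proposition at all --- it is explicitly recalled as a ``standard fact'' without argument --- so there is no internal proof to compare against; your direct computation (integrating the log-density term by term and evaluating $\E\left[(\mathbf{X}-\mathbf{v})^T\mathbf{\Sigma}^{-1}(\mathbf{X}-\mathbf{v})\right]=d$ via the trace trick) is the standard self-contained justification and fills this gap cleanly. One small point worth being conscious of: your final constant $\tfrac{d}{2}$ from the quadratic form presumes the logarithm in the entropy definition is the natural logarithm, which is indeed the convention the paper's formula $\tfrac{d}{2}(1+\log(2\pi))$ requires, so your argument and the stated result are consistent.
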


Our main result is the following:

\begin{proposition}\label{mainProp}
	Let $\hat{\alpha_1}$, $\hat{\beta_1}$, $\hat{\alpha_2}$, $\hat{\beta_2}$ be as above. Then
	\begin{gather*}
		H(\hat{\alpha_1},\hat{\beta_1})=2\log\sigma-\frac{1}{2}\log(\ell X_2-X_1^2)+(1+\log(2\pi)), \\
		H(\hat{\alpha_2},\hat{\beta_2})=2\log\sigma-\frac{1}{2}\log(\ell W_2-W_1^2)+(1+\log(2\pi)),
	\end{gather*}
	and
	\begin{align*}
		&\ H(\hat{\alpha_1},\hat{\beta_1},\hat{\alpha_2},\hat{\beta_2}) \\
		=&\ 4\log\sigma-\frac{1}{2}\log\left((\ell X_2-X_1^2)(\ell W_2-W_1^2)\right)+2(1+\log(2\pi)) \\
		&\qquad+\frac{1}{2}\log\left(1-\frac{\left(\ell C_2-2C_1X_1+a X_2\right)\left(\ell C_2-2C_1W_1+a W_2\right)}{(\ell X_2-X_1^2)(\ell W_2-W_1^2)}\right. \\
		&\qquad\left.+\frac{\left((a -1)C_2-C_3\right)\left((a -1)C_2-C_3+\ell(X_2+W_2)-2X_1W_1\right)}{(\ell X_2-X_1^2)(\ell W_2-W_1^2)}\right),
	\end{align*}
	where $X_1=\sum_{i=1}^{\ell} x_i$, $X_2=\sum_{i=1}^{\ell} x_i^2$, $W_1=\sum_{i=1}^{\ell} w_i$, $W_2=\sum_{i=1}^{\ell} w_i^2$, $C_1=\sum_{i=1}^a x_i=\sum_{i=1}^a w_i$, $C_2=\sum_{i=1}^a x_i^2=\sum_{i=1}^a w_i^2$ and $C_3=\sum_{i=1}^{\ell}\sum_{j=1,j\neq i}^{\ell}x_ix_j$. The mutual information between $(\hat{\alpha_1},\hat{\beta_1})$ and $(\hat{\alpha_2},\hat{\beta_2})$ is:
	\begin{align*}
		&I((\hat{\alpha_1},\hat{\beta_1});(\hat{\alpha_2},\hat{\beta_2})) \\
		=&\ -\frac{1}{2}\log\left(1-\frac{\left(\ell C_2-2C_1X_1+aX_2\right)\left(\ell C_2-2C_1W_1+aW_2\right)}{(\ell X_2-X_1^2)(\ell W_2-W_1^2)}\right. \\
		&\qquad\left.+\frac{\left((a-1)C_2-C_3\right)\left((a-1)C_2-C_3+\ell(X_2+W_2)-2X_1W_1\right)}{(\ell X_2-X_1^2)(\ell W_2-W_1^2)}\right).
	\end{align*}
\end{proposition}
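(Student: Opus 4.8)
The plan is to exploit that $(\hat{\alpha}_1,\hat{\beta}_1,\hat{\alpha}_2,\hat{\beta}_2)$ is a jointly Gaussian vector (being a linear image of the Gaussian data, as made precise below), so that by Proposition~\ref{multivariate_normal_differential_entropy} every entropy is, up to the additive constants $\log\sigma$ and $(1+\log 2\pi)$, a log-determinant of a covariance matrix, and the identity $I=H(\hat{\alpha}_1,\hat{\beta}_1)+H(\hat{\alpha}_2,\hat{\beta}_2)-H(\hat{\alpha}_1,\hat{\beta}_1,\hat{\alpha}_2,\hat{\beta}_2)$ does the rest. Writing $P=\begin{pmatrix}\ell & X_1\\ X_1 & X_2\end{pmatrix}$ and $Q=\begin{pmatrix}\ell & W_1\\ W_1 & W_2\end{pmatrix}$, Proposition~\ref{regression_hypothesis_distribution} gives $\operatorname{Cov}(\hat{\alpha}_1,\hat{\beta}_1)=\sigma^2P^{-1}$ with $\det(\sigma^2P^{-1})=\sigma^4/(\ell X_2-X_1^2)$, so Proposition~\ref{multivariate_normal_differential_entropy} with $d=2$ yields the stated $H(\hat{\alpha}_1,\hat{\beta}_1)$, and $H(\hat{\alpha}_2,\hat{\beta}_2)$ follows symmetrically with $Q$. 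All the content is therefore in the joint entropy, i.e.\ in the full $4\times 4$ covariance $\mathbf{\Sigma}$.

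To build $\mathbf{\Sigma}$ I would write the estimators as linear images of the data, $(\hat{\alpha}_1,\hat{\beta}_1)^T=P^{-1}\mathbf{X}^T\mathbf{y}$ and $(\hat{\alpha}_2,\hat{\beta}_2)^T=Q^{-1}\mathbf{W}^T\mathbf{z}$, where $\mathbf{X},\mathbf{W}$ are the design matrices with rows $(1,x_i)$ and $(1,w_i)$. The crucial modelling point is that the $a$ shared sample points contribute the \emph{same} observation to both regressions, so the stacked data vector has covariance
\[\operatorname{Cov}\begin{pmatrix}\mathbf{y}\\ \mathbf{z}\end{pmatrix}=\sigma^2\begin{pmatrix}I_\ell & D\\ D & I_\ell\end{pmatrix},\qquad D=\operatorname{diag}(\underbrace{1,\dots,1}_{a},0,\dots,0).\]
Pushing this through the two linear maps, the diagonal blocks of $\mathbf{\Sigma}$ reproduce the marginal covariances $\sigma^2P^{-1}$ and $\sigma^2Q^{-1}$, while the off-diagonal block equals $\sigma^2P^{-1}RQ^{-1}$ with
\[R=\mathbf{X}^TD\mathbf{W}=\sum_{i=1}^{a}\begin{pmatrix}1\\ x_i\end{pmatrix}\begin{pmatrix}1 & w_i\end{pmatrix}=\begin{pmatrix}a & C_1\\ C_1 & C_2\end{pmatrix},\]
using $x_i=w_i$ for $i\leq a$.

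Finally, the $\log\sigma$ and $(1+\log 2\pi)$ terms cancel in $I$, and the block-determinant identity $\det\mathbf{\Sigma}=\det(\sigma^2P^{-1})\det(\sigma^2Q^{-1})\det(I_2-M)$ (with the $\sigma$-powers cancelling exactly) leaves
\[I\bigl((\hat{\alpha}_1,\hat{\beta}_1);(\hat{\alpha}_2,\hat{\beta}_2)\bigr)=-\tfrac12\log\det(I_2-M),\qquad M=RP^{-1}RQ^{-1}.\]
For the $2\times 2$ matrix $M$ I would use $\det(I_2-M)=1-\operatorname{tr}M+\det M$, read off $\det M=(\det R)^2/(\det P\det Q)=(aC_2-C_1^2)^2/\bigl((\ell X_2-X_1^2)(\ell W_2-W_1^2)\bigr)$ from multiplicativity, and expand $\operatorname{tr}M$ by explicit matrix multiplication. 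The main obstacle is purely the bookkeeping in $\operatorname{tr}M$: after expansion the difference between $(\det P\det Q)\operatorname{tr}M$ and the product $AB$ (with $A=\ell C_2-2C_1X_1+aX_2$, $B=\ell C_2-2C_1W_1+aW_2$) factors through $(aC_2-C_1^2)=\det R$, and the combination $(a-1)C_2-C_3$ appearing in the statement collapses precisely to $E:=\det R=aC_2-C_1^2$, so the two cross terms reassemble into the single product $E\bigl(E+\ell(X_2+W_2)-2X_1W_1\bigr)$ displayed. Thus I expect to verify
\[\det(I_2-M)=1-\frac{AB}{\det P\,\det Q}+\frac{E\bigl(E+\ell(X_2+W_2)-2X_1W_1\bigr)}{\det P\,\det Q},\]
and a clean sanity check throughout is that, because $\operatorname{tr}(RP^{-1}RQ^{-1})=\operatorname{tr}(RQ^{-1}RP^{-1})$, the entire expression must be symmetric under $x\leftrightarrow w$ (equivalently $P\leftrightarrow Q$), exactly as any mutual information is forced to be.
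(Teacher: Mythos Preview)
Your proposal is correct and follows essentially the same route as the paper: represent $(\hat{\alpha}_1,\hat{\beta}_1,\hat{\alpha}_2,\hat{\beta}_2)$ as a linear image of the stacked data vector with cross-covariance block $\sigma^2[I_\ell]_a$, compute the $4\times4$ covariance, and reduce the determinant via a Schur complement. Your organization is a bit tidier than the paper's --- packaging the cross block as $P^{-1}RQ^{-1}$ with the symmetric $R=\begin{pmatrix}a&C_1\\C_1&C_2\end{pmatrix}$, using $\det(I_2-M)=1-\operatorname{tr}M+\det M$, and observing that $(a-1)C_2-C_3=aC_2-C_1^2=\det R$ (which indeed requires reading the sum defining $C_3$ as ranging over $i,j\le a$, consistent with the other $C$-quantities) --- but the underlying argument is the same.
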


\begin{proof}
	The expressions for $H(\hat{\alpha_1},\hat{\beta_1})$ and $H(\hat{\alpha_2},\hat{\beta_2})$ follow from Propositions \ref{regression_hypothesis_distribution} and \ref{multivariate_normal_differential_entropy}, and the expression for $I((\hat{\alpha_1},\hat{\beta_1});(\hat{\alpha_2},\hat{\beta_2}))$ is given by:
	$$I((\hat{\alpha_1},\hat{\beta_1});(\hat{\alpha_2},\hat{\beta_2}))=H(\hat{\alpha_1},\hat{\beta_1})+H(\hat{\alpha_2},\hat{\beta_2})-H(\hat{\alpha_1},\hat{\beta_1},\hat{\alpha_2},\hat{\beta_2}).$$
	It remains to derive the expression for $H(\hat{\alpha_1},\hat{\beta_1},\hat{\alpha_2},\hat{\beta_2})$.
	
	First, define the matrices
	$$\mathbf{X}=
	\begin{pmatrix}
		1 & x_1 \\
		1 & x_2 \\
		\vdots & \vdots \\
		1 & x_\ell
	\end{pmatrix},
	\qquad
	\mathbf{W}=
	\begin{pmatrix}
		1 & w_1 \\
		1 & w_2 \\
		\vdots & \vdots \\
		1 & w_\ell
	\end{pmatrix}.
	$$
	Then
	\begin{align*}
		\boldsymbol{\hat{\theta}}:=
		\begin{pmatrix}
			\hat{\alpha_1} \\
			\hat{\beta_1} \\
			\hat{\alpha_2} \\
			\hat{\beta_2}
		\end{pmatrix}
		&=
		\begin{pmatrix}
			\alpha \\
			\beta \\
			\alpha \\
			\beta
		\end{pmatrix}
		+
		\begin{pmatrix}
			(\mathbf{X}^T\mathbf{X})^{-1}\mathbf{X}^T\mathbf{U} \\
			(\mathbf{W}^T\mathbf{W})^{-1}\mathbf{W}^T\mathbf{V}
		\end{pmatrix} \\
		&=\begin{pmatrix}
			\alpha \\
			\beta \\
			\alpha \\
			\beta
		\end{pmatrix}
		+
		\begin{pmatrix}
			(\mathbf{X}^T\mathbf{X})^{-1}\mathbf{X}^T & 0 \\
			0 & (\mathbf{W}^T\mathbf{W})^{-1}\mathbf{W}^T
		\end{pmatrix}
		\begin{pmatrix}
			\mathbf{U} \\
			\mathbf{V}
		\end{pmatrix},
	\end{align*}
	where $\mathbf{U}$, $\mathbf{V}\sim \mathcal{N}(0,\sigma^2 \mathbf{I}_\ell)$. Hence, it follows that:
	$$\var(\boldsymbol{\hat{\theta}})=
	\begin{pmatrix}
		(\mathbf{X}^T\mathbf{X})^{-1}\mathbf{X}^T & 0 \\
		0 & (\mathbf{W}^T\mathbf{W})^{-1}\mathbf{W}^T
	\end{pmatrix}
	\var{
		\begin{pmatrix}
			\mathbf{U} \\
			\mathbf{V}
		\end{pmatrix}
	}
	\begin{pmatrix}
		\mathbf{X}(\mathbf{X}^T\mathbf{X})^{-1} & 0 \\
		0 & \mathbf{W}(\mathbf{W}^T\mathbf{W})^{-1}
	\end{pmatrix}
	.$$
	For any matrix $\mathbf{M}=(M_{i,j})$, let $[\mathbf{M}]_a$ denote the matrix with the same dimensions as $\mathbf{M}$, and with entries
	$$([\mathbf{M}]_a)_{i,j}=
	\begin{cases}
		M_{i,j} & \text{if }i,j\leq a, \\
		0 & \text{otherwise}.
	\end{cases}
	$$
	Note that
	$$\var{
		\begin{pmatrix}
			\mathbf{U} \\
			\mathbf{V}
		\end{pmatrix}
	}=
	\begin{pmatrix}
		\sigma^2 \mathbf{I}_\ell & \sigma^2 [\mathbf{I}_\ell]_a \\
		\sigma^2 [\mathbf{I}_\ell]_a & \sigma^2 \mathbf{I}_\ell
	\end{pmatrix},
	$$
	which implies that
	\begin{align*}
		\var(\boldsymbol{\hat{\theta}})=\sigma^2
		\begin{pmatrix}
			(\mathbf{X}^T\mathbf{X})^{-1} & [(\mathbf{X}^T\mathbf{X})^{-1}\mathbf{X}^T]_a(\mathbf{W}(\mathbf{W}^T\mathbf{W})^{-1}) \\
			[(\mathbf{W}^T\mathbf{W})^{-1}\mathbf{W}^T]_a(\mathbf{X}(\mathbf{X}^T\mathbf{X})^{-1}) & (\mathbf{W}^T\mathbf{W})^{-1}
		\end{pmatrix}.
	\end{align*}
	Hence,
	$$\det(\var(\boldsymbol{\hat{\theta}}))=\sigma^8\det(\mathbf{A}-\mathbf{B}\mathbf{D}^{-1}\mathbf{C})\det(\mathbf{D}),$$
	where
	\begin{align*}
		\mathbf{A}&=(\mathbf{X}^T\mathbf{X})^{-1}=
		\begin{pmatrix}
			\frac{X_2}{\ell X_2-X_1^2} & -\frac{X_1}{\ell X_2-X_1^2} \\
			-\frac{X_1}{\ell X_2-X_1^2} & \frac{\ell}{\ell X_2-X_1^2}
		\end{pmatrix}, \\
		\mathbf{B}&=[(\mathbf{X}^T\mathbf{X})^{-1}\mathbf{X}^T]_a(\mathbf{W}(\mathbf{W}^T\mathbf{W})^{-1}) \\
		&=
		\begin{pmatrix}
			\frac{\sum_{i=1}^{a} (X_2-x_iX_1)(W_2-w_iW_1)}{(\ell X_2-X_1^2)(\ell W_2-W_1^2)} & \frac{\sum_{i=1}^{a} (\ell w_i-W_1)(X_2-x_iX_1)}{(\ell X_2-X_1^2)(\ell W_2-W_1^2)} \\
			\frac{\sum_{i=1}^{a} (\ell x_i-X_1)(W_2-w_iW_1)}{(\ell X_2-X_1^2)(\ell W_2-W_1^2)} & \frac{\sum_{i=1}^{a} (\ell x_i-X_1)(\ell w_i-W_1)}{(\ell X_2-X_1^2)(\ell W_2-W_1^2)}
		\end{pmatrix},\\
		\mathbf{C}&=[(\mathbf{W}^T\mathbf{W})^{-1}\mathbf{W}^T]_a(\mathbf{X}(\mathbf{X}^T\mathbf{X})^{-1}) \\
		&=
		\begin{pmatrix}
			\frac{\sum_{i=1}^{a} (X_2-x_iX_1)(W_2-w_iW_1)}{(\ell X_2-X_1^2)(\ell W_2-W_1^2)} & \frac{\sum_{i=1}^{a} (\ell x_i-X_1)(W_2-w_iW_1)}{(\ell X_2-X_1^2)(\ell W_2-W_1^2)} \\
			\frac{\sum_{i=1}^{a} (\ell w_i-W_1)(X_2-x_iX_1)}{(\ell X_2-X_1^2)(\ell W_2-W_1^2)} & \frac{\sum_{i=1}^{a} (\ell x_i-X_1)(\ell w_i-W_1)}{(\ell X_2-X_1^2)(\ell W_2-W_1^2)}
		\end{pmatrix}, \\
		\mathbf{D}&=(\mathbf{W}^T\mathbf{W})^{-1}=
		\begin{pmatrix}
			\ell & W_1 \\
			W_1 & W_2
		\end{pmatrix}^{-1}.
	\end{align*}
	After a lengthy computation, we obtain the following expression for $\det(\var(\boldsymbol{\hat{\theta}}))$:
	\begin{align*}
		&\frac{\sigma^8}{(\ell X_2-X_1^2)(\ell W_2-W_1^2)}\left(1-\frac{\left(\ell C_2-2C_1X_1+aX_2\right)\left(\ell C_2-2C_1W_1+aW_2\right)}{(\ell X_2-X_1^2)(\ell W_2-W_1^2)}\right. \\
		&\qquad\left.+\frac{\left((a-1)C_2-C_3\right)\left((a-1)C_2-C_3+\ell(X_2+W_2)-2X_1W_1\right)}{(\ell X_2-X_1^2)(\ell W_2-W_1^2)}\right).
	\end{align*}
	The expression for $H(\hat{\alpha_1},\hat{\beta_1},\hat{\alpha_2},\hat{\beta_2})$ follows by applying Proposition \ref{multivariate_normal_differential_entropy}. \qed
\end{proof}

\section{Learning with Linear Regression (LWLR)}\label{LWLRsec}
In this section, we define LWLR and reduce its hardness to LWE. As mentioned in Note \ref{note1}, our RGPC protocol allows freedom in tweaking the target rounded Gaussian distribution $\mathrm{\Psi}(0, \hat{\sigma}^2)$ by simply selecting the desired standard deviation $\sigma$. Therefore, when referring to the desired rounded Gaussian distribution for the hardness proofs, we use $\mathrm{\Psi}(0, \hat{\sigma}^2)$ (or $\mathrm{\Psi}(0, \hat{\sigma}^2) \bmod m$, i.e., $\mathrm{\Psi}_m(0, \hat{\sigma}^2))$ without divulging into the specific value of $\sigma$.

Let $\mathcal{P} = \{P_i\}_{i=1}^n$ be a set of $n$ parties.

\begin{definition}
	\emph{For modulus $m$ and a uniformly sampled $\textbf{a} \leftarrow \mathbb{Z}^w_m$, the learning with linear regression (LWLR) distribution LWLR${}_{\textbf{s},m,w}$ over $\mathbb{Z}_m^w \times \mathbb{Z}_m$ is defined as: $(\textbf{a}, x + e_{x}),$ where $x = \langle \textbf{a}, \textbf{s} \rangle$ and $e_{x} \in \mathrm{\Psi}(0, \hat{\sigma}^2)$ is a rounded Gaussian error generated by the RGPC protocol, on input $x$.}
\end{definition}

\begin{theorem}\label{LWLRthm}
	For modulus $m$, security parameter $\L$, $\ell = g(\L)$ samples (where $g$ is a superpolynomial function), PPT adversary $\mathcal{A} \notin \mathcal{P}$, some distribution over secret $\textbf{s} \in \mathbb{Z}_m^w$, and a deterministic mapping $\mathfrak{M}_h: \mathbb{Z} \to \mathrm{\Psi}(0, \hat{\sigma}^2)$ generated by the RGPC protocol, where $\mathrm{\Psi}(0, \hat{\sigma}^2)$ is the target rounded Gaussian distribution, solving decision-LWLR${}_{\textbf{s},m,w}$ is at least as hard as solving the decision-LWE${}_{\textbf{s},m,w}$ problem for the same distribution over $\textbf{s}$.
\end{theorem}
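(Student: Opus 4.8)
The plan is to prove the statement by a reduction that exhibits any efficient distinguisher for decision-LWLR as an equally efficient distinguisher for decision-LWE; this establishes that LWLR is at least as hard as LWE. The crux is that, to an external PPT adversary $\mathcal{A}\notin\mathcal{P}$, the deterministic errors $e_x=\mathfrak{M}_h(x)$ produced by RGPC are computationally indistinguishable from errors freshly sampled from the target rounded Gaussian $\mathrm{\Psi}(0,\hat{\sigma}^2)$. This is exactly what \Cref{Lemma1} supplies (the errors lie in the target rounded Gaussian, using that $\ell=g(\L)$ is superpolynomial so that the regression-fitting component is smudged out) together with \Cref{Obs1} (the map $x\mapsto e_x$ sends a random input to a random element of $\mathrm{\Psi}(0,\hat{\sigma}^2)$, with no non-negligible predictive advantage for $\mathcal{A}$); equivalently, $\mathfrak{M}_h$ satisfies \Cref{P2SI}.

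Concretely, I would introduce three sample distributions over $\mathbb{Z}_m^w\times\mathbb{Z}_m$ for the same secret distribution over $\textbf{s}$: the LWLR distribution $D_{\mathrm{LWLR}}=(\textbf{a},\langle\textbf{a},\textbf{s}\rangle+\mathfrak{M}_h(\langle\textbf{a},\textbf{s}\rangle))$, the LWE distribution $D_{\mathrm{LWE}}=(\textbf{a},\langle\textbf{a},\textbf{s}\rangle+e)$ with $e\leftarrow\mathrm{\Psi}(0,\hat{\sigma}^2)$, and the uniform distribution $D_{\mathrm{unif}}=(\textbf{a},u)$ with $u\xleftarrow{\;\$\;}\mathbb{Z}_m$. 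The first step is to establish that $D_{\mathrm{LWLR}}$ and $D_{\mathrm{LWE}}$ are computationally indistinguishable over the collection of challenge samples that the adversary receives. I would do this with a hybrid argument that replaces the deterministic error in one sample at a time by a fresh draw from $\mathrm{\Psi}(0,\hat{\sigma}^2)$; by \Cref{Obs1}, each swap perturbs $\mathcal{A}$'s acceptance probability by at most a negligible amount, and since the number of challenge samples given to a PPT adversary is polynomial in $\L$, the accumulated change remains negligible.

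Given this indistinguishability, the reduction itself is immediate. For any PPT $\mathcal{A}$, the triangle inequality for the distinguishing advantage $Adv_{\mathcal{A}}$ yields
\begin{align*}
	Adv_{\mathcal{A}}(D_{\mathrm{LWLR}},D_{\mathrm{unif}}) &\leq Adv_{\mathcal{A}}(D_{\mathrm{LWLR}},D_{\mathrm{LWE}})+Adv_{\mathcal{A}}(D_{\mathrm{LWE}},D_{\mathrm{unif}}) \\
	&\leq \eta(\L)+Adv_{\mathcal{A}}(D_{\mathrm{LWE}},D_{\mathrm{unif}}),
\end{align*}
for a negligible $\eta$. Rearranging gives $Adv_{\mathcal{A}}(D_{\mathrm{LWE}},D_{\mathrm{unif}})\geq Adv_{\mathcal{A}}(D_{\mathrm{LWLR}},D_{\mathrm{unif}})-\eta(\L)$, so if $\mathcal{A}$ solves decision-LWLR with non-negligible advantage then the very same $\mathcal{A}$ solves decision-LWE with non-negligible advantage. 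No transformation of the samples is required, precisely because the two games are fed to $\mathcal{A}$ interchangeably once $D_{\mathrm{LWLR}}\approx D_{\mathrm{LWE}}$.

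The main obstacle I anticipate is the first step---rigorously justifying the indistinguishability $D_{\mathrm{LWLR}}\approx D_{\mathrm{LWE}}$---because the LWLR errors are deterministic functions of $x=\langle\textbf{a},\textbf{s}\rangle$ rather than fresh independent noise. Two points need care. First, the external-adversary hypothesis $\mathcal{A}\notin\mathcal{P}$ is essential: only parties in $\mathcal{P}$ observe the physical-layer training data and the hypothesis $h$, so only for $\mathcal{A}\notin\mathcal{P}$ may one invoke \Cref{Obs1} to treat each $e_x$ as unpredictable. Second, one must ensure that the determinism does not leak across samples---correlated or repeated values of $\langle\textbf{a},\textbf{s}\rangle$ could in principle reveal that $e_x$ is a fixed function of $x$---and this is controlled by the conditional-entropy condition of \Cref{P2SI}, which forces the pairwise conditional entropies of the errors to match, up to a negligible function, those of genuinely independent rounded Gaussians. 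I would therefore make each hybrid step invoke \Cref{Obs1,P2SI} relative to the exact conditioning induced by the previously revealed samples, keeping every per-step discrepancy negligible.
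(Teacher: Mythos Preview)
Your proposal is correct and follows essentially the same approach as the paper: both establish that $D_{\mathrm{LWLR}}$ and $D_{\mathrm{LWE}}$ are computationally indistinguishable to an external adversary by appealing to \Cref{Lemma1} (superpolynomial $\ell$ forces the errors into the target rounded Gaussian) and \Cref{Obs1} (the map $\mathfrak{M}_h$ is unpredictable to $\mathcal{A}\notin\mathcal{P}$), and then conclude that any LWLR distinguisher is an LWE distinguisher. Your version is somewhat more explicit---you spell out the hybrid over samples and route the final bound through $D_{\mathrm{unif}}$ via the triangle inequality---whereas the paper phrases the indistinguishability of the two error distributions as a direct contradiction argument between two games $\mathfrak{G}_1$ and $\mathfrak{G}_2$; but the substance is the same.
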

\begin{proof}
	Recall from \Cref{Lemma1} that, since $\ell$ is superpolynomial, the errors belong to the desired rounded Gaussian distribution $\mathrm{\Psi}(0, \hat{\sigma}^2)$. As given, for a fixed secret $\textbf{s} \in \mathbb{Z}^w_m$, a decision-LWLR${}_{\textbf{s},m,w}$ instance is defined as $(\textbf{a}, x + e_x)$ for $\textbf{a} \xleftarrow{\; \$ \;} \mathbb{Z}^w_m$ and $x = \langle \textbf{a},\textbf{s} \rangle$. Recall that a decision-LWE${}_{\textbf{s},m,w}$ instance is defined as $(\textbf{a}, \langle \textbf{a}, \textbf{s} \rangle + e)$ for $\textbf{a} \leftarrow \mathbb{Z}^w_m$ and $e \leftarrow \chi$ for a rounded (or discrete) Gaussian distribution $\chi$. We know from \Cref{Obs1} that $\mathfrak{M}_h$ is a deterministic mapping from random inputs $x \leftarrow \mathbb{Z}$ to errors $e_x \in \mathrm{\Psi}(0, \hat{\sigma}^2)$. We define the following two games: 
	\begin{itemize}
		\item $\mathfrak{G}_1$: in this game, we begin by fixing a secret $\textbf{s}$. Each query from the attacker is answered with an LWLR${}_{\textbf{s},m,w}$ instance as: $(\textbf{a}, x + e_x)$ for a unique $\textbf{a} \xleftarrow{\; \$ \;} \mathbb{Z}^w_m$, and $x = \langle \textbf{a},\textbf{s} \rangle$. The error $e_x \in \mathrm{\Psi}(0, \hat{\sigma}^2)$ is generated as: $e_x = \mathfrak{M}_h(x)$.
		\item $\mathfrak{G}_2$: in this game, we begin by fixing a secret $\textbf{s}$. Each query from the attacker is answered with an LWE${}_{\textbf{s},m,w}$ instance as: $(\textbf{a}, \langle \textbf{a}, \textbf{s} \rangle + e)$ for $\textbf{a} \xleftarrow{\; \$ \;} \mathbb{Z}^w_m$ and $e \leftarrow \mathrm{\Psi}(0, \tilde{\sigma}^2)$, where $\mathrm{\Psi}(0, \tilde{\sigma}^2)$ denotes a rounded Gaussian distribution that is suitable for sampling LWE errors.   
	\end{itemize}
	Let the adversary $\mathcal{A}$ be able to distinguish LWLR${}_{\textbf{s},m,w}$ from LWE${}_{\textbf{s},m,w}$ with some non-negligible advantage, i.e., $Adv_{\mathcal{A}}(\mathfrak{G}_1, \mathfrak{G}_2) \geq \varphi(w)$ for a non-negligible function $\varphi$. Hence, it follows that $Adv_{\mathcal{A}}(\mathrm{\Psi}(0, \tilde{\sigma}^2), \mathrm{\Psi}(0, \hat{\sigma}^2)) \geq \varphi(w)$. However, we have already established in \Cref{Obs1} that $\mathfrak{M}_h$ is random to $\mathcal{A} \notin \mathcal{P}$. Furthermore, we know that $\hat{\sigma}$ can be brought arbitrarily close to $\tilde{\sigma}$ (see \Cref{note1}). Therefore, for appropriate Gaussian parameters, it holds that $Adv_{\mathcal{A}}(\mathrm{\Psi}(0, \tilde{\sigma}^2), \mathrm{\Psi}(0, \hat{\sigma}^2)) \leq \eta(w)$ for a negligible function $\eta$, which directly leads to $Adv_{\mathcal{A}}(\mathfrak{G}_1, \mathfrak{G}_2) \leq \eta(w)$. Hence, for any distribution over a secret $\textbf{s} \in \mathbb{Z}_m^w$, solving decision-LWLR${}_{\textbf{s},m,w}$ is at least as hard as solving the decision-LWE${}_{\textbf{s},m,w}$ problem for the same distribution over $\textbf{s}$. \qed	  	 
\end{proof}

\section{Star-specific Key-homomorphic PRF\lowercase{s}}\label{Sec7}
In this section, we use LWLR to construct the first star-specific key-homomorphic (SSKH) PRF family. We adapt the key-homomorphic PRF construction from~\cite{Ban[14]} by replacing the deterministic errors generated from the rounding function in LWR with the errors produced via the deterministic mapping, generated by our RGPC protocol. 

\subsection{Background}
For the sake of completeness, we begin by recalling the key-homomorphic PRF construction from~\cite{Ban[14]}. Let $T$ be a full binary tree with at least one node, i.e., every non-leaf node in $T$ has two children. Let $T.r$ and $T.l$ denote its right and left subtree, respectively, and $\lfloor \cdot \rceil_p$ denote the rounding function from LWR (see \Cref{LWR} for an introduction to LWR). 

Let $q \geq 2$, $d = \lceil \log q \rceil$, and $x[i]$ denote the $i^{th}$ bit of a bit-string $x$. Define a gadget vector as:
\[\textbf{g} = (1,2,4,\dots, 2^{d-1}) \in \mathbb{Z}^d_q.\]
Define a decomposition function $\textbf{g}^{-1}: \mathbb{Z}_q \rightarrow \{0,1\}^d$ such that $\textbf{g}^{-1}(a)$ is a ``short'' vector and $ \forall a \in \mathbb{Z}_q$, it holds that: $\langle \textbf{g}, \textbf{g}^{-1}(a) \rangle = a$, where $\langle \cdot \rangle$ denotes the inner product. Function $\textbf{g}^{-1}$ is defined as:
\begin{center}
	$\textbf{g}^{-1}(a) = (x[0], x[1], \dots, x[d-1]) \in \{0,1\}^d,$
\end{center}
where $a = \sum\limits^{d-1}_{i=0} x[i] \cdot 2^i$ is the binary representation of $a$. The gadget vector is used to define the gadget matrix $\textbf{G}$ as:
\[\textbf{G} = \textbf{I}_w \otimes \textbf{g} = \text{diag}(\textbf{g}, \dots, \textbf{g}) \in \mathbb{Z}^{w \times wd}_q,\]  
where $\textbf{I}_w$ is the $w \times w$ identity matrix and $\otimes$ denotes the Kronecker product~\cite{Kath[04]}. The binary decomposition function, $\textbf{g}^{-1}$, is applied entry-wise to vectors and matrices over $\mathbb{Z}_q$. Thus, $\textbf{g}^{-1}$ can be extended to get another deterministic decomposition function $$\textbf{G}^{-1}: \mathbb{Z}^{w \times u}_q \rightarrow \{0,1\}^{wd \times u}$$ such that $\textbf{G} \cdot \textbf{G}^{-1}(\textbf{A}) = \textbf{A}$.

Given uniformly sampled matrices, $\textbf{A}_0, \textbf{A}_1 \in \mathbb{Z}^{w \times wd}_q$, define function $\textbf{A}_T(x): \{0,1\}^{|T|} \rightarrow \mathbb{Z}^{w \times wd}_q$ as:
\begin{align*}
\textbf{A}_T(x) &= 
\begin{cases}
\textbf{A}_x \qquad \qquad \qquad \qquad \qquad \qquad \; \text{if } |T| = 1, \\
\textbf{A}_{T.l}(x_l) \cdot \textbf{G}^{-1}(\textbf{A}_{T.r}(x_r)) \qquad \; \; \text{otherwise},
\end{cases}
\numberthis \label{AlignEq}
\end{align*}
where $|T|$ denotes the total number of leaves in $T$ and $x \in \{0,1\}^{|T|}$ such that $x = x_l || x_r$ for $x_l \in \{0,1\}^{|T.l|}$ and $x_r \in \{0,1\}^{|T.r|}$. The key-homomorphic PRF family is defined as:
\[\mathcal{F}_{\textbf{A}_0, \textbf{A}_1, T, p} = 
	\left\lbrace F_\textbf{s}: \{0,1\}^{|T|} \rightarrow \mathbb{Z}^{wd}_p \right\rbrace,\]
where $p \leq q$ is the modulus. A member of the function family $\mathcal{F}$ is indexed by the seed $\textbf{s} \in \mathbb{Z}^w_q$ as: \[F_{\textbf{s}}(x) = \lfloor \textbf{s} \cdot \textbf{A}_{T}(x) \rceil_p.\]

\subsection{Our Construction}
We are now ready to present the construction for the first SSKH PRF family. 

\subsubsection{Settings.}
Let $\mathcal{P} = \{P_i\}_{i=1}^n$ be a set of $n$ honest parties, that are arranged as the vertices of an interconnection graph $G = (V, E)$, which is comprised of $S_k$ stars $\partial_1, \ldots, \partial_\rho$, i.e., each subgraph $\partial_i$ is a star with $k$ leaves. As mentioned in Section~\ref{broad}, we assume that each party in $\partial_i$ is connected to $\partial_i$'s central hub $C_i$ via two channels: one Gaussian channel with the desired parameters and another error corrected channel. Each party in $\mathcal{P}$ receives parameters $\textbf{A}_0, \textbf{A}_1$, i.e., all parties are provisioned with identical parameters. Hence, physical layer communications and measurements are the exclusive source of variety and secrecy in this protocol. Since we are dealing with vectors, the data points for linear regression analysis, i.e., the messages exchanged among the parties in the stars, are of the form $\{(\mathbf{x}_i, \mathbf{y}_i)\}_{i=1}^\ell$, where $\textbf{x}_i, \textbf{y}_i \in \mathbb{Z}^{wd}.$ Consequently, the target rounded Gaussian distribution becomes $\mathrm{\Psi}^{wd}(\textbf{0}, \hat{\sigma}^2)$. Let the parties in each star exchange messages in accordance to the RGPC protocol such that messages from different central hubs $C_i, C_j~(\forall i,j \in [\rho];\ i \neq j)$ are distinguishable to the parties belonging to multiple stars.

\subsubsection{Construction.}
Without loss of generality, consider a star $\partial_i \subseteq V(G)$. Each party in $\partial_i$ follows the RGPC protocol to generate its linear regression hypothesis $h^{(\partial_i)}$. Parties in star $\partial_i$ construct a $\partial_i$-specific key-homomorphic PRF family, whose member $F_{\textbf{s}_{i}}^{(\partial_i)}(x)$, indexed by the key/seed $\textbf{s}_{i} \in \mathbb{Z}^w_m$, is defined as:
\begin{equation}
	F_{\textbf{s}_{i}}^{(\partial_i)}(x) = \textbf{s}_{i} \cdot \textbf{A}_{T}(x) + \textbf{e}^{(\partial_i)}_{\textbf{b}} \bmod m, \label{eq1}
\end{equation}
where $\textbf{A}_T(x)$ is as defined by Equation \eqref{AlignEq}, $\textbf{b} = \textbf{s}_{i} \cdot \textbf{A}_{T}(x)$, and $\textbf{e}^{(\partial_i)}_{\textbf{b}} = \mathfrak{M}_{h^{(\partial_i)}}(\textbf{b})$ denotes a rounded Gaussian error computed by the deterministic mapping $\mathfrak{M}_{h^{(\partial_i)}}$ on input $\textbf{b}$. Recall that $\mathfrak{M}_{h^{(\partial_i)}}$ is generated by our RGPC protocol from hypothesis $h^{(\partial_i)}$. The star-specific secret $\textbf{s}_{i}$ can be generated by using a reconfigurable antenna (RA) \cite{MohAli[21],inbook} at the central hub, $C_i$, and thereafter reconfiguring it to randomize the error-corrected channel between itself and the parties in $\partial_i$. Specifically, $\textbf{s}_{i}$ can be generated via the following procedure:
\begin{enumerate}
	\item After performing the RGPC protocol, each party $P_j \in \partial_i$ sends a random $r_j \in [\ell]$ to $C_i$ via the error corrected channel. $C_i$ broadcasts $r_j$ to all parties in $\partial_i$ and randomizes all error-corrected channels by reconfiguring its RA. If two parties' $r_j$ values arrive simultaneously, then $C_i$ randomly discards one of them and notifies the corresponding party to resend another random value. This ensures that the channels are re-randomized after receiving each $r_j$ value. By the end of this cycle, each party receives $k$ random values $\{r_j\}_{j=1}^k$. Let $\wp_i$ denote the set of all $r_j$ values received by the parties in $\partial_i$.
	\item Each party in $\partial_i$ computes $\bigoplus\limits_{r_j \in \wp_i} r_j = s \bmod m$. 
	\item This procedure is repeated to extract the required number of bits to generate the vector $\textbf{s}_{i}$. 
\end{enumerate} 
Since $C_i$ randomizes all its channels by simply reconfiguring its RA, no active or passive adversary can compromise all $r_j \in \wp_i$ values \cite{Aono[05],MehWall[14],YanPan[21],MohAli[21],inbook,Alan[20],PanGer[19],MPDaly[12]}. In honest settings, secrecy of the star-specific secret $\textbf{s}_{i}$, generated by the aforementioned procedure, follows directly from the following three facts:

\begin{enumerate}[label = {(\roman*)}]
	\item All parties are honest.
	\item All data points $\{x_i\}_{i=1}^\ell$ are randomly sampled integers, i.e., $x_i \xleftarrow{\$} \mathbb{Z}$.
	\item The coefficients of $f(x)$, and hence $f(x)$ itself, are random.
\end{enumerate}
In the following section, we examine the settings with active/passive and internal/external adversaries. Note that the protocol does not require the parties to share their identities. Hence, the above protocol is trivially anonymous over anonymous channels (see \cite{GeoCla[08],MattYen[09]} for surveys on anonymous communications). Since anonymity has multiple applications in cryptographic protocols \cite{Stinson[87],Phillips[92],Blundo[96],Kishi[02],Deng[07],SehrawatVipin[21],Sehrawat[17],AmosMatt[07],Daza[07],Gehr[97],Anat[15],Mida[03],OstroKush[06],DijiHua[07]}, it is a useful feature of our construction.
 
\subsection{Maximum number of SSKH PRFs and Defenses Against Various Attacks}\label{Max}
In this section, we employ our results from \Cref{Extremal} to derive the maximum number of SSKH PRFs that can be constructed by a set of $n$ parties. Recall that we use the terms star and star graph interchangeably. We know that in order to construct a SSKH PRF family, the parties are arranged in an interconnection graph $G$ wherein the --- possibly overlapping --- subsets of $\mathcal{P}$ form different star graphs, $\partial_1, \ldots, \partial_\rho$, within $G$. We assume that for all $i \in [\rho]$, it holds that: $|\partial_i| = k$. Recall from \Cref{Extremal} that we derived various bounds on the size of the following set families $\mathcal{H}$ defined over a set of $n$ elements:
\begin{enumerate}
	\item $\mathcal{H}$ is an at most $t$-intersecting $k$-uniform family,
	\item $\mathcal{H}$ is a maximally cover-free at most $t$-intersecting $k$-uniform family. 
\end{enumerate} 
We set $n$ to be the number of vertices in $G$. Hence, $k$ represents the size of each star with $t$ being equal to (or greater than) $\max\limits_{i\neq j}(|\partial_i \cap \partial_j|)$. 

In our SSKH PRF construction, no member of a star $\partial$ has any secrets that are hidden from the other members of $\partial$. Also, irrespective of their memberships, all parties are provisioned with identical set of initial parameters. The secret keys and regression models are generated via physical layer communications and collaboration. Due to these facts, the parties in our SSKH PRF construction must be either honest or semi-honest but non-colluding. We consider these factors while computing the maximum number of SSKH PRFs that can be constructed securely against various types of adversaries. For a star $\partial$, let $\mathcal{O}_{\partial}$ denote an oracle for the SSKH PRF $F^{(\partial)}_{\textbf{s}}$, i.e., on receiving input $x$, $\mathcal{O}_{\partial}$ outputs $F^{(\partial)}_{\textbf{s}}(x)$. Given oracle access to $\mathcal{O}_{\partial}$, it must holds that for a PPT adversary $\mathcal{A}$ who is allowed $\poly(\L)$ queries to $\mathcal{O}_{\partial}$, the SSKH PRF $F^{(\partial)}_{\textbf{s}}$ remains indistinguishable from a uniformly random function $U$ --- defined over the same domain and range as $F^{(\partial)}_{\textbf{s}}$. 

Let $E_{i}$ denote the set of Gaussian and error-corrected channels that are represented by the edges in star $\partial_i$.

\subsubsection{External Adversary with Oracle Access.}
In this case, the adversary can only query the oracle for the SSKH PRF, and hence the secrecy follows directly from the hardness of LWLR. Therefore, at most $t$-intersecting $k$-uniform families are sufficient for this case, i.e., we do not need the underlying set family $\mathcal{H}$ to be maximally cover-free. Moreover, $t = k-1$ suffices for this case because maximum overlap between different stars can be tolerated. Hence, it follows from \Cref{asymptotic_bound} (in \Cref{Extremal}) that the maximum number of SSKH PRFs that can be constructed is:
$$\zeta\sim\frac{n^{k}}{k!}.$$

\subsubsection{Eavesdropping Adversary with Oracle Access.}
Wyner’s wiretap model \cite{WireWy[75]} models an eavesdropper observing a degraded version of the information exchanged on the main channel through a wiretap channel. Let $\mathcal{A}$ be an eavesdropping adversary with wiretap channels, observing a subset $E'$ of Gaussian and/or error-corrected channels between parties and central hubs. Furthermore, assume that $\mathcal{A}$ has oracle access to the traget star's SSKH PRF. Without loss of generality, let $\partial_i$ be the target star. Also, note that there can be more than one target stars. Let us analyze the security with respect to this adversary.
\begin{enumerate}
	\item Secrecy of $\textbf{s}_i$: After each party $P_z \in \partial_i$ contributes to the generation of $\textbf{s}_i$ by sending a random value $r_z$ to $C_i$, which then broadcasts $r_z$ to all parties in $\partial_i$, $C_i$ randomizes all error-corrected channels by reconfiguring its RA. This means that $\mathcal{A}$ cannot compromise all $r_z$ values. Hence, it follows that no information about $\textbf{s}_i$ is leaked to $\mathcal{A}$. Furthermore, it follows from the wiretap channel model that $\mathcal{A}$ cannot get precise value of any $r_z$.
	\item Messages exchanged via the channels in $E'$: leakage of enough messages exchanged within star $\partial_i$ would allow $\mathcal{A}$ to closely approximate the deterministic mapping $\mathfrak{M}_{h^{(\partial_i)}}$. Note that for this attack to work, $\mathcal{A}$ must successfully eavesdrop on enough channels in $\partial_i$. However, since $\mathcal{A}$ is an outsider with only access to wiretap channels, it follows from \Cref{mainProp}, and known information-theoretic results about physical layer communications \cite{YonWu[18],ShiJia[21]} and (Gaussian) wiretap eavesdropping \cite{WireWy[75],YanHell[78],MegJaya[22],Wire2[84],NafYen[18],NafYen[19],GoldCuff[16],KongKadd[21],JaoRod[06]} that $\mathcal{A}$ cannot gain any non-negligible information on $\mathfrak{M}_{h^{(\partial_i)}}$.
\end{enumerate}

Hence, an eavesdropping adversary with wiretap channels and oracle access has no non-negligible advantage over an external adversary with oracle access. Therefore, it follows that the maximum number of SSKH PRFs that can be constructed in this case is:
$$\zeta\sim\frac{n^{k}}{k!}.$$

\subsubsection{Non-colluding Semi-honest Parties.}
Let $\mathcal{P}_{\partial_i} \subseteq \mathcal{P}$ denote the set of parties that form the star $\partial_i$ --- in addition to the central hub $C_i$. Suppose that some or all parties in $\mathcal{P}$ are semi-honest, i.e., they follow the protocol correctly but try to gain/infer more information than what is allowed by the protocol. Further suppose that the parties do not collude with each other. However, this assumption is redundant if all malicious parties belong to same set of stars because broadcast in our protocol provides malicious parties with all the information exchanged in the stars they belong to. The restriction of non-collusion is required when at least one pair of malicious parties are members of different sets of stars. In such settings, the only way any party $P_j \notin \partial_i$ can gain additional information about the SSKH PRF $F^{(\partial_i)}_{\textbf{s}_{i}}$ is to (mis)use its membership of other stars. For instance, if $P_j \in \mathcal{P}_{\partial_d}, \mathcal{P}_{\partial_j}, \mathcal{P}_{\partial_o}$ and $\mathcal{P}_{\partial_i} \subset \mathcal{P}_{\partial_o} \cup \mathcal{P}_{\partial_j} \cup \mathcal{P}_{\partial_d}$, then because the parties send identical messages to all central hubs they are connected to, it follows that $H(F^{(\partial_i)}_{\textbf{s}_{i}} | P_j) = 0$. This follows trivially because $P_j$ can compute $\textbf{s}_i$. Having maximally cover-free families eliminates this vulnerability against non-colluding semi-honest parties. This holds because with members of a maximally cover-free family denoting unique stars in $G$, the following can never hold true for any $\mathcal{P}_{\partial_i}$:
\[\mathcal{P}_{\partial_i} \subseteq \bigcup_{j \in \varrho} \mathcal{P}_{\partial_j},\]
where $\varrho \subseteq [\rho] - \{i\}$. Hence, it follows from \Cref{asymptotic_bound_for_maximally_cover_free} that the maximum number of SSKH PRFs that can be constructed with non-colluding semi-honest parties is at least $Cn$ for some positive real number $C < 1$.

Thus, in order to construct SSKH PRFs that are secure against all types of adversaries and threat models discussed in this section, the underlying family of sets must be maximally cover-free, at most $(k-1)$-intersecting and $k$-uniform.

\subsubsection{Man-in-the-Middle.}

Physical-layer-based key generation schemes exploit the channel reciprocity for secret key extraction, which can achieve information-theoretic secrecy against eavesdroppers. However, these schemes have been shown to be vulnerable against man-in-the-middle (MITM) attacks. During a typical MITM attack, the adversary creates separate connection(s) with the communicating node(s) and relays altered transmission packets to them. Eberz et al. \cite{EbeMatt[12]} demonstrated a practical MITM attack against RSS-based key generation protocols \cite{JanaSri[09],SuWade[08]}, wherein the MITM adversary $\mathcal{A}$ exploits the same channel characteristics as the target/communicating parties $P_1, P_2$. To summarize, in the attack from \cite{EbeMatt[12]}, $\mathcal{A}$ injects packets that cause a similar channel measurement at both $P_1$ and $P_2$. This attack enables $\mathcal{A}$ to recover up to 47\% of the secret bits generated by $P_1$ and $P_2$. 

To defend against such attacks, we can apply techniques that allow us to detect an MITM attack over physical layer \cite{LeAle[16]}, and if one is detected, the antenna of $\partial_i$'s central hub, $C_i$, can be reconfigured to randomize all channels in $\partial_i$ \cite{YanPan[21]}. This only requires a reconfigurable antenna (RA) at each central hub. An RA can swiftly reconfigure its radiation pattern, polarization, and frequency by rearranging its antenna currents \cite{MohAli[21],inbook}. It has been shown that due to multipath resulting from having an RA, even small variation by the RA can create large variations in the channel, effectively creating fast varying channels with a random distribution \cite{JunPhD[21]}. One way an RA may randomize the channels is by randomly selecting antenna configurations in the transmitting array at the symbol rate, leading to a random phase and amplitude multiplied to the transmitted symbol. The resulting randomness is compensated by appropriate element weights so that the intended receiver does not experience any random variations. In this manner, an RA can be used to re-randomize the channel and hence break the temporal correlation of the channels between $\mathcal{A}$ and the attacked parties, while preserving the reciprocity of the other channels. 

Therefore, even if an adversary $\mathcal{A}$ is able to perform successful injection in communication round {\ss}, its channels with the attacked parties would be randomly modified (by the RA) when it attempts injections in round {\ss}+1. On the other hand, the channels between the parties in star $\partial_i$ and the central hub $C_i$ remain reciprocal, i.e., they can still make correct/identical measurements. Hence, by reconfiguring $C_i$'s RA, we can prevent further injections from $\mathcal{A}$ without affecting the legitimate parties' ability to make correct channel measurements. Further details on this defense technique are beyond the scope of this paper. For detailed introduction to the topic and its applications in different settings, we refer the interested reader to \cite{Aono[05],MehWall[14],YanPan[21],MohAli[21],inbook,Alan[20],PanGer[19],MPDaly[12]}. In this manner, channel state randomization can be used to effectively reduce an MITM attack to the less harmful jamming attack \cite{MitChor[21]}. See \cite{HossHua[21]} for a thorough introduction to jamming and anti-jamming techniques.   

\subsection{Runtime and Key Size}
We know that the complexity of a single evaluation of the key-homomorphic PRF from~\cite{Ban[14]} is $\mathrm{\Theta}(|T| w^\omega \log^2 m)$ ring operations in $\mathbb{Z}_m$, where $\omega \in [2, 2.37286]$ is the exponent of matrix multiplication~\cite{Josh[21],CohBla[23]}. Using the fast integer multiplication algorithm from \cite{HarveyHoeven[21]}, this gives a time complexity of $\mathrm{\Theta}(|T| w^\omega m\log^3 m)$. The time taken by the setup of our SSKH PRF construction is equal to the time required by our RGPC algorithm to find the optimal hypothesis, which we know from Section~\ref{Time} to be $\mathrm{\Theta}(\ell)$ additions and multiplications. If $B$ is an upper bound on $x_i$ and $y_i$, then the time complexity is $O\left(\ell B\log B\right)$. Once the optimal hypothesis is known, it takes $\mathrm{\Theta}(wm\log^2 m)$ time to generate a deterministic LWLR error for a single input. Hence, after the initial setup, the time complexity of a single function evaluation of our SSKH PRF remains $\mathrm{\Theta}(|T| w^\omega m\log^3 m)$. 

Similarly, the key size for our SSKH PRF family is the same as that of the key-homomorphic PRF family from~\cite{Ban[14]}. Specifically, for security parameter $\L$ and $2^{\L}$ security against the well known lattice reduction algorithms \cite{Ajtai[01],Fincke[85],Gama[06],Gama[13],Gama[10],LLL[82],Ngu[10],Vid[08],DanPan[10],DaniPan[10],Poha[81],Schnorr[87],Schnorr[94],Schnorr[95],Nguyen[09],EamFer[21],RicPei[11],MarCar[15],AvrAda[03],GuaJoh[21],SatoMasa[21],TamaStep[20],AleQi[21],EriWag[22]}, the key size for our SSKH PRF family is $\L$.

\subsection{Correctness and Security}
Recall that LWR employs rounding to hide all but some of the most significant bits of $\lfloor \textbf{s} \cdot \textbf{A} \rceil_p$; therefore, the rounded-off bits become the deterministic error. On the other hand, our solution, i.e., LWLR, uses linear regression hypothesis to generate the desired rounded Gaussian errors, which are derived from the (independent) errors occurring in the physical layer communications over Gaussian channel(s). For the sake of simplicity, the proofs assume honest parties in the absence of any adversary. For other supported cases, it is easy to adapt the statements of the results according to the bounds/conditions established in \Cref{Max}. 

Recall that the RGPC protocol ensures that all parties in a star $\partial$ receive an identical dataset $\mathcal{D}$, and therefore arrive at the same linear regression hypothesis $h^{(\partial)}$ and errors $\textbf{e}^{(\partial)}_{\textbf{b}}$.

\begin{theorem}\label{thm1}
	The function family defined by \Cref{eq1} is a star-specific key-homomorphic PRF under the decision-LWE assumption.
\end{theorem}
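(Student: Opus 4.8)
The plan is to verify the three defining properties of a star-specific $(\delta,\gamma,p)$-almost key-homomorphic PRF family from \Cref{MainDef}: (a) pseudorandomness of each $F^{(\partial_i)}_{\textbf{s}}$, (b) the almost key-homomorphism of condition (ii), and (c) the star-specificity of condition (i). The central observation driving the whole argument is that \Cref{eq1} is exactly the Banerjee--Peikert construction recalled in the Background subsection, with the deterministic LWR rounding error replaced by the deterministic rounded-Gaussian error $\textbf{e}^{(\partial_i)}_{\textbf{b}}=\mathfrak{M}_{h^{(\partial_i)}}(\textbf{b})$ produced by RGPC. Consequently, wherever the original security proof invokes the hardness of LWR to swap a computed sample for a uniform one, I would instead invoke the hardness of LWLR, which by \Cref{LWLRthm} is at least as hard as decision-LWE.

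For pseudorandomness, I would follow the tree-based hybrid argument of \cite{Ban[14]}. First I would recall that $\textbf{A}_T(x)$ is built from the recursion in \eqref{AlignEq} using the low-norm decomposition $\textbf{G}^{-1}$, which keeps every intermediate product small-normed. The hybrid sequence is indexed by the internal nodes of $T$: at each step one replaces the subproduct associated with a node by a freshly uniform matrix, and indistinguishability of consecutive hybrids is reduced to distinguishing LWLR samples $(\textbf{a},\langle\textbf{a},\textbf{s}\rangle+e_x)$ from uniform pairs. Since \Cref{Lemma1} guarantees that the $e_x$ belong to the target rounded Gaussian and \Cref{Obs1} guarantees that $\mathfrak{M}_h$ is pseudorandom to any external PPT adversary, \Cref{LWLRthm} supplies exactly the indistinguishability needed to pass between hybrids. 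Chaining the $O(|T|)$ hybrids and bounding the total advantage by a negligible function then yields that $F^{(\partial_i)}_{\textbf{s}}$ is indistinguishable from a uniformly random function.

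Condition (ii), almost key-homomorphism, I expect to be the easiest part. Fixing a star $\partial$ and keys $\textbf{s}_1,\textbf{s}_2\in\mathbb{Z}^w_m$, linearity of $\textbf{s}\mapsto\textbf{s}\cdot\textbf{A}_T(x)$ gives
$$F^{(\partial)}_{\textbf{s}_1}(x)+F^{(\partial)}_{\textbf{s}_2}(x)-F^{(\partial)}_{\textbf{s}_1+\textbf{s}_2}(x)=\textbf{e}^{(\partial)}_{\textbf{b}_1}+\textbf{e}^{(\partial)}_{\textbf{b}_2}-\textbf{e}^{(\partial)}_{\textbf{b}_{12}}\bmod m=:\textbf{e},$$
where $\textbf{b}_1,\textbf{b}_2,\textbf{b}_{12}$ are the respective arguments of $\mathfrak{M}_{h^{(\partial)}}$. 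Each of the three terms is a rounded-Gaussian error, so the error-analysis bound $|\bar{e}_i|\leq 2.807034(1+\sqrt{(1+b)/\ell})\sigma$ (holding with probability at least $0.99$) together with the union bound yields a coordinatewise bound $\gamma$ on $\textbf{e}$ that holds with a probability $p$ I would state explicitly, establishing condition (ii).

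For condition (i), star-specificity, note that for a common key $\textbf{s}$ and input $x$ the values $F^{(\partial_i)}_{\textbf{s}}(x)$ and $F^{(\partial_j)}_{\textbf{s}}(x)$ differ exactly when the errors $\textbf{e}^{(\partial_i)}_{\textbf{b}}$ and $\textbf{e}^{(\partial_j)}_{\textbf{b}}$ differ. Because $\partial_i\neq\partial_j$ forces the two stars to have distinct members, and hence distinct physical-layer datasets, the hypotheses $h^{(\partial_i)},h^{(\partial_j)}$ arise from essentially independent channel measurements; by the P2SI property of \Cref{P2SI} and the mutual-information computation of \Cref{mainProp}, the corresponding error vectors are computationally indistinguishable from independent rounded Gaussians. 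Since the coordinates are independent and each coincides with probability bounded by a constant below $1$, the joint collision probability decays exponentially in the output dimension, giving a bound of the form $\delta^w$ for some $\delta<1$; accounting for the computational (rather than perfect) independence adds a negligible term $\eta(\L)$, so $\Pr[F^{(\partial_i)}_{\textbf{s}}(x)=F^{(\partial_j)}_{\textbf{s}}(x)]\leq\delta^w+\eta(\L)$ as required. The main obstacle throughout is the pseudorandomness reduction: faithfully adapting the Banerjee--Peikert hybrid argument while controlling the accumulation of the explicit additive rounded-Gaussian noise through the $\textbf{G}^{-1}$ decompositions, so that each hybrid step genuinely reduces to an LWLR (hence LWE) instance.
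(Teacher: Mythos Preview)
Your proposal is correct and follows the same three-part structure as the paper, but the paper's execution differs in a few concrete ways worth noting.

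For star-specificity, the paper does not route through the P2SI property or the mutual-information computation. Instead it introduces the auxiliary function $G_{\mathbf{s}}^{(\partial_i)}(x)=\mathbf{s}\cdot\mathbf{A}_T(x)+\lfloor\boldsymbol{\varepsilon}^{(\partial_i)}_{\mathbf{b}}\rceil$ built from the \emph{raw} Gaussian channel errors, which are genuinely independent across stars; it computes $\Pr[G_{\mathbf{s}}^{(\partial_i)}(x)=G_{\mathbf{s}}^{(\partial_j)}(x)]\leq\Pr[|Z|<(\sqrt{2}\sigma)^{-1}]^w$ directly, and then invokes drowning/smudging (superpolynomial $\ell$) to absorb the passage from $G$ to $F$ into a negligible additive term. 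This gives an explicit value $\delta=\Pr[|Z|<(\sqrt{2}\sigma)^{-1}]$ rather than an abstract $\delta<1$.

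For almost key-homomorphism, the paper uses Chebyshev's inequality on the rounded Gaussian together with the union bound over the three error terms, obtaining entries of $\mathbf{e}'$ in $[-3\tau\hat{\sigma},3\tau\hat{\sigma}]$ with probability at least $1-3/\tau^2$, rather than your specific bound from the error-analysis subsection; both routes work.

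For pseudorandomness, the paper is considerably briefer than your plan: it simply invokes \Cref{LWLRthm} and the randomness of $\mathbf{s}_i$ without spelling out the tree-based hybrid sequence from \cite{Ban[14]}. Your more detailed reduction is the right instinct and is what one would actually have to do to make the argument airtight.
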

\begin{proof}
	We know from \Cref{LWLRthm} that for $\textbf{s}_i \xleftarrow{\$} \mathbb{Z}_m^w$ and a superpolynomial number of samples $\ell$, the LWLR instances generated in \Cref{eq1} are as hard as LWE --- to solve for $\textbf{s}_i$ (and $\textbf{e}_\textbf{b}^{(\partial_i)})$. The randomness of the function family follows directly from the randomness of $\textbf{s}_i$. The deterministic behavior follows from the above observation and the fact that $\textbf{A}_T(x)$ is a deterministic function. Hence, the family of functions defined in \Cref{eq1} is a PRF family. 

	Define
		$$G_{\mathbf{s}}^{(\partial_i)}(x) = \textbf{s} \cdot \mathbf{A}_{T}(x) + \lfloor\boldsymbol{\varepsilon}^{(\partial_i)}_{\textbf{b}}\rceil \bmod m,$$
		where $\boldsymbol{\varepsilon}^{(\partial_i)}_{\textbf{b}}$ is the (raw) Gaussian error corresponding to $\mathbf{b}$ for star $\partial_i$; define $G_{\mathbf{s}}^{(\partial_j)}$ similarly. Since the errors $\boldsymbol{\varepsilon}^{(\partial_i)}_{\textbf{b}}$ and $\boldsymbol{\varepsilon}^{(\partial_j)}_{\textbf{b}}$ are independent Gaussian random variables, each with variance $\sigma^2$, it holds that:
	\begin{align*}
		\Pr[G_{\mathbf{s}}^{(\partial_i)}(x)=G_{\mathbf{s}}^{(\partial_j)}(x)]&=\Pr[\lfloor\boldsymbol{\varepsilon}^{(\partial_i)}_{\textbf{b}}\rceil=\lfloor\boldsymbol{\varepsilon}^{(\partial_j)}_{\textbf{b}}\rceil] \\
		&\leq \Pr[||\boldsymbol{\varepsilon}^{(\partial_i)}_{\textbf{b}}-\boldsymbol{\varepsilon}^{(\partial_j)}_{\textbf{b}}||_\infty<1] \\
		&= \Pr[|Z|<(\sqrt{2}\sigma)^{-1}]^w
	\end{align*}
	where $Z$ is a standard Gaussian random variable.
	
	Furthermore, since the number of samples is superpolynomial in the security parameter $\L$, by drowning/smudging, the statistical distance between $\boldsymbol{e}^{(\partial_i)}_{\textbf{b}}$ and $\boldsymbol{\varepsilon}^{(\partial_i)}_{\textbf{b}}$ is negligible (similarly for $\boldsymbol{\varepsilon}^{(\partial_j)}_{\textbf{b}}$ and $\boldsymbol{e}^{(\partial_j)}_{\textbf{b}}$). Hence,
	\begin{align*}
		\Pr[F_{\mathbf{s}}^{(\partial_i)}(x)=F_{\mathbf{s}}^{(\partial_j)}(x)]&=\Pr[G_{\mathbf{s}}^{(\partial_i)}(x)=G_{\mathbf{s}}^{(\partial_j)}(x)]+\eta(\L) \\
&\leq\Pr[|Z|<(\sqrt{2}\sigma)^{-1}]^w+\eta(\L),
	\end{align*}
	where $\eta(\L)$ is a negligible function in $\L$. By choosing $\delta=\Pr[|Z|<(\sqrt{2}\sigma)^{-1}]$, this function family satisfies \Cref{MainDef}(i).
	
	Finally, by Chebyshev's inequality and the union bound, for any $\tau>0$,
	\[F_{\textbf{s}_1}^{(\partial)}(x) + F_{\textbf{s}_2}^{(\partial)}(x)  = F_{\textbf{s}_1 + \textbf{s}_2}^{(\partial)}(x) + \textbf{e}' \bmod m,\] 
	where each entry of $\textbf{e}'$ lies in $[-3\tau\hat{\sigma}, 3\tau\hat{\sigma}]$ with probability at least $1-3/\tau^2$. For example, choosing $\tau=\sqrt{300}$ gives us the bound that the absolute value of each entry is bounded by $\sqrt{2700}\hat{\sigma}$ with probability at least $0.99$.

	Therefore, the function family defined by \Cref{eq1} is a SSKH PRF family --- as defined by \Cref{MainDef} --- under the decision-LWE assumption. \qed
\end{proof}

\section{Conclusion and Future Work}\label{Sec8}
In this paper, we introduced a novel derandomized variant of the celebrated learning with errors (LWE) problem, called learning with linear regression (LWLR), which derandomizes LWE via deterministic, yet sufficiently independent, errors that are generated by using special linear regression models whose training data consists of physical layer communications over Gaussian channels. Prior to our work, learning with rounding and its variant nearby learning with lattice rounding were the only known derandomized variant of the LWE problem; both of which relied on rounding. LWLR relies on the naturally occurring errors in physical layer communications to derandomize LWE while maintaining its hardness --- for specific parameters.  

We also introduced star-specific key-homomorphic (SSKH) pseudorandom functions (PRFs), which are directly defined by the physical layer communications among the respective sets of parties that construct them. We used LWLR to construct the first SSKH PRF family. In order to quantify the maximum number of SSKH PRFs that can be constructed by sets of overlapping parties, we derived: 
\begin{itemize}
	\item a formula to compute the mutual information between linear regression models that are generated from overlapping training datasets,
	\item bounds on the size of at most $t$-intersecting $k$-uniform families of sets. We also gave an explicit construction to build such set systems,
	\item bounds on the size of maximally cover-free at most $t$-intersecting $k$-uniform families of sets.  
\end{itemize}
Using these results, we established the maximum number of SSKH PRFs that can be constructed by a given set of parties in the presence of active/passive and internal/external adversaries. 

It would be interesting to investigate whether our RGPC protocol can be adapted to sample deterministic yet sufficiently independent Bernoulli noise. Realizing this could allow secure derandomization of learning parity with noise (LPN), which can be viewed as a restricted case of LWE, that asks to recover a secret vector given a system of noisy linear equations over $\mathbb{F}_2$, where the noise follows a Bernoulli distribution \cite{Kry[12]}. LPN-based cryptosystems tend to be simple and efficient in terms of code-size, and space and computational overheads. This makes them a prime candidate for realizing post-quantum security in resource constrained environments, such as RFID tags and IoT devices. A derandomized variant of LPN could make it more versatile and expand its application space.

\bibliographystyle{plainurl}
\bibliography{ref}
\addcontentsline{toc}{section}{Bibliography}

\end{document}